\def\ind{\perp\!\!\!\perp}
\DeclareMathOperator*{\minimize}{minimize}
\DeclareMathOperator*{\argmin}{arg\,min}
\DeclareMathOperator*{\argmax}{arg\,max}
\DeclareSymbolFont{bbold}{U}{bbold}{m}{n}
\DeclareSymbolFontAlphabet{\mathbbold}{bbold}
\newtheorem{theorem}{Theorem}
\newtheorem{lemma}{Lemma}
\newtheorem{corollary}{Corollary}
\newtheorem{proposition}{Proposition}
\newtheorem{assumption}{Assumption}
\theoremstyle{definition}
\newtheorem{model}{Model}
\newtheorem{definition}{Definition}
\theoremstyle{remark}
\declaretheorem[style=definition]{example}
\begin{document}

\def\spacingset#1{\renewcommand{\baselinestretch}%
{#1}\small\normalsize} \spacingset{1}

\raggedbottom
\allowdisplaybreaks[1]

\renewcommand\thmcontinues[1]{Continued}


\title{\vspace*{-.4in} {Stochastic interventions, sensitivity analysis, \\ and optimal transport}} \author{Alexander W. Levis$^1$, Edward
  H. Kennedy$^1$, Alec McClean$^2$, \\
  Sivaraman Balakrishnan$^1$, Larry Wasserman$^1$ \\  \\
  $^1$Department of Statistics \& Data Science, \\
  Carnegie Mellon University \\
  \\
  $^2$Division of Biostatistics, \\
  NYU Grossman School of Medicine \\ \\
  \texttt{alevis@cmu.edu};
  \texttt{edward@stat.cmu.edu};
  \texttt{hadera01@nyu.edu} \\ \texttt{siva@stat.cmu.edu}; \texttt{larry@stat.cmu.edu}
  \date{} }
    
  \maketitle
  \thispagestyle{empty}

\begin{abstract}
Recent methodological research in causal inference has focused on effects of \textit{stochastic interventions}, which assign treatment randomly, often according to subject-specific covariates. 
In this work, we demonstrate that the usual notion of stochastic interventions have a surprising property: when there is unmeasured confounding, bounds on their effects do not collapse when the policy approaches the observational regime. As an alternative, we propose to study \textit{generalized policies}, treatment rules that can depend on covariates, the natural value of treatment, and auxiliary randomness. We show that certain generalized policy formulations can resolve the ``non-collapsing'' bound issue: bounds narrow to a point when the target treatment distribution approaches that in the observed data. Moreover, drawing connections to the theory of optimal transport, we characterize generalized policies that minimize worst-case bound width in various sensitivity analysis models, as well as corresponding sharp bounds on their causal effects. These optimal policies are new, and can have a more parsimonious interpretation compared to their usual stochastic policy analogues. Finally, we develop flexible, efficient, and robust estimators for the sharp nonparametric bounds that emerge from the framework.
\end{abstract}

\bigskip

\noindent {\it Keywords: causal inference, stochastic treatment rules,
  partial identification, nonparametric bounds, optimal transport, nonparametric
  efficiency}

\pagebreak

\section{Introduction}
In causal inference, the bulk of research has focused on the effects of deterministic, static treatment regimes, i.e., interventions that set treatment to fixed values for all individuals in a population. For instance, the canonical average treatment effect is a contrast of two interventions that each set treatment to one specific value in a population. Dynamic (or non-static) treatment regimes represent interventions that can depend on a subject's individual-level characteristics, such as their baseline covariates. A classic example arises when studying optimal treatment regimes \citep{murphy2003,robins2004}, where the focus is on identifying covariate-dependent rules that maximize outcomes. Stochastic regimes---which may be static or dynamic---represent interventions that assign treatment randomly according to some probability distribution.

Stochastic, dynamic treatment rules are appealing in many settings for a variety of reasons. For instance, they may be more appropriate when a static, deterministic intervention is infeasible, or the exposure of interest is not directly manipulable~\citep{diaz2013}, or otherwise they may simply be preferred because they represent a smaller more realistic shift from the observational treatment process~\citep{kennedy2019}. In view of the latter, \textit{incremental propensity score interventions} were proposed by~\citet{kennedy2019} for binary treatments. These interventions transform the observational treatment process such that the odds of treatment are multiplied by some specified value. As such, individuals who would deterministically receive treatment or control are not affected by this intervention, and as a result, identification and estimation of their effects do not rely on the usual positivity assumption that all subjects have a positive probability of receiving each exposure level. More generally, interventions that transform the observational treatment distribution via exponential tilts do not require a positivity assumption \citep{diaz2020, schindl2024, rakshit2024}. These and other varieties of stochastic, dynamic treatment rules have been used for point treatments~\citep{stock1989, tian2008, dudik2014, mcclean2024b}, in longitudinal settings~\citep{murphy2001, taubman2009}, and to define new types of effects for mediation analysis~\citep{vanderweele2014, diaz2020, diaz2024}.

Dynamic treatment rules that can depend on the \textit{natural value} of treatment, that which would occur in the absence of any intervention, were initially considered in~\citet{robins2004b}. These have also been referred to as a modified treatment policies (MTPs) \citep{haneuse2013}. Such treatment rules have been increasingly studied in recent years~\citep{diaz2012, moore2012, richardson2013, young2014, diaz2023}, for example encoding a shift of the natural treatment value up or down by some deterministic amount, or up to some threshold if not already exceeding it. Under the assumption that all confounders are measured, \citet{young2014} and \citet{diaz2023} associate MTPs with more traditional stochastic interventions that only depend on covariates, by considering independent random draws from the induced conditional distribution of the treatment. As we will show, the MTP and its associated stochastic policy analogue may have substantially different effects when there is unmeasured confounding. Moreover, \citet{diaz2020} note that whether or not the converse construction is possible (associating a traditional stochastic intervention with one or more MTPs) is still an open problem, which we directly address in this paper.

In this work, we introduce a class of stochastic MTPs---which we call generalized policies---associated with an arbitrary target treatment distribution, making use of the \textit{coupling method} from probability theory~\citep{strassen1965, lindvall2002}. In doing so, we show that the treatment distribution arising from a stochastic policy depending only on covariates can be achieved by any treatment rule in a large (typically infinite) class of generalized policies. This construction resolves and clarifies the connection between stochastic interventions and MTPs that has thus far not been made clear in the literature.

When there is no unmeasured confounding, the effects of generalized policies with the same treatment distribution (given covariates) are identical, and one can opt for the policy with their preferred interpretation. When there is unmeasured confounding, however, the degree to which the effect of each intervention is partially identified varies. Drawing connections with the theory of optimal transport, we consider minimizing worst-case bound length for causal parameters under unmeasured confounding. The resulting optimal generalized policies are new and of interest in their own right, often with a favorable interpretation compared to traditional stochastic policies that draw treatment independently given covariates. For instance, when the target distribution converges to the observational treatment distribution, bounds for optimal policies collapse to a point; crucially, this does not typically occur for traditional stochastic interventions. Under various sensitivity analysis frameworks, we develop valid and sharp bounds for mean potential outcomes under these interventions. Finally, we develop flexible, nonparametric efficient estimators of the bound functionals that emerge from our analysis. 

\subsection{Related work on sensitivity analysis}
A principal contribution of our work is the development of new sensitivity analyses for generalized policies, with the goal of relaxing the unconfoundedness assumption that is routinely made in observational studies. Our work builds upon the extensive literature on sensitivity analysis for causal inference, which we briefly review here---for an excellent and more exhaustive review of this topic, see Section 2 of~\citet{nabi2024}. One family of nonparametric approaches aims to construct bounds on causal effects under the weakest possible assumptions, e.g., only assuming that outcomes are bounded. This includes seminal work by~\citet{robins1989} and~\citet{manski1990}. The resulting bounds are usually quite wide given that the assumptions are so weak, though they also prove narrower bounds under stronger assumptions including monotonicity. A second class of nonparametric sensitivity analyses imposes stronger structural assumptions. This includes approaches that limit the gap between the treatment distribution given all observed and unobserved confounders, and that given only observed confounders~\citep{rosenbaum_2002, tan2006, dorn2023}, approaches that similarly limit the gap (on the difference or ratio scale) of observed and counterfactual outcome distributions~\citep{diaz2013b, luedtke2015, bonvini2022b}, and a contamination approach which limits the proportion of the population for which unmeasured confounders are at play~\citep{bonvini2022}. A third family of analyses are semiparametric approaches in which a (unidentified) parametric model captures the unmeasured confounding mechanism in some way, such that point identification is obtained at each fixed value of the unknown sensitivity parameters~\citep{robins2000, franks2020, nabi2024}. Finally, we note that much of the work on causal sensitivity analysis has focused on mean counterfactuals for deterministic, static treatment rules, or a contrasts thereof (e.g., the average treatment effect). Further, existing methods typically are restricted to binary treatments; notable exceptions are the work of~\citet{bonvini2022b}, where sensitivity analyses are proposed for marginal structural models which can incorporate arbitrary discrete or continuous treatments, and \citet{jesson2022,baitairian2024}, where bounds are developed for continuous treatments in a specific sensitivity model. To the best of our knowledge, our work represents the first in-depth exploration of nonparametric bounds and sensitivity analyses for non-deterministic policies, and among the first to consider arbitrary treatments.

\subsection{Structure of the paper}
The remainder of this article is organized as follows. In Section~\ref{sec:generalized}, we lay out notation, give preliminary definitions and assumptions, and provide an illustrative example to motivate the ensuing proposals. In Section~\ref{sec:methods}, we introduce couplings and $Q$-policies: generalized policies that yield a particular target treatment distribution, $Q$. We then consider various sensitivity models, characterize optimal $Q$-policies that minimize worst-case bound width for counterfactual means, and provide specialized results for binary treatments. In Section~\ref{sec:estimation}, focusing on exponentially tilted target distributions, we develop flexible, robust, nonparametric efficient estimators of the bound quantities resulting from the framework. Finally, in Section~\ref{sec:discussion}, we discuss and provide concluding remarks.

\section{Generalized Policies and Motivation} \label{sec:generalized}
Consider the point treatment setting with observed data comprising $n$ independent and identically distributed copies of $O = (X, A, Y) \sim P$, where $X \in \mathcal{X} \subseteq \mathbb{R}^d$ is a set of baseline covariates, $A \in \mathcal{A} \subseteq \mathbb{R}$ is a treatment variable, and $Y \in \mathcal{Y} \subseteq \mathbb{R}$ is an outcome of interest. We will consider treatment rules that can depend on covariates, the natural value of treatment, and auxiliary randomness; formally, we let $d: \mathcal{X} \times \mathcal{A} \times \mathcal{V} \to \mathcal{A}$ be any measurable function, whereby for some auxiliary noise $V \in \mathcal{V}$ independent of observables and unobservables, the rule assigns treatment via $d(X, A, V)$. As these encompass traditional stochastic policies and MTPs, we refer to these as \textit{generalized policies}---let $\mathcal{D}$ be the set of all generalized policies. In a slight abuse of notation, in cases where $d$ depends only on $X$ or $(X, A)$ or $(X, V)$, we omit the other variable(s). It is also worth mentioning that for our theoretical development, it is sufficient to take $\mathcal{V} = [0,1]$ and $V \sim \mathrm{Unif}(0,1)$. However, it is often convenient to describe interventions using more sophisticated choices for $V$, and hence leave the notation general.

Denote by $Y(d)$ the potential outcome under the rule $d \in \mathcal{D}$, and write $Y(a)$ for any $a \in \mathcal{A}$ when $d(X, A, V) \equiv a$. We will make the following two assumptions throughout.
\begin{assumption}[Consistency]\label{ass:consistency}
    $Y(d_1) = Y(d_2)$ whenever $d_1(X, A, V) = d_2(X, A, V)$, and $Y(A) = Y$.
\end{assumption}
\begin{assumption}[Auxiliary randomness]\label{ass:aux}
    $V \ind (X, A, \{Y(a)\}_{a \in \mathcal{A}})$.
\end{assumption}

\noindent Assumption~\ref{ass:consistency} requires that there is only one version of treatment and that there is no interference, and is used to link observed data to potential outcomes. Assumption~\ref{ass:aux} is required for our later results on sensitivity analysis, though is easily enforced in applications through random number generation. It can be relaxed without loss of generality to $V \ind (A, \{Y(a)\}_{a \in \mathcal{A}}) \mid X$, but we use the stronger form for simplicity.

Let $\Pi(a \mid X) = P[A \leq a \mid X]$, for $a \in \mathcal{\mathbb{R}}$, denote the conditional cumulative distribution function (CDF) of $A$ given $X$. We will consider interventions under which the observational treatment process $\Pi(\, \cdot \mid X)$ is replaced by some alternative distribution $Q(\, \cdot\mid X)$. To avoid notational burden, we will often simply write $\Pi$ and $Q$, hiding the dependence on $X$. We start by formally defining ``traditional'' stochastic interventions in terms of generalized policies.

\begin{example}[Pure stochastic policy]
    For any distribution $Q$, the \textit{pure stochastic policy} under $Q$, denoted as $d_Q$, assigns treatment according to an independent random draw from $Q$. Concretely, $d_Q(X, V) \coloneqq Q^{-1}(V \mid X)$, where $V \sim \mathrm{Unif}(0,1)$.
\end{example}

In the above example, $Q^{-1}$ denotes the usual (conditional) quantile function associated with $Q$. Although we refer to $d_Q$ as a ``pure stochastic'' policy, it is often ``dynamic'' in that it can depend on $X$. We will often specialize to the exponential tilted target distribution $Q_{\delta} \ll \Pi$ (where $\ll$ denotes absolute continuity between measures), for $\delta \in \mathbb{R}$, defined via
\begin{equation}\label{eq:exp-tilt}
d Q_{\delta}(a \mid X) = \frac{e^{\delta a} \, d\Pi(a \mid X)}{\mathbb{E}_P(e^{\delta A} \mid X)}, \text{ for } a \in \mathcal{A},
\end{equation}
assuming $\mathbb{E}_P(e^{\delta A} \mid X) = \int e^{\delta a} \, d\Pi(a \mid X) < \infty$, i.e., the (conditional) moment generating function of $A$ is finite at $\delta$. This is also called the \textit{Esscher transform} \citep{esscher1932}, and corresponds to incremental interventions when $\mathcal{A} = \{0,1\}$ \citep{kennedy2019}. Applications for continuous treatments have been considered in~\citet{diaz2020} and~\citet{schindl2024}. It can be seen (e.g., see \citet{rakshit2024}) that $Q_{\delta_2}$ stochastically dominates $Q_{\delta_1}$ whenever $\delta_2 > \delta _1$. Since $\Pi \equiv Q_0$, the pure stochastic policy for $Q_{\delta}$ results in treatments whose distribution is pushed upward (downward) compared to the observational treatment process $\Pi$, when $\delta > 0$ ($\delta < 0$). Exponential tilts are smooth and well-behaved, but also are ``closest'' to the observational treatment distribution in a specific sense---in Appendix~\ref{app:distance}, we show that $Q_{\delta}$ minimizes the Kullback-Leibler divergence with respect to $\Pi$, subject to a mean constraint.

\subsection{Motivating example}\label{sec:motivating-example}
As alluded to in the introduction, the behavior of pure stochastic policies can differ substantially from other generalized policies that yield the same target treatment distribution. Often, $d_Q$ will even be suboptimal in terms of partial identification of its effect. To demonstrate these phenomena, consider the following illustrative scenario. Suppose the observed data distribution is described by $(X, A, Y) \in [0,1] \times \{0,1\} \times \mathbb{R}$, where $X \sim \mathrm{Unif}(0,1)$, $\pi(X) \coloneqq P[A = 1 \mid X] = X$, and $\mu_a(X) \coloneqq \mathbb{E}(Y \mid X, A = a) = (2a-1)X$, for $a \in \{0,1\}$. Say we are interested in incremental effects~\citep{kennedy2019}, where $\pi$ is ``replaced'' by $q_{\delta} = \frac{e^{\delta} \pi}{e^{\delta} \pi + (1 - \pi)}$, for some range of values $\delta$; more formally, we want to learn about $\mathbb{E}(Y(d_{q_{\delta}}))$ from the observed data distribution $P$. If we make an unconfoundedness assumption (i.e., $A \ind Y(a) \mid X$), then $\mathbb{E}(Y(d_{q_{\delta}})) = \mathbb{E}((1 - q_{\delta})\mu_0 + q_{\delta}\mu_1) \eqqcolon \tau_{\delta}(P)$. On the other hand, if there is unmeasured confounding, then $\mathbb{E}(Y(d_{q_{\delta}}))$ is not point identified. However, we can bound this quantity under a given sensitivity model. For concreteness, suppose that for some $\Gamma \in [0, \infty)$, we assert
\[\big|\mathbb{E}(Y(a) \mid X, A = 1 - a) - \mu_a(X)\big| \leq \Gamma, \text{ for } a \in \{0,1\}.\]
That is, the value of $\Gamma$ determines the possible extent of violation of the no unmeasured confounding assumption, by bounding the gap between the mean of $Y(a)$ for those treated with $A = a$ and those treated with $A = 1 - a$.

Using the methodology we develop in Section~\ref{sec:methods}, we can construct sharp nonparametric bounds on $\mathbb{E}(Y(d_{q_{\delta}}))$ based on the above sensitivity model. In Figure~\ref{fig:toy-example}, we plot these bounds (in blue) for $\delta \in [0,3]$ and $\Gamma \in \{0.5, 2\}$. A few observations are of note. First, we see that, unsurprisingly, the bounds are narrower when $\Gamma$ is smaller. Indeed, as $\Gamma \to 0$, the bounds collapse to $\tau_{\delta}$ (the black line). Second, the width of the bounds decreases slightly as $\delta \to 0$, but the bounds do not collapse to a point. While $q_0 \equiv \pi$, in general $\mathbb{E}(Y(d_{\pi}))$ will not equal $\tau_0 \equiv \mathbb{E}(Y)$. This stems from the fact that when there is unmeasured confounding, the pure stochastic intervention $d_{\pi}$ deviates from the observational regime in that $d_\pi$ is unconfounded by construction---it depends only on $X$ and auxiliary randomness. Third, we also plot bounds (in red) for a ``maximal'' $q_{\delta}$-policy, $d_{q_{\delta}}^*$, to be described in Section~\ref{sec:methods}, which under unconfoundedness (or $\Gamma = 0$) also satisfies $\mathbb{E}(Y(d_{q_{\delta}}^*)) = \tau_{\delta}$. Bounds for this effect decrease much more drastically in width as $\delta \to 0$, and narrow smoothly to a point at 0. Intuitively, this is accomplished by tying $d_{q_{\delta}}^*$ explicitly to $A$, such that unmeasured confounders are still at play through $A$ itself. In particular, at $\delta = 0$ we have $d_{\pi}^* \equiv A$, so that $\mathbb{E}(Y(d_{\pi}^*)) \equiv \mathbb{E}(Y)$, and point identification is trivially achieved. The construction and interpretation of these alternative generalized policies is discussed in greater detail in the next section.

\begin{figure}[t]
    \centering
    \includegraphics[width=\linewidth]{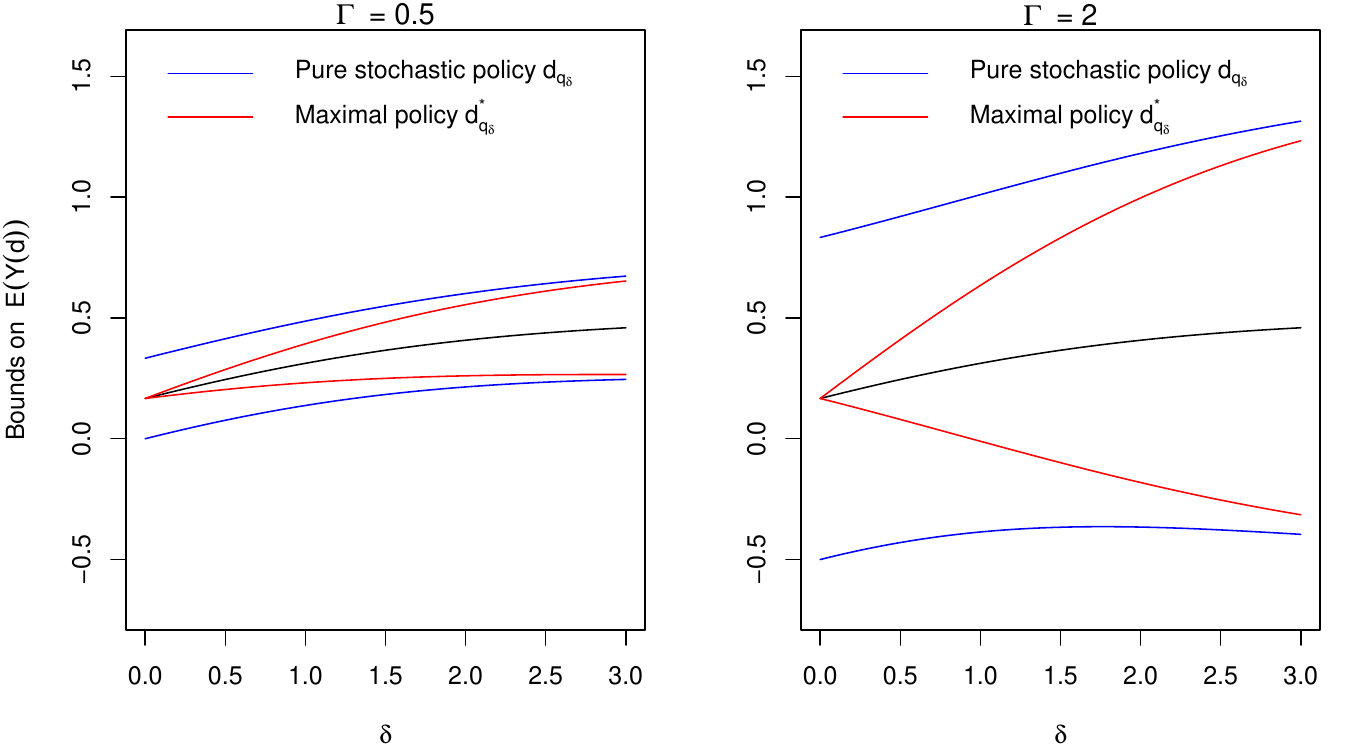}
    \caption{Bounds on $\mathbb{E}(Y(d))$ in motivating example, for incremental interventions. In black is the functional $\mathbb{E}((1 - q_{\delta})\mu_0 + q_{\delta}\mu_1)$, which equals $\mathbb{E}(Y(d_{q_\delta}))= \mathbb{E}(Y(d_{q_\delta}^*))$ under no unmeasured confounding (i.e., $\Gamma = 0$). In blue and red are lower and upper bounds for the effect of the pure stochastic policy, $\mathbb{E}(Y(d_{q_\delta}))$, and the maximal $q_{\delta}$-policy, $\mathbb{E}(Y(d_{q_\delta}^*))$, respectively.}
    \label{fig:toy-example}
\end{figure}

\section{General Methodology}\label{sec:methods}

In this section we lay out our general framework and methodology. In Section~\ref{sec:q-policy}, we define a class of generalized policies whose induced distribution conditional only on covariates is fixed, and provide some examples. In Section~\ref{sec:sens}, we consider sensitivity models that quantify various possible violations of the assumption of no unmeasured confounders, and characterize generalized policy formulations that minimize worst case bounds on mean potential outcomes.
Finally, in Section~\ref{sec:binary}, we explore the binary treatment case in greater detail and provide specialized results.

\subsection{$Q$-policies}\label{sec:q-policy}
For any pair of probability measures on $\mathbb{R}$ (or equivalently, their CDFs), $P_1$ and $P_2$, let $\mathcal{J}(P_1, P_2)$ be the set of joint distributions or \textit{couplings} on $\mathbb{R}^2$ with marginals $P_1$ and $P_2$. To move past pure stochastic policies, we will consider for any target distribution, $Q$, the class of interventions that set treatment to $\widetilde{A}$ where $(A, \widetilde{A}) \mid X$ has distribution $J$ for some $J \in \mathcal{J}_Q(X) \coloneqq \mathcal{J}(\Pi( \, \cdot \mid X), Q( \, \cdot \mid X))$. Equivalently, we can work with generalized policies. To see this, first note that for any $d \in \mathcal{D}$ such that $P[d(X, A, V) \leq a \mid X] = Q(a \mid X)$ for all $a \in \mathbb{R}$, $(A, d(X, A, V))$ has some valid joint distribution given $X$. Conversely, for a given coupling $J \in \mathcal{J}_Q(X)$, we can construct a generalized policy $d_J$ such that $(A, d_J(X, A, V)) \mid X \sim J$ as follows. Let $K_{X, A} : \mathcal{X} \times \mathcal{A} \times \mathcal{B}(\mathbb{R}) \to [0,1]$ be a Markov kernel encoding the regular conditional probability distribution given $X$ and $A$ associated with $J$ (which exists since $\mathbb{R}$ is a Polish space), i.e., $K_{X,A}(x, a, B) = \mathbb{P}_{(A, \widetilde{A}) \mid X = x\sim J}[\widetilde{A} \in B \mid X = x, A = a]$. Then, defining the conditional CDF $M(a' \mid X, A = a) \coloneqq K_{X,A}(X, a, (-\infty, a'])$, for any $a \in \mathcal{A}$, $a' \in \mathbb{R}$, we can define $d_J(X, A, V) = M^{-1}(V \mid X, A)$, for $V \sim \mathrm{Unif}(0,1)$, which yields the desired property. We stress that this argument serves as a technical justification for working primarily with generalized policies, not as a practical construction device---for the applications we will consider, the generalized policies are described as more explicit functions of $X$, $A$, and $V$.

\begin{definition}
    For any distribution $Q$, define the set of $Q$-policies to be \[\mathcal{D}_Q \coloneqq \big\{d \in \mathcal{D} \,\, \big| \,\, P[d(X, A, V) \leq a \mid X] = Q(a \mid X) \text{ for all } a \in \mathbb{R}\big\},\]
    i.e., the set generalized policies inducing treatment with distribution $Q$, given $X$.
\end{definition}

Note that by Assumption~\ref{ass:aux}, $Q$-policies are independent of all potential outcomes, given $X$ and $A$. That is, we are not considering interventions that can depend on anything beyond covariates and the natural value of treatment. This does not rule out, however, the possibility of dependence on something like $\{\mathbb{E}(Y(a) \mid X, A = a')\}_{a, a' \in \mathcal{A}}$, as this is purely a function of $X$---as these conditional expectations are not always identified, though, they may be more challenging to work with. Note also that we have already seen one example of a $Q$-policy: the pure stochastic policy $d_Q$ belongs to $\mathcal{D}_Q$, corresponding to an independent coupling. Below, we consider two useful examples corresponding to non-trivial couplings. 

\begin{example}[Rank-preserving coupling]\label{ex:inv}
    Suppose $\Pi$ is absolutely continuous with respect to the Lebesgue measure on $\mathbb{R}$ (uniformly over $X$), and is supported on an interval. Consider the MTP
    \[d_{Q}^{\mathrm{mon}}(X, A) \coloneqq Q^{-1}(\Pi(A \mid X)\mid X).\]
    By construction, $d_{Q}^{\mathrm{mon}}(X, A) \mid X \sim Q(\, \cdot \mid X)$, as $\Pi(A \mid X) \mid X \sim \mathrm{Unif}(0,1)$ under our continuity assumption. Moreover, this is a monotonic coupling in that $d_{Q}^{\mathrm{mon}}(X, A) \leq A$ if and only if $Q(A \mid X) \geq \Pi(A \mid X)$, i.e., this represents an intervention that pushes the natural value of treatment in the direction $\mathrm{sign}(\Pi(A \mid X) - Q(A \mid X))$. In particular, if $Q$ stochastically dominates $\Pi$, then $d_{Q}^{\mathrm{mon}}(X, A) \geq A$ almost surely, and vice-versa. Note that this transformation, and a variant theoreof, was also considered in~\citet{diaz2013} to describe an intervention that truncated the treatment value while controlling its distribution. More recently, \citet{rakshit2024} employed this transformation in an instrumental variable setting in order to define local causal effects under a monotonicity assumption.
\end{example}

\begin{example}[Maximal coupling]\label{ex:maximal}
    Let $\rho$ be a dominating measure for $\Pi$ and $Q$ (e.g., $\frac{1}{2}(\Pi + Q)$), and define $\pi = \frac{d\Pi}{d\rho}$, $q = \frac{dQ}{d\rho}$. Then, letting $V = (V_1, V_2)$ where $V_1, V_2 \overset{\mathrm{iid}}{\sim} \mathrm{Unif}(0,1)$, consider the generalized policy 
    \[d_Q^*(X, A, V) \coloneqq \mathds{1}(V_1 \leq \widetilde{p}(A \mid X)) A + \mathds{1}(V_1 > \widetilde{p}(A \mid X)) \widetilde{Q}^{-1}(V_2 \mid X),\]
    where $\widetilde{p}(A \mid X) = \min{\left\{1, \frac{q( A \mid X)}{\pi(A \mid X)}\right\}}$, and $\widetilde{Q}$ is the CDF for the distribution with density (with respect to $\rho$) given by
    \[\widetilde{q}(a \mid X) = \frac{q(a \mid X) - \min{\{\pi(a \mid X), q(a \mid X)\}}}{\mathrm{TV}(\Pi(\, \cdot \mid X), Q(\, \cdot \mid X))}.\] Here, $\mathrm{TV}(P_1, P_2) = \sup_{B \in \mathcal{B}(\mathbb{R})}|P_1(B) - P_2(B)|$ represents the total variation distance between any two probability measures $P_1$ and $P_2$ on $(\mathbb{R}, \mathcal{B}(\mathbb{R}))$. This construction encodes a so-called \textit{maximal} coupling between $A$ and $d_{Q}^*$, in that among all $J \in \mathcal{J}_Q(X)$, the distribution of $(A, d_{Q}^*(X, A, V)) \mid X$ maximizes the probability of exact equality. In particular, \[P[A =  d_{Q}^*\mid X] = 1 - \mathrm{TV}(\Pi(\, \cdot \mid X), Q(\, \cdot \mid X)) = \int_{\mathcal{A}} \min{\{\pi(a \mid X), q(a \mid X)\}}\, d\rho(a).\]
    We prove these properties in Appendix~\ref{app:maximal} for completeness.
\end{example}


Beyond the explicit couplings in Examples~\ref{ex:inv} and \ref{ex:maximal}, we can consider examples of a different nature: rather than starting with a target treatment distribution $Q$, one can start with an MTP, determine the treatment distribution (given $X$) induced by that MTP, say $Q$, then consider all its associated $Q$-policies. In a sense, this is the direction pursued in~\citet{young2014} and~\citet{diaz2023}, although these authors only consider the pure stochastic policy associated with a given MTP. The literature on MTPs includes a handful of specific examples, such as threshold interventions, where treatment is moved to exceed a pre-specified threshold if it initially falls below it, or vice-versa~\citep{taubman2009}, and shift interventions, where treatment is moved up (or down) by a fixed amount, unless this would result in leaving the support of the distribution~\citep{diaz2012, haneuse2013}. For both threshold and shift interventions, one can take the resulting treatment distributions and consider alternative generalized policies that would yield the same target distribution, e.g., the rank-preserving and maximal policies in Examples~\ref{ex:inv} and \ref{ex:maximal}. Another interesting MTP is the so-called ``superoptimal regime'' recently proposed in in~\citet{stensrud2024}:
    \[d^{\mathrm{sup}}(X, A) \in \argmax_{a \in \mathcal{A}} \mathbb{E}(Y(a) \mid X, A),\]
    assuming this maximum is attained. This corresponds to an intervention that maximizes mean outcomes given baseline covariates and the natural value of treatment. \citet{stensrud2024} show that $d^{\mathrm{sup}}$ is identical to the conventional optimal regime in the absence of unmeasured confounding, but otherwise is guaranteed to outperform it. Note that $d^{\mathrm{sup}}$ will induce one intervention distribution $Q^{\mathrm{sup}}$ (i.e., $d^{\mathrm{sup}} \in \mathcal{D}_{Q^{\mathrm{sup}}}$) which may or may not be identified from observational data, and hence this deviates from previous examples where $Q$ was either fixed \textit{a priori} or a function only of the observational treatment distribution $\Pi$.

\subsection{Sensitivity analysis to unmeasured confounding}\label{sec:sens}

We now consider partial identification of $\mathbb{E}(Y(d) \mid X)$, as well as $\mathbb{E}(Y(d))$, for generalized policies $d \in \mathcal{D}_{Q}$. To proceed, we need some additional notation. To begin with, we will generally assume that the observed data distribution $P$ belongs to a large nonparametric model, $\mathcal{P}$. Moreover, under Assumption~\ref{ass:consistency}, $P$ is induced by $P^* \in \mathcal{P}^*$ on the complete data $O^* = (X, A, \{Y(a)\}_{a \in \mathcal{A}})$, and we assume that $\mathcal{P}^*$ is a large nonparametric model on the complete data. For any $P \in \mathcal{P}$, we let $\mathcal{M}(P)$ be the subset of distributions in $P^*$ whose marginal distribution over $O = (X, A, Y)$ is $P$. 

As we will focus on mean counterfactual outcomes, the following functions will play a central role:
\[\nu_a(X, a') \coloneqq \mathbb{E}(Y(a) \mid X, A = a'), \text{ for } a, a' \in \mathcal{A},\]
and $\mu_a(X) \coloneqq \mathbb{E}(Y \mid X, A = a) \equiv \nu_a(X, a)$, under Assumption~\ref{ass:consistency}. In words, $\nu_a(X, a')$ represents, among those with covariates $X$ and treated with $A = a'$, mean outcomes under an intervention that sets treatment to $a$. On the other hand, $\mu_a(X)$ is the usual outcome regression function, the mean of the observed outcome among those with covariates $X$ and treated with $A = a$.

Any assumptions on $O$ or $O^*$ correspond to restrictions on $\mathcal{P}^*$ and $\mathcal{P}$. For example, one often assumes that there are no unmeasured confounders: $A \ind Y(a) \mid X$, for all $a \in \mathcal{A}$. Under this assumption (as well as consistency), one could identify mean counterfactuals from the observed data distribution $P$, i.e., $\mathbb{E}(Y(a) \mid X) = \mu_a(X) = \nu_a(X, a')$ for all $a, a' \in \mathcal{A}$. Without such an assumption, however, this and other related causal quantities are not point identified. One may still be willing to make weaker structural assumptions which will restrict the possible range of, say, $\mathbb{E}(Y(d))$.

\begin{model}[Bounded outcomes]\label{mod:sens-bounded}
    Consider the restriction $Y(a) \in [0,1]$, for all $a \in \mathcal{A}$. This is a fairly mild restriction, as if the range of $Y$ is $[\ell,u]$ for $\ell < u$, then one can simply rescale $\frac{Y - \ell}{u - \ell}$. Since this is a fairly weak assumption, we can only say that $|\nu_a(X, a') - \mu_a(X)| \leq 1$, which may yield wide bounds for a given target causal quantity.
\end{model}

\begin{model}[Outcome-based sensitivity model]\label{mod:sens-outcome}
Consider the sensitivity model 
\begin{equation*}
    |\nu_a(X, a') - \mu_a(X)| \leq \Gamma, \text{ for } a, a' \in \mathcal{A}
\end{equation*} for some $\Gamma \geq 0$. Note that under Assumption~\ref{ass:consistency}, $\nu_a(X, a) = \mu_a(X)$ for all $a$, so we need only assert the above inequality when $a' \neq a$. This model directly encodes the possible degree of violation of the unconfoundedness assumption on the scale of mean outcomes. In other words, the parameter $\Gamma$ restricts how different the mean of $Y(a)$ may be, comparing those treated with $a$ and those treated with some other level $a'$, conditioning also on $X$.
    
\end{model}

\begin{model}[Continuous outcome-based model]\label{mod:sens-outcome-cont}
Consider the alternative outcome-based model given by
\begin{equation*}
    |\nu_a(X, a') - \mu_a(X)| \leq \Gamma|a' - a|^p, \text{ for } a, a' \in \mathcal{A}
\end{equation*} for some $p \geq 1$ and $\Gamma \geq 0$. This model imposes additional structure in that the mean of $Y(a)$ for the subsets of subjects treated with $A = a'$ and $A = a$ are similar when $a'$ is close to $a$. This model is perhaps more appropriate in some cases when examining sensitivity to unmeasured confounding with continuous treatment.
    
\end{model}

\begin{model}[Treatment-based sensitivity model]\label{mod:sens-prop}
    Let $H_a(\, \cdot \mid X, Y(a))$ denote the distribution of $A$ given $X$ and $Y(a)$ for each $a \in \mathcal{A}$. Assume $H_a \ll \rho$ for some fixed dominating measure $\rho$, and let $\eta_a(\, \cdot \mid X, Y(a)) = \frac{dH_a(\cdot \mid X, Y(a))}{d\rho}$. Suppose also that $\Pi \ll \rho$, and let $\pi(\, \cdot \mid X) = \frac{d\Pi( \cdot \mid X)}{d\rho}$. Consider now the propensity score odds ratio-based model given by
    \[\frac{1}{\Gamma} \leq \frac{\pi(a' \mid X) / \pi(a \mid X)}{\eta_a(a' \mid X, Y) / \eta_a(a \mid X, Y)} \leq \Gamma, \text{ for all } a, a'\in \mathcal{A}, \text{ w.p.1.},\]
    for some $\Gamma \in [1, \infty)$. This is similar to a density-ratio model proposed in~\citet{bonvini2022b}, and generalizes the odds ratio models for binary treatments in \citet{tan2006}. If $X$ represented all of the confounders, then this model would hold with $\Gamma = 1$. Otherwise, $Y(a)$ may be predictive of $A$ given $X$ and the density ratio $\pi / \eta_a$ may deviate from 1. We show in Appendix~\ref{app:sens} that
    \begin{equation*}
        \nu_a(X, a') = \mathbb{E}(Y g(a', a \mid X, Y) \mid X, A = a), \text{ for all } a, a' \in \mathcal{A},
    \end{equation*}
    where $g(a', a \mid X, Y) \coloneqq \frac{\pi(a \mid X) / \pi(a' \mid X)}{\eta_a(a \mid X, Y) / \eta_a(a' \mid X, Y)}$, using Bayes' rule.
    We also show in Appendix~\ref{app:sens} that $\mathbb{E}(g(a', a \mid X, Y) \mid X, A = a) = 1$, for all $a, a' \in \mathcal{A}$. The latter constraint is important to keep track of if one aims to construct sharp bounds under this model, as~\citet{dorn2023} show in an analogous model for binary treatment. Indeed, we show that the following bounds are sharp in this model: letting $\gamma_a^-(X) = F_{Y \mid X, A = a}^{-1}(\frac{1}{1 + \Gamma})$, $\gamma_a^+(X) = F_{Y \mid X, A = a}^{-1}(\frac{\Gamma}{1 + \Gamma})$,
    \[\mathbb{E}(Y g_a^-(X, Y) \mid X, A = a) \leq \nu_a(X, a') \leq \mathbb{E}(Y g_a^+(X, Y) \mid X, A = a),\]
    where 
    \[g_a^-(X, Y) = \frac{1}{\Gamma} \mathds{1}(Y > \gamma_a^-(X)) + \Gamma \mathds{1}(Y \leq \gamma_a^-(X)), \text{ and }\]
    \[g_a^+(X, Y) = \frac{1}{\Gamma} \mathds{1}(Y < \gamma_a^+(X)) + \Gamma \mathds{1}(Y \geq \gamma_a^+(X)),\]
    assuming the CDF of $Y$ given $X$ and $A = a$ is strictly increasing at the quantiles $\gamma_a^-(X)$ and $\gamma_a^+(X)$. These bounds are new and may be of independent interest, as they extend the results of~\citet{dorn2023} to arbitrary discrete and continuous treatments.
\end{model}

Now, we return to the question of partial identification of $\mathbb{E}(Y(d) \mid X)$ and $\mathbb{E}(Y(d))$, but first assert two conditions on the alternative treatment distribution $Q$.

\begin{assumption}[Absolute continuity] \label{ass:continuity}
    $Q(\, \cdot \mid X) \ll \Pi(\, \cdot \mid X)$, almost surely in $X$.
\end{assumption}

\begin{assumption}[Identified alternative] \label{ass:alt-ident}
    The distribution $Q$ is identified under $P$.
\end{assumption}

In words, Assumption~\ref{ass:continuity} requires that the probability measure on $\mathbb{R}$ defined by the CDF $Q$ is absolutely continuous with respect to that for $\Pi$, such that $Q$ does not have positive mass on a subset for which $\Pi$ assigns no mass. This takes the place of the usual positivity assumption, which---in its strongest form---would require that $\Pi$ has density (with respect to some dominating measure) bounded away from zero on the entire support. Such a restriction is common when aiming to estimate the dose response curve $\mathbb{E}(Y(a))$, for $a \in \mathcal{A}$ \citep{kennedy2017}, but is overly restrictive for interventions based on generalized policies. Assumption~\ref{ass:alt-ident}, on the other hand, is a practical choice, so that the complexity of partial identification is limited solely to the distribution of $\{Y(a)\}_{a \in \mathcal{A}}$ given $(X, A)$, and $Q$ may be treated as fixed.

\begin{proposition}\label{prop:ident}
    Suppose Assumptions~\ref{ass:consistency}-\ref{ass:aux} hold, let $Q$ satisfy Assumption~\ref{ass:continuity}-\ref{ass:alt-ident}, and define the (identified) function $t_Q(X) = \int \mu_a(X) \, dQ(a \mid X)$. Then for any $d \in \mathcal{D}_Q$,
    \begin{align*}
        \mathbb{E}(Y(d) \mid X) 
        &= t_Q(X) + \int \mathds{1}(a' \neq a)\{\nu_{a}(X, a') - \mu_a(X)\}\, dP_{A, d \mid X}(a', a \mid X) \\
        &= t_Q(X) + \mathbb{E}(\mathds{1}(A \neq d)\{\nu_d(X, A) - \mu_d(X)\} \mid X),
    \end{align*}
    where $P_{A, d \mid X} \in \mathcal{J}_Q(X)$ is the joint distribution of $(A, d(X, A, V))$, given $X$.
\end{proposition}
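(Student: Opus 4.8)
The plan is to peel off conditional expectations one layer at a time, reducing $\mathbb{E}(Y(d)\mid X)$ to an integral of $\nu_a(X,a')$ against the coupling induced by $d$, and then to split off the ``no-confounding'' part of that integral. First I would use the tower property to write $\mathbb{E}(Y(d)\mid X) = \mathbb{E}\big(\mathbb{E}(Y(d)\mid X, A, V)\mid X\big)$ and argue that the inner conditional expectation equals $\nu_{d(X,A,V)}(X, A)$. Since this is a function of $\big(X, A, d(X,A,V)\big)$ only, the outer expectation becomes exactly $\int \nu_{a}(X, a')\, dP_{A,d\mid X}(a', a\mid X)$, where $P_{A,d\mid X}$ is the conditional joint law of $(A, d(X,A,V))$ given $X$ (with first coordinate $A$, second coordinate $d(X,A,V)$).

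The crux is the identity $\mathbb{E}(Y(d)\mid X, A, V) = \nu_{d(X,A,V)}(X,A)$. I would fix a jointly measurable version of the family $\{Y(a)\}_{a\in\mathcal{A}}$, so that Assumption~\ref{ass:consistency} (consistency) gives $Y(d) = Y(a)\big|_{a = d(X,A,V)}$. Conditioning on $\{X = x, A = a', V = v\}$ fixes $d(X,A,V)$ at the constant $a := d(x,a',v)$, so by consistency $\mathbb{E}(Y(d)\mid X=x, A=a', V=v) = \mathbb{E}(Y(a)\mid X=x, A=a', V=v)$; Assumption~\ref{ass:aux} (auxiliary randomness), i.e.\ $V \ind (X,A,\{Y(a)\}_a)$, then lets me drop $V$ from the conditioning to get $\mathbb{E}(Y(a)\mid X=x, A=a') = \nu_a(x, a')$. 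A disintegration argument (valid since $\mathbb{R}$ and the relevant product spaces are Polish) upgrades this to the stated almost-sure identity. This measure-theoretic bookkeeping --- making the ``plug in $d(X,A,V)$'' step precise --- is the only real obstacle; the rest is algebra, and integrability of $Y(d)$ (e.g.\ under bounded outcomes) is assumed so that all the conditional expectations are meaningful.

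Finally I would check that $P_{A,d\mid X}$ has the correct marginals: its first marginal is $\Pi(\cdot\mid X)$, and its second marginal is $Q(\cdot\mid X)$ precisely because $d\in\mathcal{D}_Q$, so $P_{A,d\mid X}\in\mathcal{J}_Q(X)$. I then decompose $\nu_a(X,a') = \mu_a(X) + \mathds{1}(a'\neq a)\{\nu_a(X,a') - \mu_a(X)\}$, using $\nu_a(X,a) = \mu_a(X)$ --- which follows from consistency via $Y(A)=Y$ --- so that the decomposition is exact even on $\{a'=a\}$. Integrating the $\mu_a(X)$ term against $P_{A,d\mid X}$ and collapsing onto its second marginal yields $\int \mu_a(X)\,dQ(a\mid X) = t_Q(X)$; here Assumption~\ref{ass:continuity} ($Q\ll\Pi$) guarantees $\mu_a$ is defined $Q(\cdot\mid X)$-almost everywhere so that $t_Q$ is well defined, and Assumption~\ref{ass:alt-ident} makes $t_Q$ (and $Q$) identified from $P$. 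The remaining term is exactly the stated integral, and undoing the disintegration rewrites it as $\mathbb{E}\big(\mathds{1}(A\neq d)\{\nu_d(X,A) - \mu_d(X)\}\mid X\big)$, giving the second displayed equality.
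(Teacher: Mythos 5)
Your proof is correct and follows essentially the same route as the paper's: the tower property, then Assumption~\ref{ass:aux} to reduce the inner conditional mean to $\nu_a(X,a')$ integrated against $P_{A,d\mid X}$, then the decomposition $\nu_a(X,a') = \mu_a(X) + \mathds{1}(a'\neq a)\{\nu_a(X,a')-\mu_a(X)\}$ with $\nu_a(X,a)=\mu_a(X)$ from consistency and the marginal of $P_{A,d\mid X}$ giving $t_Q(X)$. The only cosmetic difference is that you condition on $(X,A,V)$ and invoke $V\ind(X,A,\{Y(a)\})$ directly, while the paper conditions on $(X,A,d)$ and uses the implied $d(X,A,V)\ind Y(a)\mid X,A$; your extra measure-theoretic bookkeeping and marginal check are fine additions but not a different argument.
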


The result of Proposition~\ref{prop:ident} sets the foundation for subsequent results, and several remarks are in order. First, notice that if $X$ represents a sufficient set of measured confounders, then $\mathbb{E}(Y(d) \mid X) = t_Q(X) \equiv \mathbb{E}(\mu_d(X)\mid X)$, for any $d \in \mathcal{D}_Q$, and as a consequence,

\[\mathbb{E}(Y(d)) = \mathbb{E}\left(\int \mu_a(X) \, dQ(a \mid X)\right) \eqqcolon \tau_Q(P).\]

\noindent The functional $\tau_Q$ has been the usual object of interest when studying stochastic interventions. Typically, this has been interpreted as the effect of the pure stochastic intervention $d_Q$ under unconfoundedness, but given Proposition~\ref{prop:ident}, $\tau_Q$ equally represents the effect of \textit{any} generalized policy that induces treatment distribution $Q$. Second, when there is unmeasured confounding, the latter statement fails, in that the mean effect of generalized policies $d \in \mathcal{D}_Q$ may vary. In particular, the causal effect of the generalized policy $d$ depends on an interplay between the degree of unmeasured confounding (i.e., through $\nu_a(X, a') - \mu_a(X)$), and the joint distribution of $A$ and $d(X, A, V)$. Third, as we saw in the motivating example of Section~\ref{sec:motivating-example}, the pure stochastic policy $d_Q$ may suffer more from unmeasured confounding than other generalized policies, in that bounds on $\mathbb{E}(Y(d_Q) \mid X)$ may be wider than those for other $Q$-policies $d \in \mathcal{D}_Q$. As another illustration, consider sensitivity Model~\ref{mod:sens-outcome}, take $Q = \Pi$, and suppose $\Pi$ is continuous so that $P[A \neq d_{\Pi} \mid X] = 1$, by independence. Then $\mathbb{E}(Y(d_{\Pi}) \mid X) - t_{\Pi}(X) \in [-\Gamma, \Gamma]$, whereas the maximal $\Pi$-policy satisfies $\mathbb{E}(Y(d_{\Pi}^*) \mid X) - t_{\Pi}(X)= 0$, since $P[A = d_\Pi^* \mid X] = 1$. Fourth and finally, the problem of finding the optimal $d \in \mathcal{D}_Q$, in terms of minimizing worst-case bounds on $\mathbb{E}(Y(d)\mid X)$, shares connections with \textit{optimal transport} \citep{villani2009}. To see this, suppose a sensitivity model implies bounds $|\nu_a(X, a') - \mu_a(X)| \leq \Gamma(a, a', X)$, for all $a' \neq a$, for some $\Gamma : \mathcal{A}^2 \times \mathcal{X} \to [0,\infty)$. Then
a Monge-Kantorovich formulation arises when minimizing the bound length (i.e., the possible deviation of $\mathbb{E}(Y(d) \mid X)$ from $t_Q(X)$) across all generalized policies in $\mathcal{D}_Q$:

\[\min_{J \in \mathcal{J}_Q(X)} \int \mathds{1}(a' \neq a) \Gamma(a, a', X) \, dJ(a, a' \mid X).\]

\noindent Thus, when characterizing optimal couplings and corresponding $Q$-policies, it will often be possible to leverage results from the optimal transport literature. Before providing such results, we make two more definitions.

\begin{definition}\label{def:sharp}
    Given a complete data distribution model $\mathcal{P}^*$, and a functional $\chi: \mathcal{P}^* \to \mathbb{R}$, the observed data functionals $L, U:\mathcal{P} \to \mathbb{R} \cup \{-\infty, \infty\}$ defined by
    \[L(P) = \inf_{P^* \in \mathcal{M}(P)} \chi(P^*), \, U(P) = \sup_{P^* \in \mathcal{M}(P)} \chi(P^*)\]
    are called \textit{tight} lower and upper bounds on $\chi$ in the model $\mathcal{P}$, respectively, and are called \textit{sharp} at $P$ if each is attained by some $P^* \in \mathcal{M}(P)$.
\end{definition}

\begin{definition}\label{def:optimal}
    For a target distribution $Q$, let $\chi_d: \mathcal{P}^* \to \mathbb{R}$ be a complete data functional that depends on $d \in \mathcal{D}_Q$, and let $L_d, U_d : \mathcal{P} \to \mathbb{R}$ be tight bounds on $\chi_d$ for each $d \in \mathcal{D}_Q$. We say $d_{\chi} \in \mathcal{D}_Q$ is an \textit{optimally tight} $Q$-policy at $P$ if
    \[d_{\chi} \in \argmin_{d \in \mathcal{D}_Q} \, \{U_d(P) - L_d(P)\},\]
    and is \textit{optimally sharp} if $L_{d_\chi}, U_{d_\chi}$ are sharp at $P$.
\end{definition}

While Definition~\ref{def:sharp} gives the usual notion of the best possible bounds in a given sensitivity model, in Definition~\ref{def:optimal} we formalize when a $Q$-policy is ``best'' for a given target functional and sensitivity model. With these two definitions in hand, the following results cover many applications.

\begin{theorem}\label{thm:maximal}
The maximal $Q$-policy $d_Q^*$ in Example~\ref{ex:maximal} satisfies
    \[d_Q^* \in \argmin_{d \in \mathcal{D}_Q} P[A \neq d \mid X], \text{ and } P[A \neq d_Q^* \mid X] = \mathrm{TV}(\Pi(\, \cdot \mid X), Q(\, \cdot \mid X)) \eqqcolon \mathrm{TV}_{\Pi, Q}(X).\]
Supposing $\Gamma^-, \Gamma^+ : \mathcal{X} \to [0, \infty)$ yield bounds $-\Gamma^-(X) \leq \nu_a(X, a') - \mu_a(X) \leq \Gamma^{+}(X)$, for all $a'\neq a$, then under the assumptions of Proposition~\ref{prop:ident},
\[\mathbb{E}(Y(d_Q^*) \mid X) \in [t_Q(X) - \Gamma^-(X)\mathrm{TV}_{\Pi, Q}(X), t_Q(X) + \Gamma^+(X)\mathrm{TV}_{\Pi, Q}(X)].\]
Moreover, these are the narrowest possible bounds of this kind (i.e., based on $\Gamma^-, \Gamma^+$) for any $d \in \mathcal{D}_Q$, so that $d_Q^*$ is optimally tight/sharp if $\Gamma^-$ and $\Gamma^+$ are tight/sharp.
\end{theorem}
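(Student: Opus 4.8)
The plan is to break the theorem into three pieces, all flowing from Proposition~\ref{prop:ident} and the optimal transport reformulation already set up in the excerpt. First, I would establish the variational characterization of $d_Q^*$: among all couplings $J \in \mathcal{J}_Q(X)$, the maximal coupling minimizes $P[A \neq \widetilde A \mid X]$, i.e.\ maximizes $P[A = \widetilde A \mid X]$. This is the classical coupling inequality --- for \emph{any} coupling one has $P[A = \widetilde A \mid X] \leq \int \min\{\pi(a\mid X), q(a\mid X)\}\, d\rho(a) = 1 - \mathrm{TV}_{\Pi,Q}(X)$, since $P[A = \widetilde A \in B \mid X] \le \min\{\Pi(B\mid X), Q(B\mid X)\}$ on any measurable $B$ --- and the construction in Example~\ref{ex:maximal} attains it (this is exactly the content deferred to Appendix~\ref{app:maximal}, which I may cite). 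So $d_Q^* \in \argmin_{d\in\mathcal{D}_Q} P[A \neq d\mid X]$ and the stated formula for $P[A\neq d_Q^*\mid X]$ holds.

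Second, I would plug $d = d_Q^*$ into the second display of Proposition~\ref{prop:ident}:
\[
\mathbb{E}(Y(d_Q^*)\mid X) - t_Q(X) = \mathbb{E}\big(\mathds{1}(A \neq d_Q^*)\{\nu_{d_Q^*}(X,A) - \mu_{d_Q^*}(X)\}\,\big|\,X\big).
\]
On the event $A \neq d_Q^*$ the integrand lies in $[-\Gamma^-(X), \Gamma^+(X)]$ by hypothesis, and the event itself has conditional probability $\mathrm{TV}_{\Pi,Q}(X)$; hence the right-hand side lies in $[-\Gamma^-(X)\,\mathrm{TV}_{\Pi,Q}(X),\ \Gamma^+(X)\,\mathrm{TV}_{\Pi,Q}(X)]$, giving the claimed interval for $\mathbb{E}(Y(d_Q^*)\mid X)$.

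Third, for optimality I would run the same bound for an arbitrary $d \in \mathcal{D}_Q$: $|\mathbb{E}(Y(d)\mid X) - t_Q(X)| \le \max\{\Gamma^-(X),\Gamma^+(X)\}\cdot P[A\neq d\mid X]$, but more carefully, the worst-case deviation above $t_Q(X)$ is $\Gamma^+(X)\,P[A\neq d\mid X]$ and below is $\Gamma^-(X)\,P[A\neq d\mid X]$, so the bound \emph{width} of this type is $(\Gamma^-(X)+\Gamma^+(X))\,P[A\neq d\mid X]$, which by Step~1 is minimized at $d = d_Q^*$. This shows $d_Q^*$ is optimally tight among $Q$-policies whenever the $\Gamma$-bounds themselves are tight; and if in addition $\Gamma^-,\Gamma^+$ are sharp (attained by some $P^*\in\mathcal{M}(P)$), then so are the endpoints, since one can choose the conditional law of $\{Y(a)\}$ given $(X,A)$ to realize $\nu_{d_Q^*}(X,A)-\mu_{d_Q^*}(X) = \pm\Gamma^{\pm}(X)$ on $\{A\neq d_Q^*\}$ without disturbing the observed-data margin --- giving optimal sharpness.

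The main obstacle is the \textbf{sharpness} claim rather than tightness: one must verify that the extremal choice $\nu_{d_Q^*}(X,A)-\mu_{d_Q^*}(X) = \Gamma^+(X)$ (resp.\ $-\Gamma^-(X)$) on the event $\{A \neq d_Q^*\}$ is actually \emph{achievable} by a complete-data law $P^*$ consistent with the observed $P$ and with the sensitivity model --- i.e.\ that the constraint ``$\nu_a(X,a)=\mu_a(X)$'' and whatever additional structure the model imposes (e.g.\ the $\mathbb{E}(g(a',a\mid X,Y)\mid X,A=a)=1$ constraint in Model~\ref{mod:sens-prop}) does not obstruct simultaneously hitting the upper (or lower) extreme for all the relevant $(a,a')$ pairs picked out by the support of the maximal coupling. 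This is precisely why the statement is conditional (``optimally sharp \emph{if} $\Gamma^-,\Gamma^+$ are sharp''): I would phrase Step~3 so that it reduces sharpness of the policy-specific bounds to the assumed sharpness of the marginal $\Gamma$-bounds, and defer the model-by-model verification of the latter to the sensitivity-model sections/appendices. A minor technical point to handle with care is measurability of the optimal coupling / kernel $M(\cdot\mid X,A)$ in $X$, but this is covered by the regular-conditional-probability construction already given before Definition~2.
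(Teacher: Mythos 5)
Your proposal is correct and follows essentially the same route as the paper: the coupling inequality plus the Appendix~\ref{app:maximal} verification gives the maximality and the identity $P[A \neq d_Q^* \mid X] = \mathrm{TV}_{\Pi,Q}(X)$, the bound follows by applying the $\Gamma^{\pm}$ inequalities to the integrand in the second display of Proposition~\ref{prop:ident} over the disagreement event, and optimality/sharpness reduce, exactly as in the paper, to the width $(\Gamma^-+\Gamma^+)P[A\neq d\mid X]$ being minimized at $d_Q^*$ and to the assumed tightness/sharpness of the $\Gamma^{\pm}$ bounds. Your extra discussion of realizing the extremal complete-data law is a reasonable elaboration of the same conditional reduction the paper makes, not a departure from it.
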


The result of Theorem~\ref{thm:maximal} immediately gives sharp bounds on $\mathbb{E}(Y(d_Q^*)\mid X)$ and $\mathbb{E}(Y(d_Q^*))$ for sensitivity Models~\ref{mod:sens-bounded} and~\ref{mod:sens-outcome}. For Model~\ref{mod:sens-outcome}, we have $\mathbb{E}(Y(d_Q^*) \mid X) \in t_Q(X) \pm \Gamma \cdot \mathrm{TV}_{\Pi, Q}(X)$. Moreover, as there are no additional constraints imposed by the sensitivity model with respect to the distribution of $X$, we can simply marginalize these conditional bounds to obtain sharp bounds for the marginal quantity, $\mathbb{E}(Y(d_Q^*)) \in \tau_Q \pm \Gamma \cdot \mathbb{E}(\mathrm{TV}_{\Pi, Q}(X))$. The bounded outcomes assumption in Model~\ref{mod:sens-bounded} is a special case with $\Gamma = 1$ for both the conditional and marginal bounds. As $d_Q^*$ is optimally sharp in these models, these are the best bounds we can hope to achieve for any generalized policies that yield treatment distribution $Q$ given $X$. This last result, however, is not sufficient to handle more complex sensitivity models such as Model~\ref{mod:sens-outcome-cont}. For this case, we have the following result.

\begin{theorem}\label{thm:monotonic}
    Assume the conditions of Example~\ref{ex:inv}, and that $\Pi$ and $Q$ have finite first moments. Then the rank-preserving $Q$-policy $d_Q^{\mathrm{mon}}$ satisfies
    {\small
    \[\min_{d \in \mathcal{D}_Q} \mathbb{E}(h(|A - d|) \mid X) = \mathbb{E}(h(|A - d_Q^{\mathrm{mon}}|) \mid X) = \int_0^1 h\left(|\Pi^{-1}(a \mid X) - Q^{-1}(a \mid X)|\right) \, da \eqqcolon W_{\Pi, Q}^{(h)}(X)\]
    }%
    for any (possibly $X$-dependent) convex function $h: [0,\infty) \to [0,\infty)$ such that $h(0) = 0$. Consequently, supposing for such a function $h$, $|\nu_a(X, a') - \mu_a(X)| \leq h(|a' - a|)$ for all $a, a' \in \mathcal{A}$, then under the assumptions of Proposition~\ref{prop:ident},
    \[\mathbb{E}(Y(d_Q^{\mathrm{mon}}) \mid X) \in [t_Q(X) - W_{\Pi, Q}^{(h)}(X),t_Q(X) + W_{\Pi, Q}^{(h)}(X)].\]
    These are the narrowest possible bounds of this kind (i.e., based on $h$) for any $d \in \mathcal{D}_Q$, so that $d_Q^{\mathrm{mon}}$ is optimally tight/sharp if the bounds $|\nu_a(X, a') - \mu_a(X)| \leq h(|a' - a|)$ are tight/sharp.
\end{theorem}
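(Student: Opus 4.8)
The plan is to prove the optimal-transport identity first, conditionally on $X$, and then propagate it through Proposition~\ref{prop:ident}. For the former, I would reduce $\min_{d\in\mathcal{D}_Q}\mathbb{E}(h(|A-d|)\mid X)$ to a coupling problem: any $d\in\mathcal{D}_Q$ induces the coupling $P_{A,d\mid X}\in\mathcal{J}_Q(X)$ with $\mathbb{E}(h(|A-d|)\mid X)=\int h(|a-a'|)\,dP_{A,d\mid X}(a,a'\mid X)$, so $\min_{d\in\mathcal{D}_Q}\mathbb{E}(h(|A-d|)\mid X)\ge\inf_{J\in\mathcal{J}_Q(X)}\int h(|a-a'|)\,dJ$. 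In the other direction, $d_Q^{\mathrm{mon}}$ is a genuine measurable $Q$-policy by Example~\ref{ex:inv}, and under the hypotheses there $\Pi(A\mid X)\mid X\sim\mathrm{Unif}(0,1)$ with $\Pi^{-1}(\Pi(A\mid X)\mid X)=A$ almost surely, so conditionally on $X$ the pair $(A,d_Q^{\mathrm{mon}}(X,A))$ equals $(\Pi^{-1}(U\mid X),Q^{-1}(U\mid X))$ with $U\sim\mathrm{Unif}(0,1)$, i.e.\ the comonotone Hoeffding--Fr\'echet coupling, whose cost is exactly $\int_0^1 h(|\Pi^{-1}(t\mid X)-Q^{-1}(t\mid X)|)\,dt=W_{\Pi,Q}^{(h)}(X)$ by the change of variables $t=\Pi(a\mid X)$. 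It therefore suffices to show that this comonotone coupling minimizes $\int h(|a-a'|)\,dJ$ over $J\in\mathcal{J}_Q(X)$, since the chain of inequalities then collapses to equalities.

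To establish that minimization I would first record that a convex $h\ge 0$ on $[0,\infty)$ with $h(0)=0$ is nondecreasing (for $0\le a\le b$, writing $a=\tfrac{b-a}{b}\cdot 0+\tfrac{a}{b}\cdot b$ and using convexity gives $h(a)\le\tfrac{a}{b}h(b)\le h(b)$), hence $z\mapsto h(|z|)$ is convex on $\mathbb{R}$, being a nondecreasing convex function composed with $|\cdot|$. Writing $c(a,a')=\psi(a-a')$ with $\psi:=h(|\cdot|)$ convex, $c$ is submodular on $\mathbb{R}^2$: for $a_1\le a_2$ and $a_1'\le a_2'$ the numbers $a_1-a_1'$ and $a_2-a_2'$ lie between $a_1-a_2'$ and $a_2-a_1'$ and have the same sum, so convexity yields $\psi(a_1-a_1')+\psi(a_2-a_2')\le\psi(a_1-a_2')+\psi(a_2-a_1')$. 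For a submodular cost on the line the comonotone coupling is optimal; this is the classical one-dimensional transport result (see, e.g., \citet{villani2009}). Concretely, via the distribution-function representation of $c$ one has, up to coupling-independent marginal terms, $\int c\,dJ=\mathrm{const}+\iint\mathbb{P}_J(A>s,\widetilde A>t)\,d\lambda(s,t)$, where $d\lambda$ is the nonpositive measure arising from $\partial_s\partial_t c(s,t)=-\psi''(s-t)\le 0$; since $\mathbb{P}_J(A>s,\widetilde A>t)=1-\Pi(s\mid X)-Q(t\mid X)+H_J(s,t)$ is maximized, simultaneously for all $(s,t)$, by the comonotone coupling (Hoeffding--Fr\'echet), that coupling minimizes $\int c\,dJ$. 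Finiteness of the marginal terms uses the assumed finite first moments of $\Pi$ and $Q$ (if $h$ grows superlinearly the quantities may equal $+\infty$, and the identity persists); an $X$-dependent $h$ changes nothing, as the argument is conditional on $X$ throughout. This gives $\min_{d\in\mathcal{D}_Q}\mathbb{E}(h(|A-d|)\mid X)=\mathbb{E}(h(|A-d_Q^{\mathrm{mon}}|)\mid X)=W_{\Pi,Q}^{(h)}(X)$.

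The causal bound then follows quickly. By Proposition~\ref{prop:ident}, for any $d\in\mathcal{D}_Q$,
\[\mathbb{E}(Y(d)\mid X)-t_Q(X)=\mathbb{E}\big(\mathds{1}(A\ne d)\{\nu_d(X,A)-\mu_d(X)\}\mid X\big),\]
and under $|\nu_a(X,a')-\mu_a(X)|\le h(|a'-a|)$, using $h(0)=0$ to absorb the indicator, $|\mathbb{E}(Y(d)\mid X)-t_Q(X)|\le\mathbb{E}(h(|A-d|)\mid X)$, which is $\ge W_{\Pi,Q}^{(h)}(X)$ with equality at $d=d_Q^{\mathrm{mon}}$. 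This yields the displayed inclusion and shows that no $Q$-policy produces a narrower interval obtainable from the constraint $|\nu_a-\mu_a|\le h(|a'-a|)$ alone. For the tightness/sharpness statement I would argue that for each $d$ the interval $t_Q(X)\pm\mathbb{E}(h(|A-d|)\mid X)$ is a genuine (if possibly conservative) bound, and that when the constraint $\nu_a(X,a')-\mu_a(X)\in[-h(|a'-a|),h(|a'-a|)]$ is itself tight---and sharp when its extreme configurations are realized by some $P^*\in\mathcal{M}(P)$---one can select $P^*$ with $\nu_d(X,A)-\mu_d(X)=\pm h(|A-d|)$ of a fixed sign on the support of $(d(X,A,V),A)\mid X$; this is admissible because the sensitivity constraints decouple across $(a,a',X)$ and the conditional law of $\{Y(a)\}_{a\in\mathcal{A}}$ given $(X,A)$ is otherwise unrestricted. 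Then $U_d-L_d$ equals $2\mathbb{E}(h(|A-d|)\mid X)$ (and $2\mathbb{E}(W_{\Pi,Q}^{(h)}(X))$ after marginalizing, for $d=d_Q^{\mathrm{mon}}$), which over $\mathcal{D}_Q$ is minimized exactly at $d_Q^{\mathrm{mon}}$, giving optimal tightness, and optimal sharpness since the extremes are then attained.

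The main obstacle I anticipate is the one-dimensional optimal-transport step: deriving submodularity of $(a,a')\mapsto h(|a-a'|)$ from the hypotheses on $h$, invoking and justifying optimality of the comonotone coupling, executing this conditionally on $X$ with the required measurability in $X$, and tracking integrability under only finite first moments. The reduction to a coupling problem, the propagation through Proposition~\ref{prop:ident}, and the ``if tight/sharp'' clause are comparatively mechanical, the last requiring only a check that the relevant sign patterns are jointly admissible in the sensitivity models of interest.
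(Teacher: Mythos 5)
Your proposal is correct and follows essentially the same route as the paper: reduce the minimization over $\mathcal{D}_Q$ to a coupling problem, invoke the classical one-dimensional Monge--Kantorovich result that the comonotone (monotone rearrangement) coupling is optimal for convex costs, and then combine this with Proposition~\ref{prop:ident} exactly as in the proof of Theorem~\ref{thm:maximal}. The only difference is that the paper simply cites this classical optimal-transport fact (Theorem 2.9 of \citet{santambrogio2015}), whereas you additionally sketch a self-contained proof of it via submodularity and the Hoeffding--Fr\'echet bound, which is fine modulo minor smoothness/measure-theoretic care with $\psi''$.
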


From Theorem~\ref{thm:monotonic}, we obtain the following sharp bounds under Model~\ref{mod:sens-outcome-cont}:
\[\mathbb{E}(Y(d_{Q}^{\mathrm{mon}}) \mid X) \in t_q(X) \pm \Gamma \cdot W_p^p(\Pi(\, \cdot \mid X), Q(\, \cdot \mid X)),\]
where $W_p(P_1, P_2)$ is the $p$-Wasserstein distance between distributions $P_1$ and $P_2$. Since $d_Q^{\mathrm{mon}}$ is optimally sharp in this model, these bounds are the narrowest that can be achieved among all generalized policies $d \in \mathcal{D}_Q$. 

We end this section by noting that neither Theorem~\ref{thm:maximal} nor~\ref{thm:monotonic} is suitable to yield optimally sharp policies for the propensity score-based sensitivity Model~\ref{mod:sens-prop}, since bounds for this model are of the form $-\Gamma_a^-(X) \leq \nu_a(X, a') - \mu_a(X) \leq \Gamma_a^+(X)$, for all $a' \neq a$, for some non-negative functions $\Gamma_a^-, \Gamma_a^+$. That is, the bounds depend on $a$ unlike in the statement of Theorem~\ref{thm:maximal}, and do not depend on $a'$ as required for Theorem~\ref{thm:monotonic}. Optimal policies for such bounds can be determined from the solution of a linear optimization problem, however whether this solution has a simple closed form is unclear, and we leave this problem for future work. We show in Section~\ref{sec:binary}, though, that when the treatment is binary, the maximal $Q$-policy remains optimal even for these more complicated bounds. Moreover, in the general case, we can still consider bounds on the effect of the maximal $Q$-policy in this sensitivity model; despite lack of optimality, this policy still has favorable properties. To that end, in Appendix~\ref{app:sens}, we develop bounds on $\mathbb{E}(Y(d_Q^*)\mid X)$ for Model~\ref{mod:sens-prop}.


    

\subsection{Binary treatments}\label{sec:binary}
In this subsection, we will consider binary treatments: $\mathcal{A} = \{0,1\}$. In this case, we will write $\pi(X) = P[A = 1 \mid X]$ for the usual propensity score. Any $Q \ll \Pi$  is also supported on $\{0,1\}$, so we can work with alternative distributions $Q$ through their associated intervention propensity scores, $q : \mathcal{X} \to [0,1]$.
Assumption~\ref{ass:continuity} can be restated as follows: $q(X) = \pi(X)$ whenever $\pi(X) \in \{0,1\}$, i.e., when treatment is deterministically 0 or 1 under $\pi$, the alternative distribution $q$ does not shift the treatment. Similarly, we can define the set of $q$-policies as
\[\mathcal{D}_q \coloneqq \big\{d \in \mathcal{D} \,\, \big| \,\, P[d(X, A, V) = 1 \mid X] = q(X) \big\}.\]

We begin by highlighting that the maximal coupling from Example~\ref{ex:maximal} simplifies greatly for binary treatments. We show in Appendix~\ref{app:maximal} that for any $q$, its associated maximal $q$-policy can be equivalently defined as follows:
\[d_q^*(X, A, V) = \begin{cases}
    A & \text{ if } q(X) = \pi(X) \\
    A\mathds{1}\left(V \leq \frac{q(X)}{\pi(X)}\right) & \text{ if } q(X) < \pi(X) \\
    A + (1 - A)\mathds{1}\left(V > \frac{1 - q(X)}{1 - \pi(X)}\right) & \text{ if } q(X) > \pi(X)
\end{cases},\]
where $V \sim \mathrm{Unif}(0,1)$. Note that $d_q^*$ takes a very appealing form: it results from a ``push'' of the natural treatment $A$ in the direction of $\mathrm{sign}{\{q - \pi\}}$, that is, if $q > \pi$, some individuals for whom $A = 0$ are moved to the treatment group under the intervention, and if $q < \pi$, some treated individuals are moved to the control condition. In this respect, $d_q^*$ is rank-preserving (i.e., respects monotonicity), similar to the policy in Example~\ref{ex:inv}. For example, consider the incremental propensity score distribution for $\delta \in \mathbb{R}$, given by
\begin{equation}\label{eq:incremental}
    q_{\delta}(X) = \frac{e^{\delta}\pi(X)}{e^{\delta}\pi(X) + (1 - \pi(X))}.
\end{equation}
By construction, this satisfies Assumption~\ref{ass:continuity} and $\frac{q_\delta}{1 - q_{\delta}} \big/ \frac{\pi}{1 - \pi} \equiv e^{\delta}$, i.e., the odds of treatment are multiplied by $e^\delta$ under the intervention. Thus, when $\delta > 0$, since $q_{\delta} \geq \pi$ everywhere, we have $d_{q_{\delta}}^* \geq A$. Similarly, when $\delta < 0$, $d_{q_{\delta}}^* \leq A$. 

Even beyond considerations related to unmeasured confounding, one may prefer $d_{q_{\delta}}^*$ over the pure stochastic policy $d_{q_{\delta}}$ proposed in~\citet{kennedy2019} simply due to its nice interpretation. In general, maximal $q$-policies are arguably more predictable and parsimonious, in that treatment is changed for fewer subjects: 
\[P[A \neq d_q^* \mid X] = |q - \pi| \leq \pi (1 - q) + (1 - \pi)q = P[A \neq d_q \mid X].\]
We will now see that maximal $q$-policies are also favorable from the perspective of partial identification. Let $\mu_a(X) = \mathbb{E}(Y \mid X, A = a)$ and $\nu_a(X) = \mathbb{E}(Y(a) \mid X, A = 1 - a)$, for $a \in \{0,1\}$, similar to our notation from Section~\ref{sec:sens}. We begin with by specializing Proposition~\ref{prop:ident} to binary treatments.

\begin{corollary}\label{cor:ident-bin}
    Suppose Assumptions~\ref{ass:consistency}-\ref{ass:aux} hold, and let $q$ satisfy Assumptions~\ref{ass:continuity}-\ref{ass:alt-ident}. Define $t_q(X) = \mu_0(X)(1 - q(X)) + \mu_1(X) q(X)$. For any $d \in \mathcal{D}_q$, we have
    \[\mathbb{E}(Y(d) \mid X) = t_q + \pi(1 - \widetilde{q}_d)(\nu_0 - \mu_0) + (q - \pi \widetilde{q}_d)(\nu_1 - \mu_1),\]
    where $\widetilde{q}_d(X) \coloneqq P[d(X, A, V) = 1 \mid X, A = 1]$.
\end{corollary}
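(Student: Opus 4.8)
The plan is to obtain Corollary~\ref{cor:ident-bin} as a direct specialization of Proposition~\ref{prop:ident} to $\mathcal{A} = \{0,1\}$, so that the only work is to unpack the conditional expectation $\mathbb{E}(\mathds{1}(A \neq d)\{\nu_d(X,A) - \mu_d(X)\} \mid X)$ over the finitely many configurations of $(A,d)$. First I would note that the identified function $t_Q(X) = \int \mu_a(X)\,dQ(a\mid X)$ from Proposition~\ref{prop:ident} collapses, for a binary treatment with intervention propensity $q$, to $t_q(X) = \mu_0(X)(1-q(X)) + \mu_1(X)q(X)$, and that Assumptions~\ref{ass:continuity}--\ref{ass:alt-ident} transfer verbatim (Assumption~\ref{ass:continuity} becoming $q(X)=\pi(X)$ whenever $\pi(X)\in\{0,1\}$), so Proposition~\ref{prop:ident} gives
\[
\mathbb{E}(Y(d)\mid X) = t_q(X) + \mathbb{E}\big(\mathds{1}(A \neq d)\{\nu_d(X,A) - \mu_d(X)\}\mid X\big).
\]

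Next I would expand this conditional expectation. Since $A, d \in \{0,1\}$, the event $\{A\neq d\}$ is the disjoint union of $\{A=1,\ d=0\}$ and $\{A=0,\ d=1\}$; on the first, $\nu_d(X,A) - \mu_d(X) = \nu_0(X) - \mu_0(X)$, and on the second it equals $\nu_1(X) - \mu_1(X)$, where $\nu_a(X) = \mathbb{E}(Y(a)\mid X, A=1-a)$ matches $\nu_a(X,1-a)$ from Section~\ref{sec:sens}. Hence the correction term equals $P[A=1, d=0\mid X](\nu_0 - \mu_0) + P[A=0, d=1\mid X](\nu_1 - \mu_1)$. It then remains to evaluate the two joint probabilities. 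By definition $\widetilde{q}_d(X) = P[d=1\mid X, A=1]$, so conditioning on $A$ gives $P[A=1, d=0\mid X] = \pi(X)(1-\widetilde{q}_d(X))$ directly. For the other, $d\in\mathcal{D}_q$ yields $P[d=1\mid X] = q(X)$, so by the law of total probability $q(X) = \pi(X)\widetilde{q}_d(X) + P[A=0, d=1\mid X]$, i.e.\ $P[A=0, d=1\mid X] = q(X) - \pi(X)\widetilde{q}_d(X)$. Substituting these into the display above delivers the claimed identity.

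I do not anticipate a substantive obstacle: this is essentially bookkeeping. The only delicacy is around degenerate propensity scores, where $\widetilde{q}_d(X)$ is pinned down only up to a $P_X$-null set. On $\{\pi(X)=0\}$, Assumption~\ref{ass:continuity} forces $q(X)=0$, and both coefficients $\pi(1-\widetilde{q}_d)$ and $q-\pi\widetilde{q}_d$ vanish; on $\{\pi(X)=1\}$ we have $q(X)=1$ and $P[d=1\mid X]=1$ forces $\widetilde{q}_d(X)=1$, so again both coefficients vanish. Thus the stated formula holds for every version of $\widetilde{q}_d$, and in particular no positivity-type restriction on $\pi$ is needed beyond what is already implied by the assumptions on $Q$.
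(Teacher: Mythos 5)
Your proof is correct and takes essentially the same route the paper intends: the paper states Corollary~\ref{cor:ident-bin} as a direct specialization of Proposition~\ref{prop:ident} (with no separate proof given), and your expansion of $\mathbb{E}(\mathds{1}(A \neq d)\{\nu_d(X,A)-\mu_d(X)\}\mid X)$ over the two off-diagonal configurations, together with $P[A=1,d=0\mid X]=\pi(1-\widetilde{q}_d)$ and $P[A=0,d=1\mid X]=q-\pi\widetilde{q}_d$ from $P[d=1\mid X]=q$, is exactly that specialization. The remark on degenerate propensity scores is a careful (and correct) bonus, not a point of divergence.
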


\begin{theorem}\label{thm:maximal-bin}
    Suppose $\Gamma_a^-, \Gamma_a^+ : \mathcal{X} \to [0, \infty)$, for $a \in \{0,1\}$ yield bounds
    \begin{equation}\label{eq:ineq-maximal-bin}
    -\Gamma_a^-(X) \leq \nu_a(X) - \mu_a(X) \leq \Gamma_a^+(X), \text{ for } a \in \{0,1\}.
    \end{equation}
    Then under the assumptions of Corollary~\ref{cor:ident-bin},
    \[\mathbb{E}(Y(d_q^*) \mid X) \in [t_q - \Gamma_s^- |q - \pi|, t_q + \Gamma_s^+ |q - \pi|],\]
    where $s(X) \coloneqq \mathds{1}(q(X) > \pi(X))$. Moreover, these are the narrowest possible bounds of this kind (i.e., based on $\Gamma_a^-, \Gamma_a^+$) for any $d \in \mathcal{D}_q$, and $d_q^*$ is optimally tight/sharp if the inequalities in~\eqref{eq:ineq-maximal-bin} are tight/sharp.
\end{theorem}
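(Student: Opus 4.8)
The plan is to combine Corollary~\ref{cor:ident-bin} with a one-parameter description of $\mathcal{D}_q$ and then read off the worst-case interval. First I would observe that, since $d(X,A,V)$ is conditionally independent of the potential outcomes given $(X,A)$ by Assumption~\ref{ass:aux}, a policy $d\in\mathcal{D}_q$ enters Corollary~\ref{cor:ident-bin} only through $\widetilde q_d(X)=P[d=1\mid X,A=1]$, and the constraint $P[d=1\mid X]=q$ pins down $P[d=1\mid X,A=0]$ when $0<\pi<1$ (the case $\pi\in\{0,1\}$ is trivial, as then $q=\pi$ by Assumption~\ref{ass:continuity} and the deviation vanishes). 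Setting $p_{10}:=\pi(1-\widetilde q_d)=P[A=1,d=0\mid X]$ and $p_{01}:=q-\pi\widetilde q_d=P[A=0,d=1\mid X]$, Corollary~\ref{cor:ident-bin} reads
\[
\mathbb{E}(Y(d)\mid X)-t_q=p_{10}\,(\nu_0-\mu_0)+p_{01}\,(\nu_1-\mu_1),\qquad p_{10},p_{01}\ge 0,
\]
and, writing $p_{11}:=\pi\widetilde q_d=P[A=1,d=1\mid X]$, the feasible set of $p_{11}$ as $d$ ranges over all of $\mathcal{D}_q$ is exactly $[\max\{0,\pi+q-1\},\,\min\{\pi,q\}]$ (any such value being realizable via the coupling construction of Section~\ref{sec:q-policy}).

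Next I would plug in the sensitivity bounds~\eqref{eq:ineq-maximal-bin}. Because $p_{10},p_{01}\ge 0$, the right-hand side above is extremized over $\nu_a-\mu_a\in[-\Gamma_a^-,\Gamma_a^+]$ at the endpoints, so for fixed $d$ the induced interval for $\mathbb{E}(Y(d)\mid X)$ is $[\,t_q-p_{10}\Gamma_0^--p_{01}\Gamma_1^-,\ t_q+p_{10}\Gamma_0^++p_{01}\Gamma_1^+\,]$, of width $p_{10}(\Gamma_0^-+\Gamma_0^+)+p_{01}(\Gamma_1^-+\Gamma_1^+)$. Since $p_{10}=\pi-p_{11}$ and $p_{01}=q-p_{11}$ are nonincreasing in $p_{11}$ and the coefficients $\Gamma_a^-+\Gamma_a^+$ are nonnegative, the width is minimized by taking $p_{11}$ maximal, $p_{11}=\min\{\pi,q\}$ --- precisely the maximal coupling, attained by $d_q^*$ (Example~\ref{ex:maximal}, or the explicit binary form of $d_q^*$, together with Theorem~\ref{thm:maximal}). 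At this choice one of $p_{10},p_{01}$ vanishes: if $q\le\pi$ then $p_{01}=0$, $p_{10}=\pi-q=|q-\pi|$ and $s=0$; if $q>\pi$ then $p_{10}=0$, $p_{01}=q-\pi=|q-\pi|$ and $s=1$. In both cases the interval becomes $[\,t_q-\Gamma_s^-|q-\pi|,\ t_q+\Gamma_s^+|q-\pi|\,]$ (and $\{t_q\}$ when $q=\pi$), which is the asserted bound; the width computation above simultaneously shows no $d\in\mathcal{D}_q$ admits a strictly narrower interval of this form, i.e.\ $d_q^*$ is optimally tight.

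For the final clause I would argue that the two endpoints just displayed are in fact the infimum and supremum of $\mathbb{E}(Y(d_q^*)\mid X)$ over $\mathcal{M}(P)$ whenever~\eqref{eq:ineq-maximal-bin} is tight, and are attained whenever it is sharp. Indeed $\mathbb{E}(Y(d_q^*)\mid X)$ is affine in $(\nu_0-\mu_0,\nu_1-\mu_1)$ with nonnegative coefficients $(p_{10},p_{01})$, and its two arguments are determined by the (unidentified) conditional laws of $Y(0)$ given $(X,A=1)$ and of $Y(1)$ given $(X,A=0)$, which are variation-independent within $\mathcal{M}(P)$; hence the extremes of $\mathbb{E}(Y(d_q^*)\mid X)$ are obtained by pushing each $\nu_a-\mu_a$ to its model-extreme, and these agree with $-\Gamma_a^-$, $+\Gamma_a^+$ (resp.\ are attained) under tightness (resp.\ sharpness) of~\eqref{eq:ineq-maximal-bin}. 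Since under the maximal coupling only the $s$-indexed weight is nonzero, only tightness/sharpness of the $s$-th inequality is actually needed. The marginal statement follows by integrating over $X$, the sensitivity model placing no restriction on the law of $X$.

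The computation is essentially mechanical once Corollary~\ref{cor:ident-bin} and the coupling description of $\mathcal{D}_q$ are in hand; the main obstacle is the sharpness transfer of the last paragraph --- making rigorous that ``tight/sharp $\Gamma$'s'' imply ``tight/sharp bounds on $\mathbb{E}(Y(d_q^*)\mid X)$,'' which rests on the variation-independence of the relevant conditional potential-outcome laws and on the simultaneous attainability of the endpoints. A secondary point to state carefully is that the minimization really ranges over \emph{all} of $\mathcal{D}_q$ rather than a convenient subfamily --- this is exactly what the reduction to the single scalar $p_{11}\in[\max\{0,\pi+q-1\},\min\{\pi,q\}]$ provides --- and that in degenerate cases (e.g.\ $\Gamma_s^-+\Gamma_s^+=0$) the minimizer need not be unique, though $d_q^*$ remains among the minimizers.
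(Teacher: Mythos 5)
Your proposal is correct and follows essentially the same route as the paper's proof: apply Corollary~\ref{cor:ident-bin}, reduce the optimization over $\mathcal{D}_q$ to the single scalar $\widetilde q_d$ (equivalently $p_{11}=\pi\widetilde q_d$), note the bound width is decreasing in it, identify the minimizer $\widetilde q_d=\min\{1,q/\pi\}$ with the maximal coupling $d_q^*$, and plug in to obtain $t_q\pm\Gamma_s^{\mp}|q-\pi|$. Your final paragraph merely elaborates (via variation-independence of the unidentified conditional laws) the tightness/sharpness transfer that the paper dismisses as immediate, so there is no substantive difference.
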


The bounds on $\mathbb{E}(Y(d_q^*) \mid X)$ expressed in Theorem~\ref{thm:maximal-bin} have length $(\Gamma_s^- + \Gamma_s^+)|q - \pi|$. Contrast this with corresponding bounds for the pure stochastic policy $d_q \coloneqq \mathds{1}(V \leq q(X))$ for $V \sim \mathrm{Unif}(0,1)$: it follows easily from Corollary~\ref{cor:ident-bin} that
\[\mathbb{E}(Y(d_q) \mid X) - t_q \in [-\Gamma_0^- \pi(1 - q) - \Gamma_1^- (1 - \pi)q, \Gamma_0^+ \pi(1 - q) + \Gamma_1^+ (1 - \pi)q],\]
which are, in general, wider---equality between the two approaches holds if and only if $\pi$ or $q$ is degenerate at $X$, i.e.,  $\pi(X) \in \{0,1\}$ or $q(X) \in \{0,1\}$. Note also that if $\Gamma_a^-, \Gamma_a^+$ for $a \in \{0,1\}$ give sharp bounds~\eqref{eq:ineq-maximal-bin}, for all $x \in \mathcal{X}$, and there are no additional constraints, Theorem~\ref{thm:maximal-bin} also implies sharp bounds on the marginal counterfactual mean quantity:
\[\mathbb{E}(Y(d_q^*)) \in [\tau_q - \mathbb{E}(\Gamma_s^- |q - \pi|), \tau_q + \mathbb{E}(\Gamma_s^+ |q - \pi|)], \]
where $\tau_q = \mathbb{E}(t_q(X))$.

These results suffice for a large number of sensitivity models, including Models~\ref{mod:sens-bounded}--\ref{mod:sens-prop} specialized to binary treatments. Model~\ref{mod:sens-outcome} can be equivalently written as $|\nu_a(X) - \mu_a(X)| \leq \Gamma$, and Model~\ref{mod:sens-bounded} implies this bound with $\Gamma = 1$. Model~\ref{mod:sens-outcome-cont}, taken literally, is just a restatement of Model~\ref{mod:sens-outcome}, since $\mathcal{A}= \{0,1\}$. Indeed, these two models only differ for non-binary treatments. Finally, the odds ratio-based Model~\ref{mod:sens-prop} can be written
\[\frac{1}{\Gamma} \leq \frac{\pi(X) / (1 - \pi(X))}{\eta_a(X, Y) / (1 - \eta_a(X, Y))} \leq \Gamma,\]
where $\eta_a(X, y) = P[A = 1 \mid X, Y(a) = y]$ for $a \in \{0,1\}$. This has been referred to as the ``marginal sensitivity model'' by~\citet{zhao2019}, though a version of it was originally proposed by~\citet{tan2006}. As we saw, this implies sharp bounds
\[\mathbb{E}(Y\{g_a^-(X,Y) - 1\} \mid X, A = a) \leq \nu_a(X) - \mu_a(X) \leq \mathbb{E}(Y\{g_a^+(X,Y) - 1\} \mid X, A = a),\]
in the notation of Model~\ref{mod:sens-prop}. Thus, in all cases the bounds take the required form for application of Theorem~\ref{thm:maximal-bin}, and the maximal $q$-policy is optimally sharp.

\section{Estimation and Inference}\label{sec:estimation}

In this section, we develop estimators of the lower and upper bounds on marginal effects $\mathbb{E}(Y(d))$ for optimal $Q$-policies $d$, in accordance with our results in Section~\ref{sec:methods}. We focus on sensitivity Models~\ref{mod:sens-bounded}--\ref{mod:sens-prop}, and exponentially tilted alternative treatment distributions,  $Q_{\delta}$, as defined in~\eqref{eq:exp-tilt} and~\eqref{eq:incremental}. In Section~\ref{sec:est-cont}, we consider bound estimators for continuous treatment $A$, and in Section~\ref{sec:est-bin}, we construct specialized estimators for the case that treatment is binary.

As a matter of notation, we will write $\mathbb{P}_n(f) = \frac{1}{n}\sum_{i = 1}^n f(O_i)$, $\lVert f \rVert_{\infty} = \mathrm{ess\,sup}(f)$, and $\lVert f \rVert^2 = \int f(o)^2 \, dP(o)$ for the empirical mean, $L_{\infty}(P)$ norm, and squared $L_2(P)$ norm of $f$, respectively. Throughout this section, we will assume that the nuisance functions (treatment distribution and outcome models) are fit in a separate sample, independent of $(O_1, \ldots, O_n)$. In the usual case where one has access only to a single sample, one can perform sample splitting and cross-fitting~\citep{bickel1988, robins2008, zheng2010,chernozhukov2018}. We will assume a single split for simplicity, as results extend in the usual way to estimators averaged across multiple splits.

\subsection{Estimators for continuous treatment}\label{sec:est-cont}
Given our discussion in Section~\ref{sec:sens} following Theorems~\ref{thm:maximal} and \ref{thm:monotonic}, the maximal $Q$-policy is optimally sharp in Model~\ref{mod:sens-outcome}, with bounds on $\mathbb{E}(Y(d_Q^*))$ given by
$\tau_Q \pm \Gamma \cdot \mathbb{E}(\mathrm{TV}_{\Pi, Q}(X))$,
where $\tau_Q = \mathbb{E}(\int \mu_a(X) \, dQ(a\mid X))$, while the rank-preserving $Q$-policy is optimal in Model~\ref{mod:sens-outcome-cont} with bounds on $\mathbb{E}(Y(d_Q^{\mathrm{mon}}))$ given $\tau_Q \pm \Gamma \cdot \mathbb{E}(W_p^p\{\Pi(\, \cdot \mid X), Q(\, \cdot \mid X)\})$. These bounds are sharp for any distribution $Q$, but we will provide detailed estimation results for the exponentially tilted distribution, $Q_{\delta}$, defined in~\eqref{eq:exp-tilt}. We assume throughout that the conditional moment generating function of $A$ exists at $\delta$, i.e., $\mathbb{E}(e^{\delta A} \mid X) < \infty$. This will hold when $A$ is bounded, for example, which will we assume for our estimation results.

The smooth functional $\tau_{\delta} \coloneqq \tau_{Q_{\delta}}$ for continuous treatments has been considered in detail in~\citet{diaz2020} and \citet{schindl2024}, corresponding to the effect of the pure stochastic policy $d_{Q_{\delta}}$ under an assumption of no unmeasured confounding. These works develop efficiency theory and a corresponding nonparametric efficient estimator based on the influence function of $\tau_{\delta}$. We begin by briefly recapping these results here.

\begin{proposition}\label{prop:von-Mises}
    The functional $\tau_{\delta}$ satisfies the von Mises expansion
    \[\tau_{\delta}(\overline{P}) - \tau_{\delta}(P) = -\mathbb{E}_P(\dot{\tau}_{\delta}(O; \overline{P})) + R_{\tau_{\delta}}(\overline{P}, P),\]
    where the influence function is given by
    \[\dot{\tau}_{\delta}(O; P) = \frac{e^{\delta A}}{\alpha_{\delta}(X)}(Y - \gamma_{\delta}(X)) + \gamma_{\delta}(X) - \tau_{\delta},\]
    and the remainder term is
    \[R_{\tau_{\delta}}(\overline{P}, P) = \mathbb{E}_P\left(\{\overline{\gamma}_{\delta} - \gamma_{\delta}\}\left\{1 - \frac{\alpha_{\delta}}{\overline{\alpha}_{\delta}}\right\}\right),\]
    where $\alpha_{\delta}(X) = \mathbb{E}_P(e^{\delta A} \mid X)$ and $\gamma_{\delta}(X) = \frac{\mathbb{E}_P(e^{\delta A}Y \mid X)}{\alpha_{\delta}(X)}$.
\end{proposition}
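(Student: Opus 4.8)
The plan is to verify the claimed von Mises expansion directly by computing both sides and checking they agree, rather than rederiving the influence function from scratch. First I would recall that under no unmeasured confounding, $\tau_\delta(P) = \mathbb{E}_P\!\big(\int \mu_a(X)\,dQ_\delta(a\mid X)\big)$, and rewrite the inner integral using the definition \eqref{eq:exp-tilt} of the Esscher transform: $\int \mu_a(X)\,dQ_\delta(a\mid X) = \frac{\int \mu_a(X) e^{\delta a}\,d\Pi(a\mid X)}{\mathbb{E}_P(e^{\delta A}\mid X)} = \frac{\mathbb{E}_P(e^{\delta A}Y\mid X)}{\mathbb{E}_P(e^{\delta A}\mid X)} = \gamma_\delta(X)$, where the middle step uses the tower property over $A$ given $X$ together with $\mu_A(X) = \mathbb{E}_P(Y\mid X,A)$. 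Hence $\tau_\delta(P) = \mathbb{E}_P(\gamma_\delta(X))$, which already explains the ``$\gamma_\delta(X) - \tau_\delta$'' piece of the stated influence function and the form of the remainder.

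Next I would take two distributions $\overline P, P$ (with $\overline\alpha_\delta, \overline\gamma_\delta$ the corresponding nuisances) and expand $\tau_\delta(\overline P) - \tau_\delta(P) = \mathbb{E}_{\overline P}(\overline\gamma_\delta) - \mathbb{E}_P(\gamma_\delta)$. The key algebraic step is to add and subtract to write this difference as $-\mathbb{E}_P\big(\dot\tau_\delta(O;\overline P)\big) + R_{\tau_\delta}(\overline P, P)$ for the claimed $\dot\tau_\delta$ and $R_{\tau_\delta}$. Concretely, $-\mathbb{E}_P(\dot\tau_\delta(O;\overline P)) = -\mathbb{E}_P\!\big(\tfrac{e^{\delta A}}{\overline\alpha_\delta(X)}(Y - \overline\gamma_\delta(X))\big) - \mathbb{E}_P(\overline\gamma_\delta(X)) + \tau_\delta(\overline P)$. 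In the first term, condition on $X$: $\mathbb{E}_P\!\big(\tfrac{e^{\delta A}}{\overline\alpha_\delta(X)}(Y - \overline\gamma_\delta(X))\mid X\big) = \tfrac{1}{\overline\alpha_\delta(X)}\big(\mathbb{E}_P(e^{\delta A}Y\mid X) - \overline\gamma_\delta(X)\alpha_\delta(X)\big) = \tfrac{\alpha_\delta(X)}{\overline\alpha_\delta(X)}\big(\gamma_\delta(X) - \overline\gamma_\delta(X)\big)$, using $\mathbb{E}_P(e^{\delta A}Y\mid X) = \gamma_\delta(X)\alpha_\delta(X)$. Substituting and using $\tau_\delta(\overline P) = \mathbb{E}_{\overline P}(\overline\gamma_\delta)$, one collects the remaining terms into $\mathbb{E}_P(\gamma_\delta) - \mathbb{E}_P(\overline\gamma_\delta) - \mathbb{E}_P\!\big(\tfrac{\alpha_\delta}{\overline\alpha_\delta}(\gamma_\delta - \overline\gamma_\delta)\big) + \tau_\delta(\overline P) - \tau_\delta(P) = \tau_\delta(\overline P) - \tau_\delta(P) + \mathbb{E}_P\!\big((\overline\gamma_\delta - \gamma_\delta)(1 - \tfrac{\alpha_\delta}{\overline\alpha_\delta})\big)$, which rearranges to exactly the claimed identity with $R_{\tau_\delta}$ as stated. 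That $\dot\tau_\delta$ has mean zero and is the (nonparametric) efficient influence function then follows either by inspection ($\mathbb{E}_P\dot\tau_\delta(O;P) = 0$ by the same conditioning argument) or by citing \citet{diaz2020, schindl2024}.

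I do not anticipate a genuine obstacle here, since this is a bookkeeping computation of a known influence function; the only place to be careful is the interchange of the integral over $a$ with the expectation $\mathbb{E}_P(\cdot\mid X)$ and the finiteness of $\alpha_\delta(X) = \mathbb{E}_P(e^{\delta A}\mid X)$, which is assumed (and automatic when $A$ is bounded). One should also note that the expansion is an exact identity, not an asymptotic one, so no remainder estimates are needed at this stage; the rate conditions on $R_{\tau_\delta}$ (via Cauchy--Schwarz, giving a product of $L_2$ errors in $\overline\gamma_\delta - \gamma_\delta$ and $\overline\alpha_\delta - \alpha_\delta$) are deferred to the subsequent estimation results.
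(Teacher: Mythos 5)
Your proposal is correct and follows essentially the same route as the paper's proof: an exact algebraic verification by iterated expectations over $X$, using $\mathbb{E}_P(e^{\delta A}Y\mid X)=\gamma_\delta\alpha_\delta$ to reduce $\tau_\delta(\overline P)-\tau_\delta(P)+\mathbb{E}_P(\dot\tau_\delta(O;\overline P))$ to $\mathbb{E}_P\big(\{\overline\gamma_\delta-\gamma_\delta\}\{1-\alpha_\delta/\overline\alpha_\delta\}\big)$. The only addition is your explicit check that $\int\mu_a(X)\,dQ_\delta(a\mid X)=\gamma_\delta(X)$, i.e.\ $\tau_\delta(P)=\mathbb{E}_P(\gamma_\delta)$, which the paper uses implicitly.
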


The result of Proposition~\ref{prop:von-Mises} does not depend on whether treatment or outcomes are discrete or continuous, and importantly the von Mises expansion motivates a simple one-step estimator: $\widehat{\tau}_{\delta} = \mathbb{P}_n[\widehat{\Psi}_{\tau_{\delta}}]$, where
$\widehat{\Psi}_{\tau_{\delta}} = \frac{e^{\delta A}}{\widehat{\alpha}_{\delta}(X)}(Y - \widehat{\gamma}_{\delta}(X)) + \widehat{\gamma}_{\delta}(X)$.
This is the estimator used in~\citet{rakshit2024}, and we note that $\alpha_{\delta}$ and $\gamma_{\delta}$ are composed of regression functions, and thus can be modeled flexibly using one's preferred regression approach. It should be mentioned, though, that~\citet{schindl2024} proposed a one-step estimator specifically for continuous treatment based on a different parametrization---namely, requiring direct estimation of the density of $A$ given $X$---and remarked that it can be more stable in practice.

To estimate bounds in Model~\ref{mod:sens-outcome}, we must put forth an estimator of $\mathbb{E}(\mathrm{TV}_{\Pi, Q_{\delta}}(X))$. By construction, we have that $\frac{d Q_{\delta}}{d \Pi}(A \mid X) = \frac{e^{\delta} A}{\alpha_{\delta}(X)}$, so that
\[\mathbb{E}(\mathrm{TV}_{\Pi, Q_{\delta}}(X)) = 1 - \mathbb{E}\left(\min{\left\{1, \frac{e^{\delta} A}{\alpha_{\delta}(X)}\right\}}\right) = \mathbb{E}\left(\mathds{1}(e^{\delta A}<\alpha_{\delta}(X))\left\{1 - \frac{e^{\delta A}}{\alpha_{\delta}(X)}\right\}\right) \eqqcolon \xi_{\delta}.\]
Unlike $\tau_{\delta}$, the functional $\xi_{\delta}$ is non-smooth due to the presence of the indicator $\mathds{1}(e^{\delta A} < \alpha_{\delta}(X))$. In other words, it does not have an influence function to facilitate efficient nonparametric estimation. Fortunately, if $e^{\delta A}$ has a continuous distribution given $X$, with density bounded in a neighborhood of its mean, then efficient inference is still possible. We will proceed with the following somewhat weaker assumption.

\begin{assumption}[Margin condition 1] \label{ass:margin}
    For some constants $b > 0, C> 0$,
    \[P[|e^{\delta A} - \alpha_{\delta}(X)| \leq t] \leq C \cdot t^b,\]
    for all $t \geq 0$.
\end{assumption}

If $e^{\delta A}$ indeed has continuous distribution given $X$ with density bounded near $\alpha_{\delta}(X)$, then Assumption~\ref{ass:margin} will hold with $b = 1$. This assumption allows us to use a plug-in estimator for the indicator function, i.e., $\mathds{1}(e^{\delta A} < \widehat{\alpha}_{\delta}(X))$. Concretely, we propose $\widehat{\xi}_{\delta} = \mathbb{P}_n[\widehat{\Psi}_{\xi_{\delta}}]$, where
\[\widehat{\Psi}_{\xi_{\delta}} = \left\{\mathds{1}(e^{\delta A} < \widehat{\alpha}_{\delta}(X)) - \widehat{\kappa}_{\delta}(X)\right\}\left\{1 - \frac{e^{\delta A}}{\widehat{\alpha}_{\delta}(X)}\right\},\]
where $\widehat{\kappa}_{\delta}(X)$ is an estimator of $\kappa_{\delta}(X) = \mathbb{E}(e^{\delta A} \mid X, e^{\delta A} < \alpha_{\delta}(X))$. The following result summarizes the asymptotic properties of the proposed estimators $\widehat{\tau}_{\delta}$ and $\widehat{\xi}_{\delta}$.

\begin{theorem}\label{thm:est-cont-inc}
    Assume that $\lVert \widehat{\alpha}_{\delta} - \alpha_{\delta}\rVert + \lVert \widehat{\gamma}_{\delta} - \gamma_{\delta}\rVert + \lVert \widehat{\kappa}_{\delta} - \kappa_{\delta}\rVert = o_P(1)$, and that $P[|Y| \leq C] = P[|A| \leq C] = P[|\widehat{\gamma}_{\delta}| \leq C] = P[|\alpha_{\delta}| \geq 1 / C] = P[|\widehat{\alpha}_{\delta}| \geq 1 / C] =1$, for some $C < \infty$. Then, letting
    $R_{1,n} = \lVert \widehat{\alpha}_{\delta} - \alpha_{\delta}\rVert \cdot \lVert \widehat{\gamma}_{\delta} - \gamma_{\delta}\rVert$,
    we have
    \[\widehat{\tau}_{\delta} - \tau_{\delta} = \mathbb{P}_n\left[\dot{\tau}_{\delta}(O;P)\right] + O_P(R_{1,n}) + o_P(n^{-1/2}).\]
    If Assumption~\ref{ass:margin} holds, then, letting $R_{2,n} = \lVert \widehat{\alpha}_{\delta} - \alpha_{\delta}\rVert \cdot \lVert \widehat{\kappa}_{\delta} - \kappa_{\delta}\rVert$, $R_{3,n} = \lVert \widehat{\alpha}_{\delta} - \alpha_{\delta}\rVert_{\infty}^{1 + b}$,
    \[\widehat{\xi}_{\delta} - \xi_{\delta} = (\mathbb{P}_n - \mathbb{E}_P)\Psi_{\xi_{\delta}} + O_P(R_{2,n} + R_{3,n} ) + o_P(n^{-1/2}),\]
    where $\Psi_{\xi_{\delta}} = \left\{\mathds{1}(e^{\delta A} < \alpha_{\delta}(X)) - \kappa_{\delta}(X)\right\}\left\{1 - \frac{e^{\delta A}}{\alpha_{\delta}(X)}\right\}$. In particular, if $R_{1,n} = o_P(n^{-1/2})$,
    \[\sqrt{n}(\widehat{\tau}_{\delta} - \tau_{\delta}) \overset{d}{\to} \mathcal{N}(0, \mathrm{Var}\{\dot{\tau}_{\delta}(O;P)\}),\]
    and if $R_{2,n} + R_{3,n} = o_P(n^{-1/2})$,
    $\sqrt{n}(\widehat{\xi}_{\delta} - \xi_{\delta}) \overset{d}{\to} \mathcal{N}(0, \mathrm{Var}\{\Psi_{\xi_{\delta}}(O;P)\})$.
\end{theorem}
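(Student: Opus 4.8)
The plan is to prove the two asymptotic expansions separately, then deduce the CLTs. For $\widehat{\tau}_{\delta}$, I would use the standard one-step/von Mises argument. Write $\widehat{\tau}_{\delta} - \tau_{\delta} = (\mathbb{P}_n - \mathbb{E}_P)\dot{\tau}_{\delta}(O;P) + (\mathbb{P}_n - \mathbb{E}_P)\{\widehat{\Psi}_{\tau_{\delta}} - \Psi_{\tau_{\delta}}\} + \{\mathbb{E}_P \widehat{\Psi}_{\tau_{\delta}} - \tau_{\delta}\}$, where $\Psi_{\tau_{\delta}} = \dot{\tau}_{\delta}(O;P) + \tau_{\delta}$. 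The third term is exactly the remainder $R_{\tau_{\delta}}(\overline{P},P)$ from Proposition~\ref{prop:von-Mises}, and by Cauchy--Schwarz together with the boundedness of $Y$, $A$, $\widehat{\gamma}_{\delta}$ and the fact that $\alpha_{\delta}, \widehat{\alpha}_{\delta}$ are bounded away from zero, it is $O_P(\lVert \widehat{\alpha}_{\delta} - \alpha_{\delta}\rVert \cdot \lVert \widehat{\gamma}_{\delta} - \gamma_{\delta}\rVert) = O_P(R_{1,n})$. The second ``empirical process'' term is $o_P(n^{-1/2})$ because the nuisances are fit on an independent sample (so conditionally the summands are i.i.d.\ with mean zero) and $\lVert \widehat{\Psi}_{\tau_{\delta}} - \Psi_{\tau_{\delta}}\rVert = o_P(1)$ by the consistency assumptions and boundedness --- a standard conditional-variance bound (e.g., Lemma~2 of \citet{kennedy2019}-style arguments) gives the rate. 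This yields the first displayed expansion.

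For $\widehat{\xi}_{\delta}$, the structure is similar but the non-smoothness of the indicator is the crux. Decompose $\widehat{\xi}_{\delta} - \xi_{\delta} = (\mathbb{P}_n - \mathbb{E}_P)\Psi_{\xi_{\delta}} + (\mathbb{P}_n - \mathbb{E}_P)\{\widehat{\Psi}_{\xi_{\delta}} - \Psi_{\xi_{\delta}}\} + \{\mathbb{E}_P \widehat{\Psi}_{\xi_{\delta}} - \xi_{\delta}\}$. Again the middle term is $o_P(n^{-1/2})$ by sample splitting provided $\lVert \widehat{\Psi}_{\xi_{\delta}} - \Psi_{\xi_{\delta}}\rVert = o_P(1)$; here one must handle $\lVert \mathds{1}(e^{\delta A} < \widehat{\alpha}_{\delta}(X)) - \mathds{1}(e^{\delta A} < \alpha_{\delta}(X))\rVert^2 = P[e^{\delta A} \text{ between } \alpha_{\delta}(X) \text{ and } \widehat{\alpha}_{\delta}(X)]$, which by Assumption~\ref{ass:margin} is $\lesssim \lVert \widehat{\alpha}_{\delta} - \alpha_{\delta}\rVert_{\infty}^{b} = o_P(1)$. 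The bias term $\mathbb{E}_P \widehat{\Psi}_{\xi_{\delta}} - \xi_{\delta}$ is where I expect the real work. Writing $\Psi_{\xi_{\delta}}$ with true nuisances has mean $\xi_{\delta}$ because $\mathbb{E}_P[(\mathds{1}(e^{\delta A}<\alpha_{\delta}(X)) - \kappa_{\delta}(X))(1 - e^{\delta A}/\alpha_{\delta}(X)) \mid X]$ ... actually one checks $\mathbb{E}_P[\Psi_{\xi_{\delta}} \mid X] = \mathbb{E}_P[\mathds{1}(e^{\delta A}<\alpha_{\delta}(X))(1 - e^{\delta A}/\alpha_{\delta}(X)) \mid X]$ since the centering by $\kappa_{\delta}$ contributes zero in expectation; so $\mathbb{E}_P\Psi_{\xi_{\delta}} = \xi_{\delta}$. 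For the plug-in, I would expand $\mathbb{E}_P\widehat{\Psi}_{\xi_{\delta}} - \xi_{\delta}$ into (i) a term from replacing $\alpha_{\delta}$ by $\widehat{\alpha}_{\delta}$ inside the indicator, and (ii) a term from the remaining factors. Term (i) is bounded by an integral over the region $\{e^{\delta A}$ between the two $\alpha$'s$\}$ of the factor $|1 - e^{\delta A}/\widehat{\alpha}_{\delta}(X)|$, which is itself $O(|\widehat{\alpha}_{\delta}(X) - \alpha_{\delta}(X)|/\alpha_{\delta}(X))$ on that region; combining with the margin bound $P[\cdot \mid X] \lesssim |\widehat{\alpha}_{\delta}(X)-\alpha_{\delta}(X)|^{b}$ gives $O_P(\lVert \widehat{\alpha}_{\delta}-\alpha_{\delta}\rVert_\infty^{1+b}) = O_P(R_{3,n})$. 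Term (ii), after using the $\kappa_{\delta}$ correction, reduces to a product of $\lVert \widehat{\alpha}_{\delta} - \alpha_{\delta}\rVert$ with $\lVert \widehat{\kappa}_{\delta} - \kappa_{\delta}\rVert$ (the $\kappa$-term is designed precisely to make the first-order $\widehat{\alpha}_{\delta}$-error cancel, leaving a second-order cross term), giving $O_P(R_{2,n})$.

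Having established the two expansions, the CLT statements follow immediately: under $R_{1,n} = o_P(n^{-1/2})$ the expansion for $\widehat{\tau}_{\delta}$ reduces to $\sqrt{n}(\widehat{\tau}_{\delta}-\tau_{\delta}) = \mathbb{G}_n \dot{\tau}_{\delta} + o_P(1)$, and the ordinary CLT applied to the i.i.d.\ mean-zero bounded summands $\dot{\tau}_{\delta}(O_i;P)$ gives asymptotic normality with variance $\mathrm{Var}\{\dot{\tau}_{\delta}(O;P)\}$; similarly for $\widehat{\xi}_{\delta}$ under $R_{2,n} + R_{3,n} = o_P(n^{-1/2})$, using that $\Psi_{\xi_{\delta}}$ is bounded (hence square-integrable). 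The main obstacle is the bias analysis of $\widehat{\xi}_{\delta}$: carefully splitting the plug-in error so that the indicator-perturbation piece is controlled by the margin condition (yielding the $1+b$ power in $R_{3,n}$) while the remaining piece is a genuine second-order product $R_{2,n}$, and verifying that the $\widehat{\kappa}_{\delta}$ centering indeed kills what would otherwise be a first-order (non-negligible) $\widehat{\alpha}_{\delta}$-error. Everything else is routine given sample splitting and the uniform boundedness assumptions.
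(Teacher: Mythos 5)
Your proposal follows essentially the same route as the paper's proof: the identical three-term decomposition for each estimator, the von Mises remainder from Proposition~\ref{prop:von-Mises} plus Cauchy--Schwarz giving the $O_P(R_{1,n})$ bias for $\widehat{\tau}_{\delta}$, sample splitting with an $L_2$-consistency check for the empirical-process terms, and for $\widehat{\xi}_{\delta}$ a split of the plug-in bias into an indicator-perturbation band term controlled by the margin condition (yielding $R_{3,n}$) and a $\widehat{\kappa}_{\delta}$-centered cross term (yielding $R_{2,n}$), followed by the routine CLT step.

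One caveat: in two places your justification quietly invokes more than the theorem assumes. First, you bound $\lVert \mathds{1}(e^{\delta A} < \widehat{\alpha}_{\delta}) - \mathds{1}(e^{\delta A} < \alpha_{\delta})\rVert^2 \lesssim \lVert \widehat{\alpha}_{\delta} - \alpha_{\delta}\rVert_{\infty}^{b} = o_P(1)$, which presumes sup-norm consistency of $\widehat{\alpha}_{\delta}$; the hypotheses only give $L_2(P)$ consistency. The paper instead bounds this quantity by $t^{b} + t^{-1}\lVert \widehat{\alpha}_{\delta} - \alpha_{\delta}\rVert$ for arbitrary $t > 0$ and optimizes over $t$, so that only the assumed $L_2$ rate is needed for the empirical-process term. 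Second, in your Term (i) you apply the margin condition conditionally on $X$ with the random threshold $|\widehat{\alpha}_{\delta}(X) - \alpha_{\delta}(X)|$, whereas Assumption~\ref{ass:margin} is a marginal statement for deterministic $t$; the correct order of operations (as in the paper) is to bound the band width and the factor $|\widehat{\alpha}_{\delta} - e^{\delta A}|$ on the band by $\lVert \widehat{\alpha}_{\delta} - \alpha_{\delta}\rVert_{\infty}$, which is deterministic given the training fold, and only then invoke the marginal margin condition---this is precisely what produces the $1+b$ power in $R_{3,n}$. Both points are easy repairs and do not affect the overall structure; the rest of your argument, including the check that $\mathbb{E}_P(\Psi_{\xi_{\delta}}) = \xi_{\delta}$ via the $\kappa_{\delta}$ centering and the final normality conclusions, matches the paper.
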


The above result confirms the double robustness property of the estimator $\widehat{\tau}_{\delta}$ observed in previous work, in that inference depends on the product of errors for $\alpha_{\delta}$ and $\gamma_{\delta}$. The estimator of the expected total variation distance functional, $\widehat{\xi}_{\delta}$, on the other hand, has an additional asymptotic bias term $R_{3,n}$ related to the margin condition---similar bias terms have appeared in other non-smooth functional estimation problems under margin conditions (e.g., \citet{luedtke2016a}). 

Combining these results, the proposed estimator of the bounds on $\mathbb{E}(Y(d_{Q_{\delta}}^*))$ in sensitivity Model~\ref{mod:sens-outcome} is $\widehat{\tau}_{\delta} \pm \Gamma \cdot \widehat{\xi}_{\delta}$. Another result of Theorem~\ref{thm:est-cont-inc} is that when nuisance estimates converge fast enough, valid inference is straightforward: when $R_{1,n} + R_{2,n} = o_P(n^{-1/2})$, an asymptotically valid Wald-based interval for the upper bound is given by

\[\left\{\widehat{\tau}_{\delta} + \Gamma \cdot \widehat{\xi}_{\delta}\right\} \pm z_{1 - \alpha/2}\sqrt{\frac{1}{n} \mathbb{P}_n\left[\left(\widehat{\Psi}_{\tau_{\delta}} - \widehat{\tau}_{\delta} + \Gamma \{\widehat{\Psi}_{\xi_{\delta}} - \widehat{\xi}_{\delta}\}\right)^2\right]}\]
where $z_{1 - \alpha/2}$ is the $(1 - \alpha/2)$-quantile of the standard normal distribution. The confidence interval for the lower bound is defined in the same way, only with $\Gamma$ replaced by $-\Gamma$.

Lastly, we consider estimation of $\mathbb{E}(W_p^p\{\Pi(\, \cdot \mid X), Q_{\delta}(\, \cdot \mid X)\})$, the expected Wasserstein distance functional in the sharp bounds for $\mathbb{E}(Y(d_{Q_{\delta}}^{\mathrm{mon}}))$ in sensitivity Model~\ref{mod:sens-outcome-cont}. For simplicity, we will proceed based on the assumption that $p = 1$, though generalizing to arbitrary $p$ is possible. For this choice, it is well known that
\begin{align*}
    W_1\{\Pi(\, \cdot \mid X), Q_{\delta}(\, \cdot \mid X)\} &= \int_0^1 |\Pi^{-1}(a \mid X) - Q_{\delta}^{-1}(a \mid X)| \, da \\
    &= \int_{\mathbb{R}} |\Pi(a \mid X) - Q_{\delta}(a \mid X)| \, da.
\end{align*}
For $\delta > 0$, $Q_{\delta}$ stochastically dominates $\Pi$, so that when $A \geq 0$,
\begin{align*}
    W_1\{\Pi(\, \cdot \mid X), Q_{\delta}(\, \cdot \mid X)\}
    &= \int_0^{\infty} \{\Pi(a \mid X) - Q_{\delta}(a \mid X)\} \, da \\
    &= \frac{\mathbb{E}(A e^{\delta A} \mid X)}{\mathbb{E}(e^{\delta A} \mid X)} - \mathbb{E}(A \mid X).
\end{align*}
Similarly, when $\delta < 0$, $W_1\{\Pi(\, \cdot \mid X), Q_{\delta}(\, \cdot \mid X)\} = \mathbb{E}(A \mid X)-\frac{\mathbb{E}(A e^{\delta A} \mid X)}{\mathbb{E}(e^{\delta A} \mid X)}$. Thus, bounds on $\mathbb{E}(Y(d_{Q_{\delta}}^{\mathrm{mon}}))$ in Model~\ref{mod:sens-outcome-cont} when $p = 1$ (and $A \geq 0$) are given by $\tau_{\delta} \pm \Gamma \cdot \theta_{\delta}$ when $\delta > 0$, and $\tau_{\delta} \mp \Gamma \cdot \theta_{\delta}$ when $\delta < 0$, where $\theta_{\delta}\coloneqq \mathbb{E}\left(\frac{Ae^{\delta A}}{\alpha_{\delta}(X)} - A\right)$. Letting $\beta(X) = \frac{\mathbb{E}(A e^{\delta A} \mid X)}{\mathbb{E}(e^{\delta A} \mid X)}$,  we take $\widehat{\theta}_{\delta} = \mathbb{P}_n[\widehat{\Psi}_{\theta_{\delta}}(O)]$, where
\[\Psi_{\theta_{\delta}}(O) = \frac{e^{\delta A}}{\alpha_{\delta}(X)}(A - \beta_{\delta}(X)) + \beta_{\delta}(X) - A,\]
and $\widehat{\Psi}_{\theta_{\delta}}$ is obtained by plugging in $(\widehat{\alpha}_{\delta}, \widehat{\beta}_{\delta})$ into this expression.

\begin{theorem}\label{thm:est-cont-inc-wass}
    Assume that $\lVert \widehat{\alpha}_{\delta} - \alpha_{\delta}\rVert = o_P(1)$, and that $P[|A| \leq C] = P[|\widehat{\beta}_{\delta}| \leq C] = P[|\widehat{\alpha}_{\delta}| \geq 1 / C] =1$, for some $C < \infty$. Then
    \[\widehat{\theta}_{\delta} - \theta_{\delta} = (\mathbb{P}_n - \mathbb{E}_P)[\Psi_{\theta_{\delta}}(O)] + O_P(\lVert \widehat{\alpha}_{\delta} - \alpha_{\delta}\rVert \cdot \lVert \widehat{\beta}_{\delta} - \beta_{\delta}\rVert) + o_P(n^{-1/2}),\]
    so that when $\lVert \widehat{\alpha}_{\delta} - \alpha_{\delta}\rVert \cdot \lVert \widehat{\beta}_{\delta} - \beta_{\delta}\rVert = o_P(n^{-1/2})$, $\sqrt{n}(\widehat{\theta}_{\delta} - \theta_{\delta}) \overset{P}{\to} \mathcal{N}(0,\mathrm{Var}\{\Psi_{\theta_{\delta}}(O)\})$.
\end{theorem}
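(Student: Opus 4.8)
The plan is to recognize $\widehat{\theta}_{\delta}$ as a one-step (debiased) estimator built from the uncentered influence function $\Psi_{\theta_{\delta}}$, and then run the standard cross-fitting argument: (i) a von Mises expansion showing $\Psi_{\theta_{\delta}}$ has mean $\theta_{\delta}$ at the truth with a product-form second-order remainder; (ii) control of the empirical-process drift term using the independence of the nuisance training split; and (iii) assembly, from which the stated central limit theorem follows by Slutsky. Throughout, the hypothesis $P[|A| \leq C] = 1$ (which forces $e^{\delta A}$ and $\alpha_{\delta}$ to be bounded above and below) together with the stated bounds on $\widehat{\alpha}_{\delta}, \widehat{\beta}_{\delta}$ supplies all the boundedness needed.

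\emph{Step 1 (von Mises expansion).} For generic bounded nuisances $\bar{\alpha}_{\delta}, \bar{\beta}_{\delta}$ with $\bar{\alpha}_{\delta}$ bounded away from zero, I would compute the population bias of the plug-in score by conditioning on $X$ and using the identity $\mathbb{E}_P(A e^{\delta A} \mid X) = \alpha_{\delta}(X)\beta_{\delta}(X)$. The conditional mean of the score telescopes to $\frac{\alpha_{\delta}(X)}{\bar{\alpha}_{\delta}(X)}\{\beta_{\delta}(X) - \bar{\beta}_{\delta}(X)\} + \bar{\beta}_{\delta}(X) - \mathbb{E}_P(A \mid X)$, and since $\theta_{\delta} = \mathbb{E}_P\{\beta_{\delta}(X) - A\}$ this gives
\[\mathbb{E}_P\big[\Psi_{\theta_{\delta}}(O; \bar{\alpha}_{\delta}, \bar{\beta}_{\delta})\big] - \theta_{\delta} = \mathbb{E}_P\!\left[\frac{\{\alpha_{\delta}(X) - \bar{\alpha}_{\delta}(X)\}\{\beta_{\delta}(X) - \bar{\beta}_{\delta}(X)\}}{\bar{\alpha}_{\delta}(X)}\right] \eqqcolon R_{\theta_{\delta}}(\bar{\alpha}_{\delta}, \bar{\beta}_{\delta});\]
in particular, at the truth $\mathbb{E}_P[\Psi_{\theta_{\delta}}(O)] = \theta_{\delta}$, so $\Psi_{\theta_{\delta}}(O) - \theta_{\delta}$ is mean zero. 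Cauchy--Schwarz together with $\lVert 1/\widehat{\alpha}_{\delta}\rVert_{\infty} \leq C$ then yields $|R_{\theta_{\delta}}(\widehat{\alpha}_{\delta}, \widehat{\beta}_{\delta})| \leq C \, \lVert \widehat{\alpha}_{\delta} - \alpha_{\delta}\rVert \, \lVert \widehat{\beta}_{\delta} - \beta_{\delta}\rVert$, which is the advertised remainder.

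\emph{Step 2 (drift term).} Because $(\widehat{\alpha}_{\delta}, \widehat{\beta}_{\delta})$ are estimated on an independent split, conditionally on that split $\widehat{\Psi}_{\theta_{\delta}}$ is a fixed function, and I decompose
\[\widehat{\theta}_{\delta} - \theta_{\delta} = (\mathbb{P}_n - \mathbb{E}_P)[\Psi_{\theta_{\delta}}] + (\mathbb{P}_n - \mathbb{E}_P)[\widehat{\Psi}_{\theta_{\delta}} - \Psi_{\theta_{\delta}}] + \big\{\mathbb{E}_P[\widehat{\Psi}_{\theta_{\delta}}] - \theta_{\delta}\big\},\]
where the last term equals $R_{\theta_{\delta}}(\widehat{\alpha}_{\delta}, \widehat{\beta}_{\delta})$ from Step~1. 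Conditionally on the training split, the middle term has mean zero and variance $n^{-1}\lVert \widehat{\Psi}_{\theta_{\delta}} - \Psi_{\theta_{\delta}}\rVert^2$. Using the algebraic identity $\widehat{\Psi}_{\theta_{\delta}} - \Psi_{\theta_{\delta}} = e^{\delta A}(A - \beta_{\delta})(\widehat{\alpha}_{\delta}^{-1} - \alpha_{\delta}^{-1}) + (\widehat{\beta}_{\delta} - \beta_{\delta})(1 - e^{\delta A}\widehat{\alpha}_{\delta}^{-1})$ and the boundedness of $A, e^{\delta A}, \alpha_{\delta}^{-1}, \widehat{\alpha}_{\delta}^{-1}, \beta_{\delta}, \widehat{\beta}_{\delta}$, this norm is $O_P(\lVert \widehat{\alpha}_{\delta} - \alpha_{\delta}\rVert + \lVert \widehat{\beta}_{\delta} - \beta_{\delta}\rVert) = o_P(1)$ under $L_2$-consistency of the nuisances, so a conditional Chebyshev inequality renders the middle term $o_P(n^{-1/2})$.

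\emph{Step 3 (conclusion).} Combining Steps~1 and 2 gives exactly $\widehat{\theta}_{\delta} - \theta_{\delta} = (\mathbb{P}_n - \mathbb{E}_P)[\Psi_{\theta_{\delta}}] + O_P(\lVert \widehat{\alpha}_{\delta} - \alpha_{\delta}\rVert \lVert \widehat{\beta}_{\delta} - \beta_{\delta}\rVert) + o_P(n^{-1/2})$. When the product rate condition holds, the remainder is $o_P(n^{-1/2})$ and $(\mathbb{P}_n - \mathbb{E}_P)[\Psi_{\theta_{\delta}}]$ is an average of i.i.d., mean-zero, bounded (hence finite-variance) terms, so the classical CLT plus Slutsky deliver $\sqrt{n}(\widehat{\theta}_{\delta} - \theta_{\delta}) \overset{d}{\to} \mathcal{N}(0, \mathrm{Var}\{\Psi_{\theta_{\delta}}(O)\})$. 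The argument is essentially routine once the expansion in Step~1 is identified; the only point needing care is establishing $\lVert \widehat{\Psi}_{\theta_{\delta}} - \Psi_{\theta_{\delta}}\rVert = o_P(1)$, which requires $L_2$-consistency of $\widehat{\beta}_{\delta}$ as well as of $\widehat{\alpha}_{\delta}$ --- this is automatic whenever the theorem's conclusion is informative (a product remainder that is $o_P(n^{-1/2})$ forces $\lVert \widehat{\beta}_{\delta} - \beta_{\delta}\rVert \overset{P}{\to} 0$), and in any case $\beta_{\delta} = \partial_{\delta}\log\alpha_{\delta}$, so a sufficiently smooth estimate of $\alpha_{\delta}$ furnishes a consistent $\widehat{\beta}_{\delta}$.
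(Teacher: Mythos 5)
Your proposal is correct and follows essentially the same route as the paper: the paper proves this result by pointing to the error decomposition used for $\widehat{\tau}_{\delta}$ in Theorem~\ref{thm:est-cont-inc} with $\beta_{\delta}$ in place of $\gamma_{\delta}$, which is exactly your one-step decomposition into the oracle empirical-process term, the von Mises remainder $\mathbb{E}_P\big[\{\widehat{\alpha}_{\delta}-\alpha_{\delta}\}\{\widehat{\beta}_{\delta}-\beta_{\delta}\}/\widehat{\alpha}_{\delta}\big]$ (which you derive explicitly and correctly via $\mathbb{E}_P(Ae^{\delta A}\mid X)=\alpha_{\delta}\beta_{\delta}$), and the cross-fitted drift term handled by conditional Chebyshev (the content of Lemma~2 of Kennedy cited in the paper). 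Your closing observation that controlling the drift term needs $\lVert\widehat{\beta}_{\delta}-\beta_{\delta}\rVert=o_P(1)$, which the theorem's hypotheses omit but the paper's argument implicitly uses, is a fair and accurate caveat rather than a flaw in your proof.
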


On the basis of this result, for $\delta > 0$ we estimate the bounds on $\mathbb{E}(Y(d_{Q_{\delta}}^{\mathrm{mon}}))$ in Model~\ref{mod:sens-outcome-cont} (for $p = 1$ and non-negative treatment) as $\widehat{\tau}_{\delta} \pm \Gamma \cdot \widehat{\theta}_{\delta}$, with $(1 - \alpha)$-level confidence interval for the upper bound given by
\[\left\{\widehat{\tau}_{\delta} + \Gamma \cdot \widehat{\theta}_{\delta}\right\} \pm z_{1 - \alpha/2} \sqrt{\frac{1}{n}\mathbb{P}_n\left[\left(\widehat{\Psi}_{\tau_{\delta}} - \widehat{\tau}_{\delta} + \Gamma \{\widehat{\Psi}_{\theta_{\delta}} - \widehat{\theta}_{\delta}\}\right)^2\right]},\]
and similarly for the lower bound (as well as cases where $\delta < 0$) with appropriate sign changes.

\subsection{Estimators for binary treatment}\label{sec:est-bin}

In view of Theorem~\ref{thm:maximal-bin}, the maximal $q$-policy is optimally sharp in the sensitivity models we consider. Further, the bounds on $\mathbb{E}(Y(d_q^*))$ for sensitivity Model~\ref{mod:sens-outcome} are
$\tau_q \pm \Gamma\mathbb{E}(|q - \pi|)$, where $\tau_q = \mathbb{E}((1 - q)\mu_0 + q\mu_1)$, and the bounds in Model~\ref{mod:sens-prop} are
\begin{align*}
    & \big[\tau_q + \mathbb{E}\left(\mathbb{E}(Y\{g_s^-(X, Y) - 1\} \mid X, A = s) |q - \pi|\right), \\
    & \quad \tau_q + \mathbb{E}\left(\mathbb{E}(Y\{g_s^+(X, Y) - 1\} \mid X, A = s) |q - \pi|\right)\big]
\end{align*}
where $s(X) = \mathds{1}(q(X) > \pi(X))$. These results hold for any alternative distribution $q$, but we will focus our estimation results on the incremental policy $q_{\delta}$ defined in~\eqref{eq:incremental}.

Recall that \citet{kennedy2019} developed efficiency theory and a one-step estimator for the smooth functional $\tau_{\delta} = \mathbb{E}((1 - q_{\delta}) \mu_0 + q_{\delta}\mu_1)$. Moreover, $q_{\delta}$ satisfies
\[|q_{\delta} - \pi| = \pi\left|\frac{e^{\delta}}{e^{\delta}\pi + (1 - \pi)} - 1\right| = \frac{|e^{\delta} -1|\pi (1 - \pi)}{e^{\delta}\pi + (1 - \pi)}.\]
Thus, estimation of bounds for the maximal $q_{\delta}$-policy effect, $\mathbb{E}(Y(d_{q_{\delta}}^*))$, in the outcome-based sensitivity Model~\ref{mod:sens-outcome} boils down to estimation of the functional $\chi_{\delta} \coloneqq \mathbb{E}\left(\frac{\pi(1 - \pi)}{e^{\delta} \pi + (1 - \pi)}\right)$: by Theorem~\ref{thm:maximal-bin}, bounds are given by $\mathbb{E}(Y(d_{q_{\delta}}^*)) \in \tau_{\delta} \pm \Gamma |e^{\delta} - 1|\chi_{\delta}$.
In the next result, we give the von Mises expansions for $\tau_{\delta}$ and $\chi_{\delta}$. To simplify notation, define
\[\phi_0(O) \coloneqq \frac{1 - A}{1 - \pi(X)}(Y - \mu_0(X)) + \mu_0(X), \text{ and } \phi_1(O) \coloneqq \frac{A}{\pi(X)}(Y - \mu_1(X)) + \mu_1(X).\]
That is, $\phi_a$ is the uncentered influence function of the functional $\mathbb{E}(\mu_a(X))$.

\begin{proposition}\label{prop:von-Mises-bin}
    The functionals $\tau_{\delta}$ and $\chi_{\delta}$ satisfy the following von Mises expansions:
    \[\tau_{\delta}(\overline{P}) - \tau_{\delta}(P) = - \mathbb{E}_P(\dot{\tau}_{\delta}(O; \overline{P})) + R_{\tau_{\delta}}(\overline{P}, P),\]
    \[\chi_{\delta}(\overline{P}) - \chi_{\delta}(P) = - \mathbb{E}_P(\dot{\chi}_{\delta}(O; \overline{P})) + R_{\chi_{\delta}}(\overline{P}, P),\]
    where the influence functions $\dot{\tau}_{\delta}(O;P)$ and $\dot{\tau}_{\delta}(O;P)$ are
    \[\dot{\tau}_{\delta}(O; P) = \frac{e^{\delta} \pi \phi_1 + (1 - \pi)\phi_0}{e^{\delta}\pi + (1 - \pi)} +  \frac{e^{\delta}(\mu_1 - \mu_0)(A - \pi)}{\{e^{\delta}\pi + (1 - \pi)\}^2} - \tau_{\delta}(P),\]
    \[\dot{\chi}_{\delta}(O; P) = \frac{\pi(1 - \pi)}{e^{\delta} \pi + (1 - \pi)} + \frac{\{(1 - \pi)^2 - e^{\delta}\pi^2\}(A - \pi)}{\{e^{\delta}\pi + (1 - \pi)\}^2} - \chi_{\delta}(P),\]
    and the remainder terms are given by
    \[R_{\tau_{\delta}}(\overline{P}, P) = \mathbb{E}_P\left(\{\overline{\gamma}_{\delta} - \gamma_{\delta}\}\left\{1 - \frac{\alpha_{\delta}}{\overline{\alpha}_{\delta}}\right\}\right),\]
    \[R_{\chi_{\delta}}(\overline{P}, P) = \mathbb{E}_P\left(\frac{e^{\delta}}{\overline{\alpha}_{\delta}^2 \alpha_{\delta}}\{\overline{\pi} - \pi\}^2\right),\]
    where $\alpha_{\delta} = e^{\delta}\pi + (1 - \pi)$ and $\gamma_{\delta} = \frac{e^{\delta}\pi\mu_1 + (1 - \pi)\mu_0}{\alpha_{\delta}}$.
\end{proposition}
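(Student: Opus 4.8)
The plan is to treat the two functionals separately; in each case I will exhibit a candidate influence function, reduce the asserted von Mises identity to a pointwise algebraic statement in $X$, and then verify that statement. Start with $\tau_{\delta}$. Since $A\in\{0,1\}$, we have $\mathbb{E}_P(e^{\delta A}\mid X)=e^{\delta}\pi+(1-\pi)=\alpha_{\delta}(X)$ and $\mathbb{E}_P(e^{\delta A}Y\mid X)=e^{\delta}\pi\mu_1+(1-\pi)\mu_0=\alpha_{\delta}(X)\gamma_{\delta}(X)$, so $\tau_{\delta}$ is exactly the functional of Proposition~\ref{prop:von-Mises} specialized to binary $A$, and the $\gamma_{\delta},\alpha_{\delta}$ appearing there coincide with those defined here. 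Its von Mises expansion, including the remainder $R_{\tau_{\delta}}(\overline{P},P)=\mathbb{E}_P(\{\overline{\gamma}_{\delta}-\gamma_{\delta}\}\{1-\alpha_{\delta}/\overline{\alpha}_{\delta}\})$, is therefore inherited verbatim. It remains only to rewrite $\dot{\tau}_{\delta}(O;P)=\tfrac{e^{\delta A}}{\alpha_{\delta}}(Y-\gamma_{\delta})+\gamma_{\delta}-\tau_{\delta}$ in the stated form; I would do this by evaluating at $A=1$ and $A=0$ separately, substituting $\phi_1=\tfrac{A}{\pi}(Y-\mu_1)+\mu_1$ and $\phi_0=\tfrac{1-A}{1-\pi}(Y-\mu_0)+\mu_0$, and using $\alpha_{\delta}-e^{\delta}=(1-\pi)(1-e^{\delta})$ and $\alpha_{\delta}-1=\pi(e^{\delta}-1)$, after which both expressions collapse to the same function of $(X,Y)$ in each case.

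Next consider $\chi_{\delta}$. Write $\chi_{\delta}=\mathbb{E}_P[m(\pi(X))]$ with $m(p)=\tfrac{p(1-p)}{1+(e^{\delta}-1)p}$, so that $\chi_{\delta}$ depends on $P$ only through $\pi=\mathbb{E}_P(A\mid X)$ and the law of $X$. The efficient influence function of a smooth map $P\mapsto\mathbb{E}_P[g(\pi(X))]$ in the nonparametric model is $g(\pi(X))-\mathbb{E}_P[g(\pi(X))]+g'(\pi(X))\{A-\pi(X)\}$, obtained by perturbing both the outer expectation and the regression $\pi$ along a one-dimensional submodel; taking $g=m$ and simplifying the quotient-rule derivative, whose numerator collapses to $(1-\pi)^2-e^{\delta}\pi^2$, yields exactly the stated $\dot{\chi}_{\delta}$, and mean-zero under $P$ is immediate from $\mathbb{E}_P(A-\pi\mid X)=0$. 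For the remainder, I would expand $\mathbb{E}_P[\dot{\chi}_{\delta}(O;\overline{P})]$, replace $A$ by its conditional mean $\pi$ in the last term, and cancel the two occurrences of $\chi_{\delta}(\overline{P})$ against $\chi_{\delta}(P)=\mathbb{E}_P[m(\pi)]$; this leaves $R_{\chi_{\delta}}(\overline{P},P)=\mathbb{E}_P[m(\overline{\pi})-m(\pi)-m'(\overline{\pi})(\overline{\pi}-\pi)]$, the exact second-order Taylor remainder of $m$, with the marginal-of-$X$ dependence dropping out. It then suffices to show, pointwise in $X$, that $m(\overline{\pi})-m(\pi)-m'(\overline{\pi})(\overline{\pi}-\pi)=e^{\delta}(\overline{\pi}-\pi)^2/(\overline{\alpha}_{\delta}^2\,\alpha_{\delta})$.

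I would dispatch this last identity via the partial-fraction decomposition $m(p)=-\tfrac{p}{e^{\delta}-1}+\tfrac{e^{\delta}}{(e^{\delta}-1)^2}-\tfrac{e^{\delta}}{(e^{\delta}-1)^2}\cdot\tfrac{1}{\alpha_{\delta}(p)}$, valid for $\delta\neq0$: since the affine part of $m$ contributes nothing to a second-order remainder, the claim reduces to the elementary second-order Taylor-remainder identity for $p\mapsto 1/\alpha_{\delta}(p)$, which is checked by placing the difference over the common denominator $\overline{\alpha}_{\delta}^2\alpha_{\delta}$; the case $\delta=0$ follows by continuity (or directly, as $m(p)=p-p^2$ there). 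The main obstacle is bookkeeping rather than anything conceptual---the case-by-case rewriting of $\dot{\tau}_{\delta}$, the simplification of $m'$, and the second-order remainder of $m$---and the partial-fraction trick is what keeps the last step short; brute force over the common denominator $\overline{\alpha}_{\delta}^2\alpha_{\delta}$ also works but is messier. Finally, it is worth noting that the proposition only asserts that the displayed expansion identities hold for all $(\overline{P},P)$ in the nonparametric model, which the above computations establish directly; that $\dot{\tau}_{\delta}$ and $\dot{\chi}_{\delta}$ are the efficient influence functions then follows from their mean-zero property together with uniqueness of the influence function in a nonparametric model.
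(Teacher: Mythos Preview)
Your proposal is correct. For $\tau_{\delta}$ you do exactly what the paper does---invoke Proposition~\ref{prop:von-Mises} specialized to binary $A$---with the added (and useful) detail of how to verify that the two influence-function displays agree. For $\chi_{\delta}$, however, your route genuinely differs from the paper's. The paper simply puts $\chi_{\delta}(\overline{P})-\chi_{\delta}(P)+\mathbb{E}_P[\dot{\chi}_{\delta}(O;\overline{P})]$ over the common denominator $\overline{\alpha}_{\delta}^2\alpha_{\delta}$ and performs what it calls a ``(tedious) expansion and simplification of the numerator'' to obtain $e^{\delta}(\overline{\pi}-\pi)^2$. You instead recognize that the remainder is exactly the second-order Taylor remainder $m(\overline{\pi})-m(\pi)-m'(\overline{\pi})(\overline{\pi}-\pi)$ of $m(p)=p(1-p)/\alpha_{\delta}(p)$, then use the partial-fraction decomposition to strip off the affine part of $m$ and reduce to the elementary remainder of $p\mapsto 1/\alpha_{\delta}(p)$. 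Your approach is cleaner and more structural: it explains \emph{why} the remainder is a pure square in $\overline{\pi}-\pi$ (second-order Taylor remainder of a scalar function) and avoids any multivariate polynomial bookkeeping. The paper's brute-force computation, by contrast, requires no auxiliary identities and works uniformly for all $\delta$ without a separate $\delta=0$ case, but at the cost of an opaque algebraic step.
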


From Proposition~\ref{prop:von-Mises-bin}, we can immediately obtain one-step estimators for $\tau_{\delta}$ and $\chi_{\delta}$. Estimating $(\widehat{\pi}, \widehat{\mu}_0, \widehat{\mu}_1)$ on separate independent data, and letting $\widehat{\phi}_0 = \frac{1 - A}{1 - \widehat{\pi}}(Y - \widehat{\mu}_0) + \widehat{\mu}_0$, $\widehat{\phi}_1 = \frac{A}{\widehat{\pi}}(Y - \widehat{\mu}_1) + \widehat{\mu}_1$, we take $\widehat{\tau}_{\delta} = \mathbb{P}_n[\widehat{\psi}_{\tau_{\delta}}]$ and $\widehat{\chi}_{\delta} = \mathbb{P}_n[\widehat{\psi}_{\chi_{\delta}}]$, where
\[\widehat{\psi}_{\tau_{\delta}} = \frac{e^{\delta} \widehat{\pi} \widehat{\phi}_1 + (1 - \widehat{\pi})\widehat{\phi}_0}{e^{\delta}\widehat{\pi} + (1 - \widehat{\pi})} +  \frac{e^{\delta}(\widehat{\mu}_1 - \widehat{\mu}_0)(A - \widehat{\pi})}{\{e^{\delta}\widehat{\pi} + (1 - \widehat{\pi})\}^2},\]
\[\widehat{\psi}_{\chi_{\delta}} = \frac{\widehat{\pi}(1 - \widehat{\pi})}{e^{\delta} \widehat{\pi} + (1 - \widehat{\pi})} + \frac{\{(1 - \widehat{\pi})^2 - e^{\delta}\widehat{\pi}^2\}(A - \widehat{\pi})}{\{e^{\delta}\widehat{\pi} + (1 - \widehat{\pi})\}^2}.\]
Note that $\widehat{\tau}_{\delta}$ is identical to the estimator proposed in~\citet{kennedy2019}. 

\begin{theorem}\label{thm:est-bin-inc}
    Assume that $\lVert \widehat{\pi} - \pi\rVert + \lVert \widehat{\mu}_0 - \mu_0\rVert + \lVert \widehat{\mu}_1 - \mu_1\rVert = o_P(1)$, and that $P[|Y| \leq C] = P[|\widehat{\mu}_0| \leq C] = P[|\widehat{\mu}_1| \leq C]=1$, for some $C < \infty$. Then
    \[\widehat{\tau}_{\delta} - \tau_{\delta} = \mathbb{P}_n[\dot{\tau}_{\delta}(O;P)] + O_P\left(\lVert \widehat{\pi} - \pi\rVert^2 + \lVert \widehat{\pi} - \pi\rVert\left\{\widehat{\mu}_0 - \mu_0\rVert + \lVert \widehat{\mu}_1 - \mu_1\rVert\right\}\right)+ o_P(n^{-1/2}),\]
    \[\widehat{\chi}_{\delta} - \chi_{\delta} = \mathbb{P}_n[\dot{\chi}_{\delta}(O;P)] + O_P\left(\lVert \widehat{\pi} - \pi\rVert^2\right)+ o_P(n^{-1/2}).\]
    In particular, if $\lVert \widehat{\pi} - \pi\rVert = o_P(n^{-1/4})$, then
    \[\sqrt{n}(\widehat{\chi}_{\delta} - \chi_{\delta}) \overset{d}{\to} \mathcal{N}(0, \mathrm{Var}\{\dot{\chi}_{\delta}(O;P)\}),\]
    and if additionally, $\lVert\widehat{\mu}_0 - \mu_0\rVert + \lVert \widehat{\mu}_1 - \mu_1\rVert = o_P(n^{-1/4})$, then
    \[\sqrt{n}(\widehat{\tau}_{\delta} - \tau_{\delta}) \overset{d}{\to} \mathcal{N}(0, \mathrm{Var}\{\dot{\tau}_{\delta}(O;P)\}),\]
    so that $\widehat{\tau}_{\delta}$ and $\widehat{\chi}_{\delta}$ are nonparametric efficient.
\end{theorem}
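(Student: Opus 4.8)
The plan is to run the standard one-step (von Mises) estimator analysis, leveraging Proposition~\ref{prop:von-Mises-bin} and the fact that $(\widehat{\pi},\widehat{\mu}_0,\widehat{\mu}_1)$ are fit on an independent sample. A useful preliminary observation is that $\widehat{\psi}_{\tau_\delta}$ and $\widehat{\psi}_{\chi_\delta}$ involve division only by $\widehat{\alpha}_\delta = e^\delta\widehat{\pi}+(1-\widehat{\pi})$: indeed, in $\widehat{\psi}_{\tau_\delta}$ the factors $\widehat{\pi}\widehat{\phi}_1=A(Y-\widehat{\mu}_1)+\widehat{\pi}\widehat{\mu}_1$ and $(1-\widehat{\pi})\widehat{\phi}_0=(1-A)(Y-\widehat{\mu}_0)+(1-\widehat{\pi})\widehat{\mu}_0$ involve no inverse propensity weights. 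Taking $\widehat{\pi}$ to be $[0,1]$-valued (by truncation, without loss), both $\widehat{\alpha}_\delta$ and $\alpha_\delta$ lie in the fixed compact interval $[\min\{1,e^\delta\},\max\{1,e^\delta\}]$, hence are bounded and bounded away from zero for free, so no positivity-type assumption beyond those stated is needed. Writing $\psi_{\tau_\delta}(O;\bar{P}) = \dot{\tau}_\delta(O;\bar{P})+\tau_\delta(\bar{P})$ for the uncentered influence function, so that $\widehat{\psi}_{\tau_\delta}=\psi_{\tau_\delta}(O;\widehat{P})$, I would decompose
\[
\widehat{\tau}_\delta - \tau_\delta = \underbrace{(\mathbb{P}_n - P)\big[\psi_{\tau_\delta}(\cdot\,;\widehat{P}) - \psi_{\tau_\delta}(\cdot\,;P)\big]}_{\textrm{(i) empirical process}} + \underbrace{\mathbb{P}_n[\dot{\tau}_\delta(O;P)]}_{\textrm{(ii) linear term}} + \underbrace{R_{\tau_\delta}(\widehat{P},P)}_{\textrm{(iii) drift}},
\]
where the identification of the drift term with $R_{\tau_\delta}(\widehat{P},P)$ is exactly Proposition~\ref{prop:von-Mises-bin} applied at $\bar{P}=\widehat{P}$; the analogous decomposition holds for $\widehat{\chi}_\delta$ with $R_{\chi_\delta}$ in place of $R_{\tau_\delta}$.

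\textbf{Drift term.} Using $\widehat{\alpha}_\delta - \alpha_\delta = (e^\delta - 1)(\widehat{\pi}-\pi)$ and the bounds on the $\alpha$'s, one gets $|1 - \alpha_\delta/\widehat{\alpha}_\delta| = |\widehat{\alpha}_\delta - \alpha_\delta|/|\widehat{\alpha}_\delta| \lesssim |\widehat{\pi}-\pi|$; and a short algebraic expansion of the ratio $\gamma_\delta = \{e^\delta\pi\mu_1+(1-\pi)\mu_0\}/\alpha_\delta$, using boundedness of $\mu_a,\widehat{\mu}_a$ and of the denominators away from zero, gives $|\widehat{\gamma}_\delta - \gamma_\delta| \lesssim |\widehat{\pi}-\pi| + |\widehat{\mu}_0-\mu_0| + |\widehat{\mu}_1-\mu_1|$ pointwise. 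Cauchy--Schwarz applied to $R_{\tau_\delta}(\widehat{P},P) = \mathbb{E}_P(\{\widehat{\gamma}_\delta-\gamma_\delta\}\{1-\alpha_\delta/\widehat{\alpha}_\delta\})$ then yields $|R_{\tau_\delta}(\widehat{P},P)| \lesssim \|\widehat{\pi}-\pi\|\big(\|\widehat{\pi}-\pi\| + \|\widehat{\mu}_0-\mu_0\| + \|\widehat{\mu}_1-\mu_1\|\big)$, the claimed remainder. For $\chi_\delta$ there is nothing to expand: $R_{\chi_\delta}(\widehat{P},P) = \mathbb{E}_P(e^\delta\{\widehat{\pi}-\pi\}^2/(\widehat{\alpha}_\delta^2\alpha_\delta)) \lesssim \|\widehat{\pi}-\pi\|^2$.

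\textbf{Empirical-process term, assembly, and efficiency.} Conditioning on the independent nuisance sample, term (i) is a centered average of a fixed bounded function, so $\sqrt{n}(\mathbb{P}_n-P)[\widehat{\psi}_{\tau_\delta}-\psi_{\tau_\delta}]$ has conditional mean zero and conditional variance at most $\|\widehat{\psi}_{\tau_\delta}-\psi_{\tau_\delta}\|^2$, which is $o_P(1)$ by the consistency hypothesis, the boundedness hypotheses, and the same Lipschitz-in-the-nuisances bounds as above; a standard conditional-Markov/subsequence argument (see, e.g., \citet{chernozhukov2018}) then gives term (i) $= o_P(n^{-1/2})$, and likewise for $\widehat{\chi}_\delta$. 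Combining the three terms gives the two displayed expansions. Since $\dot{\tau}_\delta(O;P)$ and $\dot{\chi}_\delta(O;P)$ are bounded, hence square-integrable and mean-zero, the classical CLT applies to $\sqrt{n}\,\mathbb{P}_n[\dot{\tau}_\delta(O;P)]$ and $\sqrt{n}\,\mathbb{P}_n[\dot{\chi}_\delta(O;P)]$; under $\|\widehat{\pi}-\pi\|=o_P(n^{-1/4})$ the $\chi_\delta$-remainder is $o_P(n^{-1/2})$, giving asymptotic normality of $\widehat{\chi}_\delta$, and if additionally $\|\widehat{\mu}_0-\mu_0\|+\|\widehat{\mu}_1-\mu_1\|=o_P(n^{-1/4})$ the $\tau_\delta$-remainder is $o_P(n^{-1/2})$, giving that of $\widehat{\tau}_\delta$. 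Efficiency is immediate: in the nonparametric model the influence function is unique, so an asymptotically linear estimator with influence function equal to the canonical gradient---$\dot{\tau}_\delta(O;P)$, the efficient influence function of $\tau_\delta$ from \citet{kennedy2019}, respectively $\dot{\chi}_\delta(O;P)$, which one checks directly to be a gradient of $\chi_\delta$---attains the efficiency bound.

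\textbf{Main obstacle.} The only genuinely model-specific work is the drift-term bookkeeping: one must verify that $R_{\tau_\delta}$ factors as a $\widehat{\pi}$-error times an outcome-model error plus a $\widehat{\pi}$-squared term (the Neyman-orthogonality/double-robustness structure), which comes down to tracking how the errors in $\widehat{\pi},\widehat{\mu}_0,\widehat{\mu}_1$ propagate through the smooth maps $\pi\mapsto\alpha_\delta$ and $(\pi,\mu_0,\mu_1)\mapsto\gamma_\delta$, all on the relevant bounded region. Everything else---the CLT and the sample-splitting control of the empirical-process remainder---is routine once the nuisances are fit out of sample.
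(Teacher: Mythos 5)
Your proposal is correct and follows essentially the same route as the paper: the same three-term decomposition (linear term, drift identified with $R_{\tau_\delta}$ and $R_{\chi_\delta}$ via Proposition~\ref{prop:von-Mises-bin}, and a sample-splitting empirical-process term), with the drift bounded by Cauchy--Schwarz using $\widehat{\alpha}_\delta-\alpha_\delta=(e^\delta-1)(\widehat{\pi}-\pi)$ and the empirical-process term handled by the conditional Markov/Chebyshev argument (the paper cites Lemma 2 of Kennedy, 2020, for exactly this), then the CLT and the standard efficiency conclusion. Your explicit remark that $\widehat{\alpha}_\delta\in[\min\{1,e^\delta\},\max\{1,e^\delta\}]$ so no extra positivity condition is needed is a nice touch that the paper leaves implicit, but it does not change the argument.
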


Putting our results together, the proposed estimated bounds in the outcome-based sensitivity models are
$\widehat{\tau}_{\delta} \pm \Gamma |e^{\delta} - 1|\widehat{\chi}_{\delta}$.
From Theorem~\ref{thm:est-bin-inc}, we can obtain asymptotically valid Wald-based confidence intervals for the lower and upper bounds. Concretely, a $(1 - \alpha)$-level confidence interval for the upper bound is given by
\[\{\widehat{\tau}_{\delta} + \Gamma|e^{\delta} - 1|\widehat{\chi}_{\delta}\} \pm z_{1 - \alpha/2} \sqrt{\frac{1}{n}\mathbb{P}_n\left[\left(\widehat{\psi}_{\tau_{\delta}} - \widehat{\tau}_{\delta} + \Gamma|e^{\delta} - 1|(\widehat{\psi}_{\chi_{\delta}} - \widehat{\chi}_{\delta})\right)^2\right]}.\]
The confidence interval for the lower bound is the same but with $\Gamma$ replaced by $- \Gamma$.

To end this section, we consider bound estimation for incremental effects in the marginal sensitivity model, i.e., Model~\ref{mod:sens-prop}. In particular, when $\delta > 0$, the bounds on $\mathbb{E}(Y(d_{q_{\delta}}^*))$ are \[\left[\tau_{\delta} - (e^{\delta} - 1)\zeta_{\delta,1}^-, \tau_{\delta} + (e^{\delta} - 1)\zeta_{\delta,1}^+\right],\]
and when $\delta < 0$, the bounds are
\[\left[\tau_{\delta} - (1 - e^{\delta})\zeta_{\delta,0}^-, \tau_{\delta} + (1 - e^{\delta})\zeta_{\delta,0}^+\right],\]
where
\[\zeta_{\delta,a}^- \coloneqq \mathbb{E}\left(\mathbb{E}(Y\{1 - g_a^-(X,Y)\} \mid X, A = a) \frac{\pi(1 - \pi)}{e^{\delta}\pi + (1 - \pi)}\right),\]
\[\zeta_{\delta,a}^+ \coloneqq \mathbb{E}\left(\mathbb{E}(Y\{g_a^+(X,Y) - 1\} \mid X, A = a) \frac{\pi(1 - \pi)}{e^{\delta}\pi + (1 - \pi)}\right),\]
for $a \in \{0,1\}$. Recalling that $g_a^- \equiv \Gamma^{\mathrm{sign}(\gamma_a^- - Y)}$ and $g_a^+ \equiv \Gamma^{\mathrm{sign}(Y - \gamma_a^+)}$ involve indicators for $Y$ exceeding or falling below its conditional quantiles, the functionals $\zeta_{\delta,a}^-$ and $\zeta_{\delta,a}^+$ are not pathwise differentiable. Nonetheless, we can still construct flexible, debiased estimators under relatively weak assumptions, just as in~\citet{dorn2024} who study very similar estimands. We give results for the upper bound when $\delta > 0$ here (i.e., an estimator of $\zeta_{\delta,1}^+$), and provide corresponding results for the lower bound and $\delta < 0$ in Appendix~\ref{app:estimation}. Whereas~\citet{dorn2024} require that $Y$ has a continuous distribution, given $X$ and $A$, with bounded density, we work with the following slightly weaker assumption.

\begin{assumption}[Margin condition 2]\label{ass:margin-2}
For $a \in \{0,1\}$, the CDF $y \mapsto P[Y \leq y \mid X, A =a]$ is strictly increasing at $\gamma_a^-(X), \gamma_a^+(X)$, and for some $b > 0, C > 0$,
    \[P[|Y - \gamma_a^-(X)| \leq t \mid X, A = a] \leq C \cdot t^{b}, \, P[|Y - \gamma_a^+(X)| \leq t \mid X, A = a] \leq C \cdot t^{b}, \text{ for all } t \geq 0,\]
with probability 1 (here, the constants $b$ and $C$ are independent of $X$ and $a$).
\end{assumption}

Letting $\kappa_1^+(X) = \mathbb{E}((Y - \gamma_1^+(X))_+ \mid X, A = 1)$, where $z_+ = \max{\{z,0\}}$ for $z \in \mathbb{R}$, we can rewrite the functional $\zeta_{\delta,1}^+$ as
\[\zeta_{\delta,1}^+ = \mathbb{E}\left(\left\{\left(1 - \frac{1}{\Gamma}\right)(\gamma_1^+ - \mu_1) + \left(\Gamma - \frac{1}{\Gamma}\right)\kappa_{1}^+\right\}\frac{\pi (1 - \pi)}{e^{\delta}\pi + (1 - \pi)}\right).\] On the basis of this parametrization, we propose the debiased estimator $\widehat{\zeta}_{\delta,1}^+ = \mathbb{P}_n[\widehat{\varphi}_{\delta,1}^+]$, where $\widehat{\varphi}_{\delta,1}^+$ is obtained by plugging in $(\widehat{\pi}, \widehat{\mu}_1, \widehat{\gamma}_1^+, \widehat{\kappa}_1^+)$ into the expression
\begin{align*}
    \varphi_{\delta,1}^+(O) &= \frac{1 - \pi}{e^{\delta}\pi + (1 - \pi)}\bigg\{ -\left(1 - \frac{1}{\Gamma}\right)\left\{\pi \mu_1 + A (Y - \mu_1)\right\} \\
    & \quad \quad \quad \quad \quad + \left(\Gamma - \frac{1}{\Gamma}\right)\left\{\pi\left(\frac{1}{1 + \Gamma} \gamma_1^+ + \kappa_1^+\right) + A\left[(Y - \gamma_1^+)_+ - \kappa_1^+\right]\right\}\bigg\} \\
    & \quad \quad + \frac{(1 - \pi)^2 - e^{\delta}\pi^2}{\{e^{\delta}\pi + (1 - \pi)\}^2}(A - \pi)\left\{\left(1 - \frac{1}{\Gamma}\right)(\gamma_1^+ - \mu_1) + \left(\Gamma - \frac{1}{\Gamma}\right)\kappa_1^+\right\}.
\end{align*}

\begin{theorem}\label{thm:est-bin-inc-prop}
    Assume that $\lVert \widehat{\pi} - \pi\rVert  + \lVert \widehat{\mu}_1 - \mu_1\rVert + \lVert \widehat{\gamma}_1^+ - \gamma_1^+\rVert + \lVert \widehat{\kappa}_1^+ - \kappa_1^+\rVert= o_P(1)$, and that $P[|Y| \leq C] = P[|\widehat{\mu}_1| \leq C]= P[|\widehat{\gamma}_1^+| \leq C] = P[|\widehat{\kappa}_1^+| \leq C] = 1$, for some $C < \infty$. Then under Assumption~\ref{ass:margin-2},
    \[\widehat{\zeta}_{\delta,1}^+ - \zeta_{\delta,1}^+ = (\mathbb{P}_n - \mathbb{E}_P) [\varphi_{\delta,1}^+(O)] + O_P(R_{\delta,1}^+)+ o_P(n^{-1/2}),\]
    where
    \[R_{\delta,1}^+ = \lVert \widehat{\pi} - \pi\rVert^2 + \lVert \widehat{\pi} - \pi\rVert\left\{\lVert \widehat{\mu}_1 - \mu_1\rVert + \lVert\widehat{\gamma}_1^+ - \gamma_1^+\rVert +\lVert\widehat{\kappa}_1^+ - \kappa_1^+\rVert\right\} + \lVert \widehat{\gamma}_{1}^+ - \gamma_1^+\rVert_{1 + b}^{1+b}.\]
    In particular, if $R_{\delta,1}^+ = o_P(n^{-1/2})$, then $\sqrt{n}(\widehat{\zeta}_{\delta,1}^+ - \zeta_{\delta,1}^+) \overset{d}{\to} \mathcal{N}(0, \mathrm{Var}\{\varphi_{\delta,1}^+(O)\})$.
\end{theorem}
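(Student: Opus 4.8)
\emph{Proof plan.} The argument is a (conditional) one-step/von Mises expansion in the style of Proposition~\ref{prop:von-Mises-bin}, but adapted to handle the fact that $\gamma_1^+$ and $\kappa_1^+$ are not pathwise differentiable, in the spirit of \citet{dorn2024}. Write $\eta = (\pi,\mu_1,\gamma_1^+,\kappa_1^+)$, $\widehat\eta = (\widehat\pi,\widehat\mu_1,\widehat\gamma_1^+,\widehat\kappa_1^+)$, and $\varphi(\cdot;\eta')$ for $\varphi_{\delta,1}^+$ with nuisances $\eta'$ plugged in. Since $\widehat\eta$ is fit on an independent sample, decompose
\[
\widehat\zeta_{\delta,1}^+ - \zeta_{\delta,1}^+ = (\mathbb{P}_n - \mathbb{E}_P)[\varphi(O;\eta)] + (\mathbb{P}_n - \mathbb{E}_P)[\varphi(O;\widehat\eta) - \varphi(O;\eta)] + \big(\mathbb{E}_P[\varphi(O;\widehat\eta)] - \zeta_{\delta,1}^+\big),
\]
where the last expectation is over a fresh $O$, conditional on the training sample. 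The first term is a centered i.i.d.\ average with finite variance. Here the key observation is that, using the identity $\big(\Gamma - \tfrac1\Gamma\big)\tfrac{1}{1+\Gamma} = 1 - \tfrac1\Gamma$, the expression for $\varphi_{\delta,1}^+$ collapses to $h(\pi)\big\{m(X) + \tfrac{A}{\pi}[\,{-}(1-\tfrac1\Gamma)(Y-\mu_1) + (\Gamma-\tfrac1\Gamma)((Y-\gamma_1^+)_+ - \kappa_1^+)]\big\} + h'(\pi)(A-\pi)m(X)$, with $h(\pi) = \tfrac{\pi(1-\pi)}{e^\delta\pi + 1 - \pi}$, $h'(\pi) = \tfrac{(1-\pi)^2 - e^\delta\pi^2}{(e^\delta\pi+1-\pi)^2}$, and $m(X) = (1-\tfrac1\Gamma)(\gamma_1^+ - \mu_1) + (\Gamma-\tfrac1\Gamma)\kappa_1^+$; crucially $h(\pi)\tfrac{A}{\pi} = \tfrac{A(1-\pi)}{e^\delta\pi + 1 - \pi}$, so no division by $\pi$ or $1-\pi$ ever occurs and $e^\delta\pi + 1 - \pi = 1 + (e^\delta-1)\pi$ is bounded away from $0$. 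Hence no positivity is needed, and under the stated boundedness the summand is bounded, which gives the CLT once the other two terms are $o_P(n^{-1/2})$.

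For the empirical-process term, condition on the training sample: $\varphi(\cdot;\widehat\eta) - \varphi(\cdot;\eta)$ is a fixed mean-zero function, so a conditional Chebyshev bound gives $(\mathbb{P}_n - \mathbb{E}_P)[\varphi(O;\widehat\eta) - \varphi(O;\eta)] = O_P\big(n^{-1/2}\lVert\varphi(\cdot;\widehat\eta) - \varphi(\cdot;\eta)\rVert\big)$. It therefore suffices to show $\lVert\varphi(\cdot;\widehat\eta) - \varphi(\cdot;\eta)\rVert = o_P(1)$, which follows from the assumed $L_2$-consistency of each nuisance and a Lipschitz estimate: $\varphi_{\delta,1}^+$ is a bounded rational function of $(\pi,\mu_1,\kappa_1^+)$ with denominators bounded below, and is $1$-Lipschitz in $\gamma_1^+$ via $(Y-\gamma_1^+)_+$, so $\lVert\varphi(\cdot;\widehat\eta) - \varphi(\cdot;\eta)\rVert \lesssim \lVert\widehat\pi-\pi\rVert + \lVert\widehat\mu_1-\mu_1\rVert + \lVert\widehat\gamma_1^+-\gamma_1^+\rVert + \lVert\widehat\kappa_1^+-\kappa_1^+\rVert$.

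The heart of the proof is showing the bias term is a second-order remainder, $\mathbb{E}_P[\varphi(O;\eta')] - \zeta_{\delta,1}^+ = R(\eta',\eta)$ with $|R(\eta',\eta)| \lesssim R_{\delta,1}^+$ for every fixed admissible $\eta'$. I would proceed as follows. (i) Integrate out $(A,Y)$ under the true law, using $\mathbb{E}_P[A\mid X]=\pi$, $\mathbb{E}_P[AY\mid X]=\pi\mu_1$, and $\mathbb{E}_P[A(Y-c)_+\mid X]=\pi\,G(c)$ with $G(c) := \mathbb{E}_P((Y-c)_+\mid X,A=1)$, to obtain, with $m' = m(X;\eta')$, $D' = e^\delta\pi'+1-\pi'$,
\[
\mathbb{E}_P[\varphi(O;\eta')\mid X] = h(\pi')m' - \tfrac{\pi(1-\pi')}{D'}(m'-m) + \tfrac{\pi(1-\pi')}{D'}\big(\Gamma-\tfrac1\Gamma\big)\mathrm{rem}_G - h'(\pi')(\pi'-\pi)m',
\]
where $\mathrm{rem}_G := G(\gamma_1^{+\prime}) - \kappa_1^+ + \tfrac{1}{1+\Gamma}(\gamma_1^{+\prime}-\gamma_1^+)$; the cancellation producing the clean $-(m'-m)$ term uses $\big(\Gamma-\tfrac1\Gamma\big)\tfrac{1}{1+\Gamma} = 1-\tfrac1\Gamma$ and $G(\gamma_1^+)=\kappa_1^+$. (ii) Bound $\mathrm{rem}_G$ by convexity of $G$ and Assumption~\ref{ass:margin-2}: since $G$ is convex with $G'(c) = -P[Y>c\mid X,A=1]$ and $P[Y>\gamma_1^+\mid X,A=1]=\tfrac{1}{1+\Gamma}$ (the margin condition forces $F_{Y\mid X,A=1}$ continuous at $\gamma_1^+$, and strict monotonicity there pins down the quantile level), one has $|\mathrm{rem}_G| = \big|\int_{\gamma_1^+}^{\gamma_1^{+\prime}}\{F_{Y\mid X,A=1}(t) - F_{Y\mid X,A=1}(\gamma_1^+)\}\,dt\big| \le \int_0^{|\gamma_1^{+\prime}-\gamma_1^+|} C s^b\,ds \lesssim |\gamma_1^{+\prime}-\gamma_1^+|^{1+b}$. (iii) Expand the $\pi'$-dependence around $\pi$ via $g(p) := \tfrac{1-p}{e^\delta p + 1 - p}$, $g'(p) = -\tfrac{e^\delta}{(e^\delta p+1-p)^2}$, together with the algebraic identity $\tfrac{1-\pi'}{D'} - h'(\pi') = \tfrac{e^\delta\pi'}{D'^2}$, and verify that the $O(\pi'-\pi)$ contributions with coefficient $m$ cancel exactly, leaving $h(\pi)m + O((\pi'-\pi)^2) + O((\pi'-\pi)(m'-m)) + O(|\gamma_1^{+\prime}-\gamma_1^+|^{1+b})$. (iv) Use $m'-m = (1-\tfrac1\Gamma)(e_\gamma - e_\mu) + (\Gamma-\tfrac1\Gamma)e_\kappa$, subtract $\zeta_{\delta,1}^+ = \mathbb{E}_P[h(\pi)m]$, integrate over $X$, and apply Cauchy--Schwarz to get $|R(\eta',\eta)| \lesssim \lVert e_\pi\rVert^2 + \lVert e_\pi\rVert(\lVert e_\mu\rVert + \lVert e_\gamma\rVert + \lVert e_\kappa\rVert) + \lVert e_\gamma\rVert_{1+b}^{1+b}$, which is $R_{\delta,1}^+$ when $\eta'=\widehat\eta$. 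Taking $\eta'=\eta$ gives $R(\eta,\eta)=0$, hence $\mathbb{E}_P[\varphi(O;\eta)]=\zeta_{\delta,1}^+$ and the first display term is genuinely centered; assembling the three terms yields the stated expansion, and the CLT follows when $R_{\delta,1}^+ = o_P(n^{-1/2})$. Passage from the single split to cross-fitting is routine.

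The main obstacle is steps (ii)--(iii): one must verify that $\varphi_{\delta,1}^+$ is conditionally Neyman-orthogonal with respect to \emph{both} $\gamma_1^+$ and $\kappa_1^+$ despite neither being pathwise differentiable — that is, the would-be $1/f_{Y\mid X,A=1}(\gamma_1^+)$ factor from the quantile influence function does not appear, and the only cost of estimating $\gamma_1^+$ is the higher-order margin term $\lVert\widehat\gamma_1^+-\gamma_1^+\rVert_{1+b}^{1+b}$. This rests on the exact algebraic identities above (so that the first-order $\gamma_1^+$- and $\pi$-errors cancel rather than merely being absorbed) and on careful bookkeeping showing the residual outcome-side errors appear only paired with $\widehat\pi-\pi$; this is precisely what pins down the form of $R_{\delta,1}^+$. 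The remaining bounding is standard.
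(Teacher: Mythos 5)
Your proposal is correct and takes essentially the same route as the paper's proof: the same three-term decomposition under sample splitting (your conditional Chebyshev bound for the empirical-process term is exactly the cited Lemma 2 of \citet{kennedy2020b}), the same conditional-on-$X$ bias computation with second-order cancellation in $\pi$ mirroring the $\chi_{\delta}$ remainder, and the resulting rate $R_{\delta,1}^+$. Your bound on $\mathrm{rem}_G$ in step (ii) is precisely the paper's Lemma~\ref{lemma:cvar}, which you establish via the convexity/CDF-integral representation of $c \mapsto \mathbb{E}((Y-c)_+ \mid X, A=1)$ rather than the paper's indicator-bounding argument---an equivalent use of Assumption~\ref{ass:margin-2}, so there is no gap.
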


Thus, when $\delta > 0$, the estimated sharp upper bound for $\mathbb{E}(Y(d_q^*))$ in Model~\ref{mod:sens-prop} is given by $\widehat{\tau}_{\delta} + (e^{\delta} - 1)\widehat{\zeta}_{\delta,1}^+$, and a Wald-based confidence interval (based on Theorems~\ref{thm:est-bin-inc} and \ref{thm:est-bin-inc-prop}) can be obtained as for previous estimators based on the empirical variance of $\widehat{\psi}_{\tau_{\delta}} + (e^{\delta} - 1)\widehat{\varphi}_{\delta,1}^+$.


\section{Discussion}\label{sec:discussion}
Here we showed that under violations of unconfoundedness, the causal effects of a policy---when allowed to depend on the natural value of treatment---are not characterized by the policy's induced treatment distribution. Importantly, the usual notion of a stochastic intervention (where treatment is assigned randomly on the basis only of covariate values) has the somewhat surprising property that bounds on its effects do not collapse to a point when the induced treatment distribution converges to that of the observational regime. In view of this fact, for the purpose of sensitivity analysis for unmeasured confounding, we propose instead to study what we term generalized policies, effectively unifying stochastic treatment rules and modified treatment policies. Although some similar policies have been conceptualized before (e.g., in~\citet{young2014} and~\citet{mauro2020}), they have not previously been studied in the sensitivity analysis context. Further, we develop a notion of optimality of a generalized policy among the class inducing the same treatment distribution, whereby the width of bounds on causal effects is minimized. For various sensitivity models, we characterize such optimal policies as well as sharp bounds on their causal effects. Finally, we develop flexible, efficient estimators of these sharp bounds for exponentially tilted target distributions. These statistical targets are non-standard and present new challenges, and the statistical methods developed in this work may be of independent interest beyond their use in causal sensitivity analysis. For instance, the bound functionals involve expected distances (e.g., total variation, Wasserstein) between conditional distributions, which are of interest more generally.

There are several avenues to pursue in future research, to build off this work. First, one could consider other sensitivity models beyond Models~\ref{mod:sens-bounded}--\ref{mod:sens-prop}, for instance the contamination model of~\citet{bonvini2022b} or the $L_2$-based propensity score model of~\citet{zhang2022}. Second, one could construct estimators of sharp bounds under treatment distributions beyond exponentially tilted targets. Third, it will be of interest to extend these methods to handle longitudinal settings with time-varying treatments. It is known, for instance, that point identification for longitudinal generalized policies requires stronger assumptions~\citep{young2014, diaz2023}. However, the extent to which our partial identification results  might extend to multiple timepoints remains an open question. Fourth, one possible trade-off we have not considered is that between bound length and efficiency in estimating the bounds themselves. That is, the ultimate causal effect uncertainty interval will depend on both the theoretical bound quantities and their confidence interval widths, and this could play an important role in finite samples. Lastly, it would be worth considering our ideas in the context of mediation analysis. Pure stochastic interventions have been proposed in the definition of new types of natural indirect and direct effects~\citep{vanderweele2014,diaz2024}, but analogues of generalized policies may yield interesting properties, especially for partial identification.

\section*{Acknowledgements}

EHK was supported by NSF CAREER Award 2047444.

\clearpage

\section*{References}
\vspace{-1cm}
\bibliographystyle{asa}
\bibliography{bibliography.bib}

\clearpage

\begin{appendices}

\section{Characterization of exponential tilts}\label{app:distance}

For the following results, we let $P$ be a probability measure on $(\mathbb{R}, \mathcal{B}(\mathbb{R}))$, and let $X \sim P$. Let $\Lambda_P \coloneqq \left\{\lambda \in \mathbb{R}: \mathbb{E}_P(e^{\lambda X}) < \infty\right\}$,  and define for any $\lambda \in \Lambda_P$ the tilted measure $Q_{\lambda}$ via
\[d Q_{\lambda}(x) = \frac{e^{\lambda x} \, dP(x)}{\mathbb{E}_P(e^{\lambda X})}.\]
Observe that $Q_{\lambda}$ and $P$ are mutually absolutely continuous, i.e., $Q_{\lambda} \ll P$ and $P \ll Q_{\lambda}$.

\begin{lemma}\label{lemma:KL-bound}
    Let $Q$ be any measure on $(\mathbb{R}, \mathcal{B}(\mathbb{R}))$ such that $\int x \, dQ(x) = m$. Then
    \[\mathrm{KL}(Q \mathrel{\Vert} P) \geq \sup_{\lambda \in \mathbb{R}} \left\{\lambda m - \log{\mathbb{E}_P(e^{\lambda X})}\right\}.\]
\end{lemma}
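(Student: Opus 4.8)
The statement is the classical lower bound on Kullback--Leibler divergence by the Legendre transform of the cumulant generating function, under a single moment constraint; I would prove it by a change-of-measure (tilting) argument. The plan is to fix an arbitrary $\lambda \in \Lambda_P$ and compare $Q$ against the tilted measure $Q_\lambda$ rather than directly against $P$, exploiting the fact that $\mathrm{KL}(Q \mathrel{\Vert} Q_\lambda) \geq 0$.

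First I would dispose of trivial cases: if $Q \not\ll P$, then $\mathrm{KL}(Q \mathrel{\Vert} P) = +\infty$ and the inequality holds vacuously, so assume $Q \ll P$. For $\lambda \notin \Lambda_P$ the quantity $\lambda m - \log \mathbb{E}_P(e^{\lambda X})$ is $-\infty$ by convention (the log of an infinite expectation), so the supremum on the right is effectively over $\lambda \in \Lambda_P$, where $Q_\lambda$ is well-defined and mutually absolutely continuous with $P$. Fix such a $\lambda$. The key algebraic identity is the decomposition of the log-likelihood ratio:
\[
\log \frac{dQ}{dP}(x) = \log \frac{dQ}{dQ_\lambda}(x) + \log \frac{dQ_\lambda}{dP}(x) = \log \frac{dQ}{dQ_\lambda}(x) + \lambda x - \log \mathbb{E}_P(e^{\lambda X}),
\]
valid $Q$-almost everywhere. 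Integrating against $Q$ gives
\[
\mathrm{KL}(Q \mathrel{\Vert} P) = \mathrm{KL}(Q \mathrel{\Vert} Q_\lambda) + \lambda \int x \, dQ(x) - \log \mathbb{E}_P(e^{\lambda X}) = \mathrm{KL}(Q \mathrel{\Vert} Q_\lambda) + \lambda m - \log \mathbb{E}_P(e^{\lambda X}).
\]
Since $\mathrm{KL}(Q \mathrel{\Vert} Q_\lambda) \geq 0$ by Gibbs' inequality (nonnegativity of relative entropy), this yields $\mathrm{KL}(Q \mathrel{\Vert} P) \geq \lambda m - \log \mathbb{E}_P(e^{\lambda X})$. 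Taking the supremum over $\lambda \in \mathbb{R}$ completes the proof.

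The main technical point requiring care — and the step I expect to be the chief obstacle — is justifying the integration that produces the middle display, specifically ensuring $\int \log \frac{dQ}{dQ_\lambda}\, dQ$ and $\int x\, dQ(x)$ are individually well-defined so that the additive splitting of $\mathrm{KL}(Q \mathrel{\Vert} P)$ is legitimate rather than an indeterminate $\infty - \infty$. The moment constraint $\int x\, dQ(x) = m$ is assumed finite, so that term is fine; and $\mathrm{KL}(Q \mathrel{\Vert} Q_\lambda) \in [0, \infty]$ is always well-defined since the integrand's negative part is integrable (again by Gibbs). If $\mathrm{KL}(Q \mathrel{\Vert} P) = \infty$ the bound is trivial; if it is finite, then rearranging the identity shows $\mathrm{KL}(Q \mathrel{\Vert} Q_\lambda)$ is finite too, and the manipulation is valid. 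One should also note $m \in (-\infty,\infty)$ forces $\int |x|\,dQ < \infty$ is not quite implied, but the integral $\int x\, dQ$ existing and equaling $m$ is exactly the hypothesis, which suffices here.
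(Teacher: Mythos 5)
Your proposal is correct and follows essentially the same route as the paper: both decompose $\log\frac{dQ}{dP}$ through the tilted measure $Q_\lambda$, integrate against $Q$, and invoke nonnegativity of $\mathrm{KL}(Q \mathrel{\Vert} Q_\lambda)$ before taking the supremum. Your extra attention to the well-definedness of the additive splitting (avoiding an $\infty-\infty$ issue) is a slight refinement of the paper's terser justification but not a different argument.
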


\begin{proof}
    Take $\lambda \in \mathbb{R}$, and we will show $\mathrm{KL}(Q \mathrel{\Vert} P) \geq \lambda m - \log{\mathbb{E}_P(e^{\lambda X})}$. It suffices to consider $Q \ll P$ and $\lambda \in \Lambda_P$ since otherwise the inequality holds trivially. Observe that
    \begin{align*}
        \mathrm{KL}(Q \mathrel{\Vert} P)
        &= \int \log{\left(\frac{dQ}{dP}\right)} \, dQ \\
        &= \int \log{\left(\frac{dQ}{dQ_{\lambda}}\frac{dQ_{\lambda}}{dP}\right)} \, dQ \\
        &= \mathrm{KL}(Q \mathrel{\Vert} Q_{\lambda}) + \int \log{\left(\frac{dQ_{\lambda}}{dP}\right)} \, dQ \\
        & \geq \int \log{\left(\frac{e^{\lambda x}}{\mathbb{E}_P(e^{\lambda X})}\right)} \, dQ(x) \\
        & = \lambda m - \log{\mathbb{E}_P(e^{\lambda X})},
    \end{align*}
    where the second line is justified since $Q \ll P \ll Q_{\lambda} \ll P$, and the fourth by the fact that KL-divergence is non-negative (which follows from Jensen's inequality)
\end{proof}

\begin{proposition}
    Consider the optimization problem over measures $Q$ given by
    \begin{align}
    \begin{split}\label{eq:KL-opt}
        \minimize_{Q} \ & \mathrm{KL}(Q \mathrel{\Vert} P) \\
        \mathrm{subject} \, \mathrm{to} & \int x \, dQ(x) = m.
    \end{split}
    \end{align}
    A solution is given by $Q_{\lambda_m}$, where $\lambda_m = \argmin\limits_{\lambda \in \mathbb{R}}{\{\mathbb{E}_P(e^{\lambda (X - m)})\}}$, 
    assuming $\{0, \lambda_m\} \subseteq \mathrm{int}{(\Lambda_P)}$.
\end{proposition}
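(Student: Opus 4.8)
The plan is to combine the lower bound from Lemma~\ref{lemma:KL-bound} with a direct computation of $\mathrm{KL}(Q_{\lambda_m} \mathrel{\Vert} P)$, showing that the candidate $Q_{\lambda_m}$ attains that lower bound, hence is optimal. First I would verify that $Q_{\lambda_m}$ is feasible, i.e., that $\int x \, dQ_{\lambda_m}(x) = m$. Writing $\psi(\lambda) = \log \mathbb{E}_P(e^{\lambda(X - m)}) = \log \mathbb{E}_P(e^{\lambda X}) - \lambda m$, the first-order condition for $\lambda_m = \argmin_\lambda \psi(\lambda)$ (valid since $\lambda_m \in \mathrm{int}(\Lambda_P)$, where $\psi$ is smooth and convex by Hölder/differentiation under the integral) gives $\psi'(\lambda_m) = 0$, i.e., $\frac{\mathbb{E}_P(X e^{\lambda_m X})}{\mathbb{E}_P(e^{\lambda_m X})} = m$. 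The left-hand side is exactly $\int x \, dQ_{\lambda_m}(x)$ by definition of the tilt, so feasibility holds.

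Next I would compute $\mathrm{KL}(Q_{\lambda_m} \mathrel{\Vert} P)$ explicitly. Since $\frac{dQ_{\lambda_m}}{dP}(x) = \frac{e^{\lambda_m x}}{\mathbb{E}_P(e^{\lambda_m X})}$, we get
\[
\mathrm{KL}(Q_{\lambda_m} \mathrel{\Vert} P) = \int \left\{\lambda_m x - \log \mathbb{E}_P(e^{\lambda_m X})\right\} dQ_{\lambda_m}(x) = \lambda_m m - \log \mathbb{E}_P(e^{\lambda_m X}),
\]
using feasibility ($\int x \, dQ_{\lambda_m} = m$) in the last step. This is precisely $-\psi(\lambda_m) = \sup_{\lambda \in \mathbb{R}}\{\lambda m - \log \mathbb{E}_P(e^{\lambda X})\}$, since $\lambda_m$ minimizes $\psi$. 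By Lemma~\ref{lemma:KL-bound}, every feasible $Q$ satisfies $\mathrm{KL}(Q \mathrel{\Vert} P) \geq \sup_\lambda\{\lambda m - \log \mathbb{E}_P(e^{\lambda X})\}$, and we have just shown $Q_{\lambda_m}$ achieves this value; therefore $Q_{\lambda_m}$ solves~\eqref{eq:KL-opt}.

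The main obstacle is the analytic justification surrounding $\lambda_m$: one must confirm that $\psi$ is finite, differentiable (indeed smooth and strictly convex when $X$ is non-degenerate) on $\mathrm{int}(\Lambda_P)$, and that the minimizer $\lambda_m$ is attained in the interior so that the stationarity condition $\psi'(\lambda_m) = 0$ is available. The assumption $\{0, \lambda_m\} \subseteq \mathrm{int}(\Lambda_P)$ is exactly what licenses differentiating under the integral sign (a standard dominated-convergence argument on $\mathbb{E}_P(e^{\lambda X})$ for $\lambda$ in a compact subinterval of $\mathrm{int}(\Lambda_P)$) and guarantees the first-order condition holds. The rest is bookkeeping. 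One could also remark that when $P$ is non-degenerate $\psi$ is strictly convex, so $\lambda_m$ is unique and the solution $Q_{\lambda_m}$ is the unique minimizer; but this is not needed for the stated claim.
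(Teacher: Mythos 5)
Your proposal is correct and follows essentially the same route as the paper: verify feasibility of $Q_{\lambda_m}$ via the first-order condition (justified by dominated convergence and $\{0,\lambda_m\}\subseteq\mathrm{int}(\Lambda_P)$), compute $\mathrm{KL}(Q_{\lambda_m}\mathrel{\Vert}P)=\lambda_m m-\log\mathbb{E}_P(e^{\lambda_m X})$, and match it to the lower bound from Lemma~\ref{lemma:KL-bound}. The only cosmetic difference is that you minimize $\log\mathbb{E}_P(e^{\lambda(X-m)})$ whereas the paper minimizes $\mathbb{E}_P(e^{\lambda(X-m)})$ directly, which is the same optimization.
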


\begin{proof}
    Observe that $\lambda m - \log{\mathbb{E}_P(e^{\lambda X})} \equiv -\log{\mathbb{E}_P(e^{\lambda(X - m)})}$, which is maximized at $\lambda_m$. Differentiating, we find that this occurs when $m = \frac{\mathbb{E}_P(Xe^{\lambda_m X})}{\mathbb{E}_P(e^{\lambda_m X})} = \mathbb{E}_{Q_{\lambda_m}}(X)$. Noting that $\mathrm{KL}(Q_{\lambda} \mathrel{\Vert} P) = \lambda\mathbb{E}_{Q_{\lambda}}(X) - \log{\mathbb{E}_P(e^{\lambda X})}$, it follows by Lemma~\ref{lemma:KL-bound} that $Q_{\lambda_m}$ is a solution to~\eqref{eq:KL-opt}. More formally, define $g(\lambda) = \mathbb{E}_P(e^{\lambda(X - m)})$, and observe that by the dominated convergence theorem and the assumption that $0 \in \mathrm{int}{(\Lambda_P)}$,
    \[\frac{d}{d\lambda}g(\lambda) = \mathbb{E}_P((X - m)e^{\lambda(X - m)}); \ \frac{d^2}{d\lambda^2}g(\lambda) = \mathbb{E}_P((X - m)^2 e^{\lambda(X - m)}),\]
    for all $\lambda \in \mathrm{int}{(\Lambda_P)}$.
    This shows that $g$ is strictly convex (unless $P$ is a point mass at $m$ so that the statement holds trivially), with (unique) minimizer $\lambda_m$ assuming $\lambda_m \in \mathrm{int}{(\Lambda_P)}$.
\end{proof}

\section{Maximal couplings}\label{app:maximal}

\begin{theorem}\label{thm:TV-max}
    Let $P$ and $Q$ be probability measures on $(\mathbb{R}, \mathcal{B}(\mathbb{R}))$, and let $\mathcal{J}(P, Q)$ be the set of joint distributions with marginals $P$ and $Q$. Then
    \[\mathbb{P}_{(X, Y) \sim J}[X \neq Y] \geq \mathrm{TV}(P, Q), \text{ for all } J \in \mathcal{J}(P, Q),\]
    where $\mathrm{TV}(P, Q) = \sup_{B \in \mathcal{B}(\mathbb{R})}|P(B) - Q(B)|$ is the total variation distance between $P$ and $Q$. Moreover, there exists $J^* \in \mathcal{J}(P, Q)$ such that $\mathbb{P}_{(X, Y) \sim J^*}[X \neq Y] = \mathrm{TV}(P, Q)$.
\end{theorem}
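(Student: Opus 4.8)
The plan is to establish the two assertions separately: first the universal lower bound $\mathbb{P}_{(X,Y)\sim J}[X\neq Y]\ge \mathrm{TV}(P,Q)$ valid for every $J\in\mathcal{J}(P,Q)$, and then the existence of a coupling $J^*$ attaining it. The construction of $J^*$ will be precisely the one described (for conditional laws) in Example~\ref{ex:maximal}.

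For the lower bound, fix $J\in\mathcal{J}(P,Q)$ with $(X,Y)\sim J$ and an arbitrary $B\in\mathcal{B}(\mathbb{R})$. Decomposing $\mathbb{P}_J[X\in B]=\mathbb{P}_J[X\in B,\,Y\in B]+\mathbb{P}_J[X\in B,\,Y\notin B]$ and likewise for $\mathbb{P}_J[Y\in B]$, subtraction cancels the common term and yields $P(B)-Q(B)=\mathbb{P}_J[X\in B,\,Y\notin B]-\mathbb{P}_J[X\notin B,\,Y\in B]\le \mathbb{P}_J[X\in B,\,Y\notin B]\le\mathbb{P}_J[X\neq Y]$. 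Applying the same bound with $B^c$ in place of $B$ gives $Q(B)-P(B)\le\mathbb{P}_J[X\neq Y]$, so $|P(B)-Q(B)|\le\mathbb{P}_J[X\neq Y]$ for every $B$; taking the supremum over $B$ gives $\mathrm{TV}(P,Q)\le\mathbb{P}_J[X\neq Y]$.

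For the matching construction, let $\rho=\tfrac12(P+Q)$, $p=\tfrac{dP}{d\rho}$, $q=\tfrac{dQ}{d\rho}$, and $\theta=\int (p\wedge q)\,d\rho$. A short computation — which I would isolate as a lemma — shows $\theta=1-\mathrm{TV}(P,Q)$, using $\mathrm{TV}(P,Q)=\int(p-q)_+\,d\rho$ (supremum attained at $B=\{p>q\}$) together with $p\wedge q=p-(p-q)_+$. If $\mathrm{TV}(P,Q)=0$ then $P=Q$ and the diagonal coupling $(\mathrm{id},\mathrm{id})_\#P$ works trivially; if $\theta=0$ then $P\perp Q$ and the product coupling $P\otimes Q$ gives $\mathbb{P}[X=Y]=0=1-\mathrm{TV}(P,Q)$. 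Otherwise $0<\theta<1$: define probability measures $\mu,\nu_P,\nu_Q$ on $\mathbb{R}$ with $\rho$-densities $(p\wedge q)/\theta$, $(p-p\wedge q)/\mathrm{TV}(P,Q)$, and $(q-p\wedge q)/\mathrm{TV}(P,Q)$ respectively, and set $J^*=\theta\,(\mathrm{id},\mathrm{id})_\#\mu+(1-\theta)\,(\nu_P\otimes\nu_Q)$. Checking the marginals is a one-line Radon--Nikodym computation: the $X$-marginal density is $\theta\cdot\tfrac{p\wedge q}{\theta}+(1-\theta)\cdot\tfrac{p-p\wedge q}{\mathrm{TV}(P,Q)}=p$ since $1-\theta=\mathrm{TV}(P,Q)$, and symmetrically the $Y$-marginal is $Q$. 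Since $\{p>q\}$ and $\{q>p\}$ are $\rho$-essentially disjoint, $\nu_P$ and $\nu_Q$ are mutually singular, so under the second mixture component $X\neq Y$ almost surely, while under the first $X=Y$; hence $\mathbb{P}_{J^*}[X\neq Y]=1-\theta=\mathrm{TV}(P,Q)$, which meets the lower bound.

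The only delicate point — and hence the expected main obstacle — is measure-theoretic bookkeeping rather than any real difficulty: justifying the dominating measure and the densities, formalizing the mixture $J^*$ as a genuine probability measure on $(\mathbb{R}^2,\mathcal{B}(\mathbb{R}^2))$ (cleanest via an auxiliary $\mathrm{Bernoulli}(\theta)$ switch independent of everything else), and cleanly dispatching the degenerate cases $\mathrm{TV}(P,Q)\in\{0,1\}$ so that no density is divided by zero. Everything else is routine.
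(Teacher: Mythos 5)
Your proof is correct and follows essentially the same route as the paper: the identical cancellation argument for the lower bound, and the standard maximal coupling (diagonal piece with density $p\wedge q$ plus an off-diagonal product of the normalized excess densities), which coincides with the paper's $J^*=J_1^*+J_2^*$ written as a normalized mixture; your appeal to $\theta=1-\mathrm{TV}(P,Q)$ is exactly the paper's Lemma~\ref{lemma:TV-eq}. The only cosmetic differences are that you compute $\mathbb{P}_{J^*}[X\neq Y]$ exactly via mutual singularity of the two excess measures (rather than bounding $J^*[X=Y]$ below and invoking the first part) and that you split off the degenerate cases to avoid dividing by zero, both of which are fine.
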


\begin{proof}
    Consider any $J \in \mathcal{J}(P, Q)$; we will write 
    \[J[X \in B_1, Y \in B_2] \coloneqq \mathbb{P}_{(X, Y) \sim J}[X \in B_1, Y \in B_2],\] for any $B_1, B_2 \in \mathcal{B}(\mathbb{R})$. Letting $B \in \mathcal{B}(\mathbb{R})$ be arbitrary, notice that
    \begin{align*}
        & P(B) - Q(B) = J[X \in B] - J[Y \in B] \\
        & = J[X \in B, X = Y] + J[X \in B, X \neq Y] - J[Y \in B, X = Y] - J[Y \in B, X \neq Y] \\
        &= J[X \in B, X \neq Y] - J[Y \in B, X \neq Y]  \leq J[X \neq Y],
    \end{align*}
    since $J[X \in B, X = Y] = J[X \in B, X = Y, Y \in B] = J[Y \in B, X = Y]$. By a symmetric argument, we also have $Q(B) - P(B) \leq J[X \neq Y]$ which implies $\mathrm{TV}(P, Q) \leq J[X \neq Y]$ after taking a supremum over $B \in \mathcal{B}(\mathbb{R})$. This proves the first statement of the theorem.
    
    Next, to construct the (maximal) coupling $J^*$ for which equality is attained, we will proceed similarly to Example~\ref{ex:maximal}. Let $\rho$ be a dominating measure for $P$ and $Q$ (e.g., $\rho = \frac{1}{2}(P + Q)$), and define $p = \frac{dP}{d\rho}$, $q = \frac{dQ}{d\rho}$. We now define two measures $J_1^*, J_2^*$ on $(\mathbb{R}^2, \mathcal{B}(\mathbb{R}^2))$ as follows. First, we define $J_1^* = \nu \circ \phi^{-1}$, where $\phi: \mathbb{R} \to \mathbb{R}^2$ is the map $x \mapsto (x,x)$, and
    \[\nu(B) = \int_B \min{\{p(x), q(x)\}} \, d\rho(x), \text{ for any } B \in \mathcal{B}(\mathbb{R}).\]
    Note that $\nu$ is a well-defined finite measure on $(\mathbb{R}, \mathcal{B}(\mathbb{R}))$, with $\nu(\mathbb{R}) = 1 - \mathrm{TV}(P, Q)$ by Lemma~\ref{lemma:TV-eq} below. Since $\phi$ is a measurable map from $(\mathbb{R}, \mathcal{B}(\mathbb{R}))$ to $(\mathbb{R}^2, \mathcal{B}(\mathbb{R}^2))$, it follows that $J_1^*$ is a finite measure on $(\mathbb{R}^2, \mathcal{B}(\mathbb{R}^2))$, supported on the diagonal $C_{\mathrm{diag}} = \{(x, x) : x \in \mathbb{R}\}$.
    Next, if $\mathrm{TV}(P, Q) = 0$ then set $J_2^* \equiv 0$, otherwise define the measure $J_2^*$ via
    \[J_2^*(C) = \frac{1}{\mathrm{TV}(P, Q)}\int_C (p(x) - q(x))_+ (q(y) - p(y))_+ \, d\rho^2(x \otimes y), \text{ for any } C \in \mathcal{B}(\mathbb{R}^2),\]
    where $(z)_+ = \max{\{z, 0\}}$ for $z \in \mathbb{R}$, which is a well-defined finite measure with $J_2^*(\mathbb{R}^2) = \mathrm{TV}(P, Q)$, by Lemma~\ref{lemma:TV-eq} and Fubini's theorem. Thus, $J^* \coloneqq J_1^* + J_2^*$ is a probability measure on $(\mathbb{R}^2, \mathcal{B}(\mathbb{R}^2))$, and has the desired marginal distributions: for any $B \in \mathcal{B}(\mathbb{R})$,
    \begin{align*}
        J^*(B \times \mathbb{R}) &= \nu \circ \phi^{-1}(B \times \mathbb{R}) + \frac{1}{\mathrm{TV}(P, Q)}\int_B (p(x) - q(x))_+ \, d\rho(x) \int_{\mathbb{R}} (q(y) - p(y))_+ \, d\rho(y) \\
        &= \int_B \min{\{p(x), q(x)\}}\, d\rho(x) + \int_B (p(x) - q(x))_+ \, d\rho(x) \\
        &= \int_B p(x) \, d\rho(x) = P(B),
    \end{align*}
    and similarly $J^*(\mathbb{R} \times B) = \int_B q\, d\rho = Q(B)$, 
    where we used Lemma~\ref{lemma:TV-eq} and the fact that $\phi^{-1}(B \times \mathbb{R}) = \phi^{-1}(\mathbb{R} \times B)= B$. Finally, $J^*[X = Y] \equiv J^*(C_{\mathrm{diag}}) \geq J_1^*(C_{\mathrm{diag}}) = 1 - \mathrm{TV}(P, Q)$, so that $J^*[X \neq Y] \leq \mathrm{TV}(P, Q)$, and equality holds by the first part of the theorem.
    
\end{proof}

\begin{lemma}\label{lemma:TV-eq}
    Given the notation in the proof of Theorem~\ref{thm:TV-max},
    \[\mathrm{TV}(P, Q) = \frac{1}{2}\int |p(x) - q(x)|\, d\rho(x)= 1 - \int \min{\{p(x), q(x)\}} \, d\rho(x),\]
    and consequently,
    $\mathrm{TV}(P, Q) = \int (p(x) - q(x))_+ \, d\rho(x) = \int (q(y) - p(y))_+ \, d\rho(y)$.
\end{lemma}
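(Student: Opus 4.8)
The plan is to identify a Borel set on which the supremum defining $\mathrm{TV}(P,Q)$ is attained, and then reduce every claimed identity to elementary pointwise manipulations of the densities $p$ and $q$, using only that both integrate to $1$ against $\rho$.

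First I would record the consequence of $\int p \, d\rho = \int q\, d\rho = 1$: since $\int (p - q)\, d\rho = 0$ and $p - q = (p-q)_+ - (q-p)_+$, we get $\int (p-q)_+\, d\rho = \int (q-p)_+\, d\rho$; call this common value $\Delta$, which is finite since $(p-q)_+ \leq p$. Next, set $B^\star = \{x : p(x) \geq q(x)\}$. For any $B \in \mathcal{B}(\mathbb{R})$, $P(B) - Q(B) = \int_B (p-q)\, d\rho \leq \int_{B \cap B^\star}(p - q)\, d\rho \leq \int_{B^\star}(p-q)\, d\rho = \Delta$, where the first inequality drops the non-positive part of the integrand on $B \setminus B^\star$ and the second adds its non-negative part on $B^\star \setminus B$. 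The symmetric estimate (interchanging the roles of $p$ and $q$, using the complement of $B^\star$) gives $Q(B) - P(B) \leq \Delta$, so $|P(B) - Q(B)| \leq \Delta$ for all $B$; and taking $B = B^\star$ shows $P(B^\star) - Q(B^\star) = \Delta$, hence $\mathrm{TV}(P, Q) = \Delta = \int (p-q)_+\, d\rho = \int (q-p)_+\, d\rho$, which is the last displayed assertion of the lemma.

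Finally, the two remaining formulas follow from pointwise identities integrated against $\rho$: from $|p - q| = (p-q)_+ + (q-p)_+$ we get $\frac{1}{2} \int |p - q|\, d\rho = \Delta = \mathrm{TV}(P, Q)$, and from $\min\{p, q\} = p - (p-q)_+$ together with $\int p\, d\rho = 1$ we get $\int \min\{p,q\}\, d\rho = 1 - \Delta = 1 - \mathrm{TV}(P, Q)$.

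The argument is entirely routine; the only step requiring a little care is verifying that $B^\star$ actually \emph{attains} the supremum in the definition of $\mathrm{TV}$ (as opposed to merely bounding it), and checking that all the integrals in sight are finite, which is immediate because $(p-q)_+$, $(q-p)_+$, $|p-q|$ and $\min\{p,q\}$ are each dominated by the $\rho$-integrable function $p + q$.
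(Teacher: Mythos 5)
Your proof is correct and follows essentially the same route as the paper's: both identify the set where $p \geq q$ (the paper's $A$, your $B^\star$) as attaining the supremum in the definition of $\mathrm{TV}(P,Q)$, and both reduce the remaining identities to the fact that $p$ and $q$ integrate to one. The only difference is bookkeeping --- you establish $\mathrm{TV}(P,Q) = \int (p-q)_+\, d\rho$ first and deduce the displayed equalities via the pointwise identities $|p-q| = (p-q)_+ + (q-p)_+$ and $\min\{p,q\} = p - (p-q)_+$, whereas the paper derives the $\frac{1}{2}\int|p-q|\,d\rho$ and $1 - \int\min\{p,q\}\,d\rho$ expressions first; this is a cosmetic reordering, not a different argument.
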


\begin{proof}
    Let $A \in \mathcal{B}(\mathbb{R})$ be defined by $A = \{x \in \mathbb{R}: p(x) > q(x)\}$. Since $\int p \, d\rho = \int q \, d \rho = 1$,
    \begin{align*}
        \int |p(x) - q(x)| \, d\rho(x)
        &= \int_A (p - q) \, d\rho + \int_{A^c} (q - p) \, d\rho \\
        &= \int_A p \, d\rho + \int_{A^c} q \, d\rho -\int \min{\{p, q\}} \, d\rho \\
        &= 2 - \int_{A^c} p \, d\rho - \int_{A} q \, d\rho - \int \min{\{p, q\}}\, d\rho \\
        &= 2\left(1 -\int \min{\{p, q\}}\, d\rho\right).
    \end{align*}
    Notice also that
    \begin{align*}
        P(A) - Q(A) &= \frac{1}{2}\{P(A) - Q(A) + Q(A^c) - P(A^c)\} \\
        &= \frac{1}{2}\left\{\int_A (p - q) \, d\rho + \int_{A^c} (q-p) \, d\rho\right\} \\
        &= \frac{1}{2}\int |p - q| \, d\rho.
    \end{align*}
    It remains to show $\mathrm{TV}(P, Q) = P(A) - Q(A)$: for any $B \in \mathcal{B}(\mathbb{R})$,
    \begin{align*}
        P(B) - Q(B) 
        &= \int_B p(x) \, d\rho(x) - \int_B q(x) \, d\rho(x) \\
        &= \int_{B\cap A} p(x) \, d\rho(x) + \int_{B\setminus A} p(x) \, d\rho(x)- \int_{B\cap A} q(x) \, d\rho(x) - \int_{B\setminus A} q(x) \, d\rho(x) \\
        & \leq 
        \int_{B\cap A} p(x) \, d\rho(x) - \int_{B\cap A} q(x) \, d\rho(x) \\
        &\leq \int_{B\cap A} p(x) \, d\rho(x) - \int_{B\cap A} q(x) \, d\rho(x) + \int_{A \setminus B} p(x) \, d\rho(x) - \int_{A \setminus B} q(x) \, d\rho(x) \\
        &= P(A) - Q(A)
    \end{align*}
    and, by symmetry, $Q(B) - P(B) \leq Q(A^c)- P(A^c) = P(A) - Q(A)$. Taking a supremum over $B \in \mathcal{B}(\mathbb{R})$, this shows that $\mathrm{TV}(P, Q) \leq P(A) - Q(A)$, but we must have equality since clearly $P(A) -Q(A) \leq \sup_{B \in \mathcal{B}(\mathbb{R})}|P(B) - Q(B)|$.
\end{proof}

Note that, by our results,
\[\mathrm{TV}(P, Q) = \inf{\left\{J[X \neq Y] : J \in \mathcal{J}(P, Q)\right\}}.\] In the proof of Theorem~\ref{thm:TV-max}, we constructed a $J^*$ that attained this infimum. However, this construction is not unique. In the remainder of this appendix, we show that the construction in Example~\ref{ex:maximal}, and that for binary treatments shown in Section~\ref{sec:binary}, also imply valid maximal couplings (i.e., attain the infimum above).

\begin{proof}[Verification of construction in Example~\ref{ex:maximal}]
    Note that \[P[d_Q^* = A \mid X] \geq P[V_1 \leq \widetilde{p}(A \mid X) \mid X] = \int \min{\{\pi(a \mid X), q(a \mid X)\}} \, d\rho(a) = 1 - \mathrm{TV}_{\Pi, Q}(X),\]
    by construction and Lemma~\ref{lemma:TV-eq}. Thus, $P[d_Q^* \neq A \mid X] \leq \mathrm{TV}_{\Pi, Q}(X)$, and equality must hold by Theorem~\ref{thm:TV-max}. It remains to verify that the distribution of $d_Q^*$ given $X$ is indeed $Q$. To see this, observe first that $\widetilde{q}(\, \cdot \mid X)$ is a valid density, again by Lemma~\ref{lemma:TV-eq}, and moreover that $P[d_Q^* = A, V_1 > \widetilde{p}(A \mid X)\mid X] = 0$, since otherwise we would contradict Theorem~\ref{thm:TV-max}. Thus, for any $B \in \mathcal{B}(\mathbb{R})$,
    \begin{align*}
    P[d_Q^* \in B, d_Q^* = A \mid X] 
    &= P[A \in B, V_1 \leq \widetilde{p}(A \mid X) \mid X]\\
    &= \int_B \widetilde{p}(A \mid X) \pi(a \mid X)\, d\rho(a) \\
    &= \int_B \min{\{\pi(a \mid X), q(a \mid X)\}} \, d\rho(a),
    \end{align*} 
    where we used Assumption~\ref{ass:aux}. Similarly,
    \begin{align*}
    P[d_Q^* \in B, d_Q^* \neq A \mid X] 
    &= P[d_Q^* \in B, V_1 > \widetilde{p}(A \mid X) \mid X]\\
    &= P[Q^{-1}(V_2 \mid X) \in B, V_1 > \widetilde{p}(A \mid X) \mid X] \\
    &= \mathrm{TV}_{\Pi, Q}(X) \int_B \widetilde{q}(a \mid X) \, d\rho(a) \\
    &= \int_B \big[q(a \mid X) - \min{\{\pi(a \mid X), q(a \mid X)\}}\big] \, d\rho(a),
    \end{align*} 
    where we used Assumption~\ref{ass:aux}, the independence of $V_1$ and $V_2$, and Lemma~\ref{lemma:TV-eq}. Combining these results, we conclude that $P[d_Q^* \in B \mid X] = \int_B q(a \mid X) \, d\rho(a)$, so that 
    $d_Q^* \mid X \sim Q(\, \cdot \mid X)$.
\end{proof}

\begin{proof}[Verification of construction in Section~\ref{sec:binary}]
    Lemma~\ref{lemma:TV-eq} implies that $\mathrm{TV}_{\Pi,Q}(X) \equiv |\pi(X) - q(X)|$ for binary treatments. Now, we can directly verify that $P[A \neq d_q^*\mid X] = |\pi(X) - q(X)|$: (i) if $\pi(X) = q(X)$, then $P[A \neq d_q^* \mid X] = 0 = \mathrm{TV}_{\Pi, Q}(X)$, (ii) if $\pi(X) > q(X)$, then
    \[P[A \neq d_q^* \mid X] = P[A = 1 \mid X] \cdot P\left[V > \frac{q(X)}{\pi(X)} \, \bigg| \, X \right] = \pi(X) \cdot \left(1 - \frac{q(X)}{\pi(X)}\right) = |\pi(X) - q(X)|,\]
    and (iii) if $\pi(X) < q(X)$, then
    {\small
    \[P[A \neq d_q^* \mid X] = P[A = 0 \mid X] \cdot P\left[V > \frac{1 - q(X)}{1 - \pi(X)} \, \bigg| \, X \right] = (1 - \pi(X))\cdot \left(1 - \frac{1 - q(X)}{1 - \pi(X)}\right) = |\pi(X) - q(X)|.\]
    }%
    It is similarly straightforward to verify that $P[d_q^* = 1 \mid X] = q(X)$, so we are done.
\end{proof}

\section{Sensitivity models}\label{app:sens}

\begin{proof}[Proof of properties in Model~\ref{mod:sens-prop}]
Observe that, applying Bayes' rule twice,
    \begin{align*}
        \mathbb{E}(Y(a) \mid X, A = a')
        &= \int y \, dP_{Y(a) \mid X, A = a'}^*(y \mid X, a') \\
        &= \int y \frac{\eta_a(a' \mid X, y)}{\pi(a' \mid X)} \, dP_{Y(a) \mid X}^*(y \mid X) \\
        &= \int y \frac{\eta_a(a' \mid X, y)}{\pi(a' \mid X)} \frac{\pi(a \mid X)}{\eta_a(a\mid X, y)} \, dP_{Y \mid X, A = a}(y \mid X, a)\\
        &= \mathbb{E}\left(Y \frac{\eta_a(a' \mid X, Y)}{\pi(a' \mid X)} \frac{\pi(a \mid X)}{\eta_a(a\mid X, Y)}\,\, \middle| \,\, X, A = a\right),
    \end{align*}
    where we also invoked Assumption~\ref{ass:consistency}. Similarly, reversing these steps,
    \begin{align*}
        \mathbb{E}\left(\frac{\eta_a(a' \mid X, Y)}{\pi(a' \mid X)} \frac{\pi(a \mid X)}{\eta_a(a\mid X, Y)}\,\, \middle| \,\, X, A = a\right)
        &= \int \, dP_{Y(a) \mid X, A = a'}^*(y \mid X, a') = 1.
    \end{align*}

    Now, consider sharp bounds on $\nu_a(X,a')$ in this model. For the lower bound, we aim to solve the following minimization problem: for $a, a' \in \mathcal{A}$ fixed,
    \begin{align}
    \begin{split}\label{eq:min-prop}
        \minimize_{h(a' \mid X, a, \cdot)} \; & \mathbb{E}(Y h(a' \mid X, a, Y) \mid X, A = a) \\
        \text{subject to } \; & \mathbb{E}(h(a' \mid X, a, Y) \mid X, A = a)) = 1 \\
        & \Gamma^{-1} \leq h(a' \mid X, a, Y) \leq \Gamma, \text{ for all } a' \in \mathcal{A}
    \end{split}
    \end{align}
    This is a linear program, and we can characterize its solution using the Karush–Kuhn–Tucker (KKT) conditions. Since the problem is trivial at $\Gamma = 1$, we will assume that $\Gamma > 1$. Setting the (functional) derivative of the Lagrangian to zero results in the stationarity condition
    \[0 = Y - \lambda_1^{X, Y} + \lambda_2^{X, Y} - \lambda_3^X,\]
    for Lagrange multipliers $\lambda_1^{X, Y},\lambda_2^{X, Y},\lambda_3^{X}$. The other conditions are (i) complementary slackness, i.e., that $\lambda_1^{X, Y}(\Gamma^{-1} - h(a' \mid X, a, Y)) = 0$ and $\lambda_2^{X, Y}(h(a' \mid X, a, Y) - \Gamma) = 0$; (ii) primal feasibility, i.e., that the constraints in~\eqref{eq:min-prop} are satisfied; and (iii) dual feasibility, i.e., that $\lambda_1^{X, Y} \geq 0$ and $\lambda_2^{X, Y} \geq 0$. If $\lambda_1^{X, Y} \neq 0$, then $h(a' \mid X, a, Y) = \Gamma^{-1}$, implying $\lambda_2^{X, Y} = 0$ by complementary slackness, and $\lambda_1^{X, Y} = Y - \lambda_3^X \implies Y \geq \lambda_3^X$ by dual feasibility. Similarly, If $\lambda_2^{X, Y} \neq 0$, then $h(a' \mid X, a, Y) = \Gamma$, implying $\lambda_1^{X, Y} = 0$ by complementary slackness, and $\lambda_2^{X, Y} = - Y + \lambda_3^X \implies Y \leq \lambda_3^X$ by dual feasibility. Combining these cases with the primal feasibility condition, we must have
    \[1 = \Gamma^{-1} P[Y > \lambda_3^X \mid X, A = a] + \Gamma P[Y < \lambda_3^X \mid X, A = a] + \lambda_3^X P[Y = \lambda_3^X \mid X, A = a].\]
    In particular, if the distribution function $y \mapsto P[Y \leq y \mid X, A = a]$ is strictly increasing at $\gamma_a^-(X)$ and $P[Y < \gamma_a^-(X) \mid X, A = a] = P[Y \leq \gamma_a^-(X) \mid X, A = a] = \frac{1}{1 + \Gamma}$, then the above equation is satisfied by taking $\lambda_3^X = \gamma_a^-(X)$. This yields the optimum $h^*(a' \mid X, a, Y) \equiv g_a^-(X,Y) = \Gamma^{\mathrm{sign}(\gamma_a^-(X) - Y)}$. The upper bound is argued in the same way, replacing $Y$ with $-Y$, resulting in the optimum $g_a^+(X,Y) = \Gamma^{\mathrm{sign}(Y - \gamma_a^+(X))}$ where $\gamma_a^+(X)$ is the $\frac{\Gamma}{1 + \Gamma}$ quantile of $Y$ given $X$ and $A = a$ (assuming the distribution function is strictly increasing at $\gamma_a^+(X,Y)$).
\end{proof}

\begin{proof}[Proof of Proposition~\ref{prop:ident}]
See that
\begin{align*}
    \mathbb{E}(Y(d) \mid X)
    &= \mathbb{E}(\mathbb{E}(Y(d) \mid X, A, d) \mid X) \\
    &= \int \mathbb{E}(Y(a) \mid X, A = a', d = a) \, dP_{A, d \mid X}(a', a \mid X) \\
    &= \int \nu_a(X, a') \, dP_{A, d \mid X}(a', a \mid X) \\
    &= \tau_Q(X) + \int \{\nu_a(X, a') -\mu_a(X)\}\, dP_{A, d \mid X}(a', a \mid X) \\
    &= \tau_Q(X) + \int \mathds{1}(a' \neq a)\{\nu_a(X, a') -\mu_a(X)\}\, dP_{A, d \mid X}(a', a \mid X).
\end{align*}
The first equality results from the tower law, the third by Assumption~\ref{ass:aux} (i.e., $d(X, A, V) \ind Y(a) \mid X, A$), the fourth by rearranging, and the fifth by Assumption~\ref{ass:consistency}.
\end{proof}

\begin{proof}[Proof of Theorem~\ref{thm:maximal}]
    The first part of the statement (i.e., properties of the maximal coupling) is proven in Appendix~\ref{app:maximal}. For the second, observe that by Proposition~\ref{prop:ident},
    \begin{align*}
        -\Gamma^-(X) P[A \neq d_Q^* \mid X]\leq \mathbb{E}(Y(d_Q^*)\mid X) - t_Q(X) \leq \Gamma^+(X) P[A \neq d_Q^* \mid X],
    \end{align*}
    which implies the stated bounds as $P[A \neq d_Q^* \mid X] = \mathrm{TV}_{\Pi, Q}(X)$. These are the narrowest bounds of this kind due to the maximality property of $d_Q^*$: $P[A \neq d_Q^* \mid X] \leq P[A \neq d \mid X]$ for all $d \in \mathcal{D}_Q$. Finally, if $\Gamma^-, \Gamma^+$ are tight/sharp, then so are the inequalities above (and hence the bounds for $\mathbb{E}(Y(d_Q^*)\mid X)$), as we have just used the equality proven in Proposition~\ref{prop:ident}.
\end{proof}

\begin{proof}[Proof of Theorem~\ref{thm:monotonic}]
    This is an immediate corollary of the classical characterization of the one-dimensional solution to the Monge-Kantorovich problem for convex cost. See, for example, Theorem 2.9 in \citet{santambrogio2015}. Combining this with the same argument as the proof of Theorem~\ref{thm:maximal} yields the result.
\end{proof}

\begin{proof}[Bounds for maximal $Q$-policy in Model~\ref{mod:sens-prop}]
    We have shown for Model~\ref{mod:sens-prop} that
    \[-\Gamma_a^-(X) \leq \nu_a(X, a') - \mu_a(X) \leq \Gamma_a^+(X),\]
    where
    \[\Gamma_a^-(X) = \mathbb{E}(Y \{1 - g_a^-(X, Y)\} \mid X, A = a), \, \Gamma_a^+(X) = \mathbb{E}(Y \{g_a^+(X, Y) - 1\}\mid X, A = a).\] By Proposition~\ref{prop:ident}, we have
    \[-\mathbb{E}(\mathds{1}(A \neq d_Q^*) \Gamma_{d_Q^*}^-(X) \mid X)\leq \mathbb{E}(Y(d_Q^*) \mid X) - t_q(X) \leq \mathbb{E}(\mathds{1}(A \neq d_Q^*) \Gamma_{d_Q^*}^+(X) \mid X).\]
    By our results on maximal policies, we know that
    \begin{align*}
        \mathbb{E}(\mathds{1}(A \neq d_Q^*) \Gamma_{d_Q^*}^+(X) \mid X) 
        &= \mathbb{E}(\mathds{1}(V_1 > \widetilde{p}(A\mid X)) \Gamma_{\widetilde{Q}^{-1}(V_2 \mid X)}^+(X) \mid X) \\
        &= \mathrm{TV}_{\Pi, Q}(X) \int \Gamma_a^+(X) \widetilde{q}(a \mid X) \, d\rho(a) \\
        &= \int \Gamma_a^+(X) (q(a \mid X) - \pi(a \mid X))_+ \, d\rho(a).
    \end{align*}
    Arguing similarly for the lower bound, we conclude that
    \begin{align*}
        \mathbb{E}(Y(d_Q^*) \mid X) \in &\bigg[t_Q(X) - \int \Gamma_a^-(X) (q(a \mid X) - \pi(a \mid X))_+ \, d\rho(a), \\
        & \quad \quad t_Q(X) + \int \Gamma_a^+(X) (q(a \mid X) - \pi(a \mid X))_+ \, d\rho(a)\bigg].
    \end{align*}
\end{proof}

\begin{proof}[Proof of Theorem~\ref{thm:maximal-bin}]
    By Corollary~\ref{cor:ident-bin} and inequalities~\eqref{eq:ineq-maximal-bin}, for any $d \in \mathcal{D}_q$,
    \[-\Gamma_0^-\pi(1 - \widetilde{q}_d) - \Gamma_1^- (q - \pi\widetilde{q}_d)\leq \mathbb{E}(Y(d)\mid X) - t_q \leq \Gamma_0^+\pi(1 - \widetilde{q}_d) + \Gamma_1^+ (q - \pi\widetilde{q}_d),\]
    and the width of these bounds is
    \[(\Gamma_0^- + \Gamma_0^+)\pi(1 - \widetilde{q}_d) + (\Gamma_1^- + \Gamma_1^+)(q - \pi \widetilde{q}_d).\]
    The width is decreasing in $\widetilde{q}_d$, so minimizing this objective subject to the constraints $\widetilde{q}_d \in [0,1]$ and $\pi \widetilde{q}_d \in [0,q]$ (which is required so that the implied $P[d = 1, A = 0 \mid X] = q - \pi \widetilde{q}_d \in [0,q]$), we find that the minimum is achieved when $\widetilde{q}_d = \min{\{1, q/\pi\}}$. But this corresponds to the maximal coupling: one can verify by comparing the resulting distribution with that of $(A, d_q^*)$, where $d_q^*$ is as defined in Section~\ref{sec:binary}. Further, plugging in $\widetilde{q}_d = \min{\{1, q/\pi\}}$ into the bounds above, we find
    \[-(1 - s)\Gamma_0^-(\pi - q) - s\Gamma_1^- (q - \pi)\leq \mathbb{E}(Y(d_q^*)\mid X) - t_q \leq (1 - s)\Gamma_0^+(\pi - q) + s\Gamma_1^+ (q - \pi),\]
    where $s = \mathds{1}(q > \pi)$. Equivalently, the lower and upper bounds can be written $-\Gamma_s^- |q - \pi|$ and $\Gamma_s^+ |q-\pi|$, respectively, which yields the result. The claim about tightness/sharpness is immediate, as if the inequalities in~\eqref{eq:ineq-maximal-bin} are best possible (or attainable), then so are the bounds on $\mathbb{E}(Y(d_q^*) \mid X)$.
\end{proof}

\section{Estimation results}\label{app:estimation}

\begin{proof}[Proof of Proposition~\ref{prop:von-Mises}]
    Observe that, by iterated expectations,
    \begin{align*}
        \tau_{\delta}(\overline{P}) - \tau_{\delta}(P) + \mathbb{E}_P(\dot{\tau}_{\delta}(O;P))
        &= \mathbb{E}_P\left(\frac{e^{\delta A}}{\overline{\alpha}_{\delta}}(Y - \overline{\gamma}_{\delta}) + \overline{\gamma}_{\delta} - \gamma_{\delta}\right) \\
        &= \mathbb{E}_P\left(\frac{1}{\overline{\alpha}_{\delta}}(\gamma_{\delta}\alpha_{\delta} - \overline{\gamma}_{\delta}\alpha_{\delta}) + \overline{\gamma}_{\delta} - \gamma_{\delta}\right) \\
        &= \mathbb{E}_P\left(\{\overline{\gamma}_{\delta} - \gamma_{\delta}\}\left\{1 - \frac{\alpha_{\delta}}{\overline{\alpha}_{\delta}}\right\}\right),
    \end{align*}
    as claimed.
\end{proof}

\begin{proof}[Proof of Theorem~\ref{thm:est-cont-inc}]
    Throughout, we write $P(f) = \int f(o) \, dP(o)$, where $f$ can depend on training data used for nuisance estimation. We begin the error expansion of $\widehat{\tau}_{\delta}$:
    \begin{align*}
        \widehat{\tau}_{\delta} - \tau_{\delta} &= \mathbb{P}_n[\widehat{\Psi}_{\tau_{\delta}}] - P(\Psi_{\tau_{\delta}}) \\
        &= (\mathbb{P}_n - P)\Psi_{\tau_{\delta}} + P(\widehat{\Psi}_{\tau_{\delta}} - \Psi_{\tau_{\delta}}) + (\mathbb{P}_n - P)(\widehat{\Psi}_{\tau_{\delta}} - \Psi_{\tau_{\delta}}),
    \end{align*}
    where $\Psi_{\tau_{\delta}} = \frac{e^{\delta A}}{\alpha_{\delta}(X)}(Y - \gamma_{\delta}(X)) + \gamma_{\delta}(X)$.
    The first summand is the same as $\mathbb{P}_n[\dot{\tau}_{\delta}(O;P)]$, and the second summand is $O_P(R_{1,n})$ by Proposition~\ref{prop:von-Mises} and the assumption that $\widehat{\alpha}_{\delta}$ is bounded below. It remains to control the third (empirical process) term. 
    
    By Lemma 2 in~\citet{kennedy2020b}, to show that this term is $o_P(n^{-1/2})$ it suffices to show that $\lVert \widehat{\Psi}_{\tau_{\delta}} - \Psi_{\tau_{\delta}}\rVert = o_P(1)$; this is due to the fact that $\widehat{\Psi}_{\tau_{\delta}}$ is fit on separate independent data and an application of Markov's inequality. To this end, observe that
    \begin{align*}
        \lVert \widehat{\Psi}_{\tau_{\delta}} - \Psi_{\tau_{\delta}}\rVert
        &= \left\lVert\frac{e^{\delta A}}{\widehat{\alpha}_{\delta}}(Y - \widehat{\gamma}_{\delta}) + \widehat{\gamma}_{\delta} - \frac{e^{\delta A}}{\alpha_{\delta}}(Y - \gamma_{\delta}) + \gamma_{\delta}\right\rVert \\
        & \leq \left\lVert \frac{e^{\delta} A}{\widehat{\alpha}_{\delta}}(\gamma_{\delta} - \widehat{\gamma}_{\delta})\right\rVert + \left\lVert \frac{e^{\delta A}}{\widehat{\alpha}_{\delta}\alpha_{\delta}}(\alpha_{\delta} - \widehat{\alpha}_{\delta})(Y - \gamma_{\delta})\right\rVert + \lVert \widehat{\gamma}_{\delta} - \gamma_{\delta}\rVert \\
        & \lesssim \lVert \widehat{\alpha}_{\delta} - \alpha_{\delta}\rVert + \lVert \widehat{\gamma}_{\delta} - \gamma_{\delta}\rVert \\
        &= o_P(1),
    \end{align*}
    by our consistency and boundedness assumptions.

    For $\widehat{\xi}_{\delta}$, we have a similar error decomposition:
    \begin{align*}
        \widehat{\xi}_{\delta} - \xi_{\delta} &= \mathbb{P}_n[\widehat{\Psi}_{\xi_{\delta}}] - P(\Psi_{\xi_{\delta}}) \\
        &= (\mathbb{P}_n - P)\Psi_{\xi_{\delta}} + P(\widehat{\Psi}_{\xi_{\delta}} - \Psi_{\xi_{\delta}}) + (\mathbb{P}_n - P)(\widehat{\Psi}_{\xi_{\delta}} - \Psi_{\xi_{\delta}}),
    \end{align*}
    For the second (asymptotic bias) term,
    \begin{align*}
        & P(\widehat{\Psi}_{\xi_{\delta}} - \Psi_{\xi_{\delta}}) \\
        &= P\left(\left\{\mathds{1}(e^{\delta A} < \widehat{\alpha}_{\delta}) - \widehat{\kappa}_{\delta}\right\}\left\{1 - \frac{e^{\delta A}}{\widehat{\alpha}_{\delta}}\right\} - \mathds{1}(e^{\delta A} < \alpha_{\delta}) + \kappa_{\delta}\right) \\
        &= P\left(\left\{\mathds{1}(e^{\delta A} < \widehat{\alpha}_{\delta}) - \mathds{1}(e^{\delta A} < \alpha_{\delta})\right\}\left\{1 - \frac{e^{\delta A}}{\widehat{\alpha}_{\delta}}\right\} -  \widehat{\kappa}_{\delta}\left\{1 - \frac{e^{\delta A}}{\widehat{\alpha}_{\delta}}\right\} - \mathds{1}(e^{\delta A} < \alpha_{\delta})\frac{e^{\delta} A}{\widehat{\alpha}_{\delta}} + \kappa_{\delta}\right) \\
        &= P\left(\frac{1}{\widehat{\alpha}_{\delta}}\left\{\mathds{1}(e^{\delta A} < \widehat{\alpha}_{\delta}) - \mathds{1}(e^{\delta A} < \alpha_{\delta})\right\}\left\{\widehat{\alpha}_{\delta} - e^{\delta A} \right\}\right) + P\left(\{\kappa_{\delta} - \widehat{\kappa}_{\delta}\}\left\{1 - \frac{\alpha_{\delta}}{\widehat{\alpha}_{\delta}}\right\}\right).
    \end{align*}
    The second summand is $O_P(R_{2,n})$, while the first is controlled under Assumption~\ref{ass:margin}:
    \begin{align*}
        & \left|P\left(\frac{1}{\widehat{\alpha}_{\delta}}\left\{\mathds{1}(e^{\delta A} < \widehat{\alpha}_{\delta}) - \mathds{1}(e^{\delta A} < \alpha_{\delta})\right\}\left\{\widehat{\alpha}_{\delta} - e^{\delta A} \right\}\right)\right| \\
        & \lesssim P\left(\mathds{1}(\max{\{|\alpha_{\delta} - e^{\delta A}|, |\widehat{\alpha}_{\delta} - e^{\delta A}|\}} \leq |\widehat{\alpha}_{\delta} - \alpha_{\delta}|) \cdot |\widehat{\alpha}_{\delta} - e^{\delta A}|\right) \\
        & \lesssim \lVert\widehat{\alpha}_{\delta} - \alpha_{\delta}\rVert_{\infty}^{1 + b} = R_{3,n},
    \end{align*}
    where the second line follows from Lemma 4 in~\citet{levis2024} or Lemma 1 in~\citet{kennedy2020b}, and the third line follows by the margin condition. It remains to show that the empirical process term is $o_P(n^{-1/2})$, which again will follow from showing that $\lVert \widehat{\Psi}_{\xi_{\delta}} - \Psi_{\xi_{\delta}}\rVert = o_P(1)$, by Lemma 2 of~\citet{kennedy2020b}. See that under our boundedness assumptions,
    \begin{align*}
        \lVert \widehat{\Psi}_{\xi_{\delta}} - \Psi_{\xi_{\delta}}\rVert
        &= \left\lVert \left\{\mathds{1}(e^{\delta A} < \widehat{\alpha}_{\delta}) - \widehat{\kappa}_{\delta}\right\}\left\{1 - \frac{e^{\delta A}}{\widehat{\alpha}_{\delta}}\right\} - \left\{\mathds{1}(e^{\delta A} < \alpha_{\delta}) - \kappa_{\delta}\right\}\left\{1 - \frac{e^{\delta A}}{\alpha_{\delta}}\right\}\right\rVert \\
        & \lesssim \left\lVert \left\{\mathds{1}(e^{\delta A} < \widehat{\alpha}_{\delta}) - \mathds{1}(e^{\delta A} < \alpha_{\delta})\right\}\right\rVert + \lVert \widehat{\alpha}_{\delta} - \alpha_{\delta}\rVert + \lVert \widehat{\gamma}_{\delta} - \gamma_{\delta}\rVert = o_P(1).
    \end{align*}

\noindent The conclusion follows by our consistency assumptions, and since for any $t > 0$,
    \begin{align*}
        &\left\lVert \mathds{1}(e^{\delta A} < \widehat{\alpha}_{\delta}) - \mathds{1}(e^{\delta A} < \alpha_{\delta})\right\rVert^2 \\
        &= P\left(|\mathds{1}(e^{\delta A} < \widehat{\alpha}_{\delta}) - \mathds{1}(e^{\delta A} < \alpha_{\delta})|\right) \\
        & \leq P\left(\mathds{1}(|e^{\delta A} - \alpha_{\delta}| \leq |\widehat{\alpha}_{\delta} - \alpha_{\delta}|)\right) \\
        & \leq P[|e^{\delta A} - \alpha_{\delta}| \leq t] + P[|\widehat{\alpha}_{\delta} - \alpha_{\delta}| > t] \\
        &\lesssim t^b + \frac{1}{t} \lVert \widehat{\alpha}_{\delta} - \alpha_{\delta}\rVert,
    \end{align*}
    we have $\left\lVert \mathds{1}(e^{\delta A} < \widehat{\alpha}_{\delta}) - \mathds{1}(e^{\delta A} < \alpha_{\delta})\right\rVert = o_P(1)$ by Lemma 6 of~\citet{levis2024}.
\end{proof}

\begin{proof}[Proof of Theorem~\ref{thm:est-cont-inc-wass}]
    This result follows identical reasoning to that in the proof of Theorem~\ref{thm:est-cont-inc}, specifically for the error decomposition of $\widehat{\tau}_{\delta}$. The only difference is that $\beta_{\delta}$ takes the place of $\gamma_{\delta}$ (i.e., substituting $A e^{\delta A}$ for $Y e^{\delta}$).
\end{proof}

\begin{proof}[Proof of Proposition~\ref{prop:von-Mises-bin}]
    The expansion for $\tau_{\delta}$ is simply Proposition~\ref{prop:von-Mises}, applied to the case where $A \in\{0,1\}$---alternatively, this is a restatement of Corollary 2 in~\citet{kennedy2019}.

    For $\chi_{\delta}$, observe that
    \begin{align*}
        & \chi_{\delta}(\overline{P}) - \chi_{\delta}(P) + \mathbb{E}_P(\dot{\chi}_{\delta}(O; \overline{P})) \\
        &= \mathbb{E}_P\left(\frac{\overline{\pi}(1 - \overline{\pi})}{e^{\delta} \overline{\pi} + (1 - \overline{\pi})} + \frac{\{(1 - \overline{\pi})^2 - e^{\delta}\overline{\pi}^2\}(\pi - \overline{\pi})}{\{e^{\delta}\overline{\pi} + (1 - \overline{\pi})\}^2} - \frac{\pi(1 - \pi)}{e^{\delta} \pi + (1 - \pi)}\right) \\
        &= \mathbb{E}_P\left(\frac{\overline{\pi}(1 - \overline{\pi})\{e^{\delta}\overline{\pi} + (1 - \overline{\pi})\} \{e^{\delta} \pi + (1 - \pi)\}}{\{e^{\delta}\overline{\pi} + (1 - \overline{\pi})\}^2 \{e^{\delta} \pi + (1 - \pi)\}}\right) \\
        & \quad \quad \quad \quad 
        + \mathbb{E}_P\left(\frac{\{(1 - \overline{\pi})^2 - e^{\delta}\overline{\pi}^2\}(\pi - \overline{\pi})\{e^{\delta} \pi + (1 - \pi)\}}{\{e^{\delta}\overline{\pi} + (1 - \overline{\pi})\}^2 \{e^{\delta} \pi + (1 - \pi)\}}\right) \\
        & \quad \quad \quad \quad
        - \mathbb{E}_P\left(\frac{\pi(1 - \pi)\{e^{\delta}\overline{\pi} + (1 - \overline{\pi})\}^2}{\{e^{\delta}\overline{\pi} + (1 - \overline{\pi})\}^2 \{e^{\delta} \pi + (1 - \pi)\}}\right) \\
        &= \mathbb{E}\left(\frac{e^{\delta}(\overline{\pi} - \pi)^2}{\{e^{\delta}\overline{\pi} + (1 - \overline{\pi})\}^2 \{e^{\delta} \pi + (1 - \pi)\}}\right),
    \end{align*}
    where the last equality comes from (tedious) expansion and simplification of the numerator.
\end{proof}

\begin{proof}[Proof of Theorem~\ref{thm:est-bin-inc}]
    The logic here is identical to that in the proof of Theorem~\ref{thm:est-cont-inc}. By Proposision~\ref{prop:von-Mises-bin}, the bias term in the error expansion for $\tau_{\delta}$ is
    \begin{align*}
        P\left(\{\widehat{\gamma}_{\delta} - \gamma_{\delta}\}\left\{1 - \frac{\alpha_{\delta}}{\widehat{\alpha}_{\delta}}\right\}\right)
        &= P\left(\left\{\frac{e^{\delta}\widehat{\pi} \widehat{\mu}_1 + (1 - \widehat{\pi})\widehat{\mu}_0}{\widehat{\alpha}_{\delta}} - \frac{e^{\delta}\pi \mu_1 + (1 - \pi)\mu_0}{\alpha_{\delta}}\right\}\left\{1 - \frac{\alpha_{\delta}}{\widehat{\alpha}_{\delta}}\right\}\right) \\
        &= O_P(\lVert \widehat{\pi} - \pi\rVert^2 + \lVert \widehat{\pi} - \pi\rVert\left\{\widehat{\mu}_0 - \mu_0\rVert + \lVert \widehat{\mu}_1 - \mu_1\right\}\rVert),
    \end{align*}
    by our boundedness assumptions and the fact that $\widehat{\alpha}_{\delta} - \alpha_{\delta} \equiv (e^{\delta} - 1)(\widehat{\pi} - \pi)$. That the bias term in the expansion for $\chi_{\delta}$ is $O_P(\lVert \widehat{\pi} - \pi\rVert^2)$ comes directly from the von Mises expansion in Proposition~\ref{prop:von-Mises-bin}. The empirical process term for $\tau_{\delta}$ is $o_P(n^{-1/2})$ by Lemma 2 in~\citet{kennedy2020b}, and the following calculation:
    \begin{align*}
        & \lVert \widehat{\Psi}_{\tau_{\delta}}  - \Psi_{\tau_{\delta}}\rVert \\
        &= \left\lVert \frac{e^{\delta} \widehat{\pi} \widehat{\phi}_1 + (1 - \widehat{\pi})\widehat{\phi}_0}{e^{\delta}\widehat{\pi} + (1 - \widehat{\pi})} +  \frac{e^{\delta}(\widehat{\mu}_1 - \widehat{\mu}_0)(A - \widehat{\pi})}{\{e^{\delta}\widehat{\pi} + (1 - \widehat{\pi})\}^2} - \frac{e^{\delta} \pi \phi_1 + (1 - \pi)\phi_0}{e^{\delta}\pi + (1 - \pi)} -  \frac{e^{\delta}(\mu_1 - \mu_0)(A - \pi)}{\{e^{\delta}\pi + (1 - \pi)\}^2}\right\rVert \\
        & \leq \left\lVert \frac{e^{\delta}(\widehat{\pi}\widehat{\phi}_1 - \pi \phi_1) + (1 - \widehat{\pi})\widehat{\phi}_0 - (1 - \pi)\phi_0}{\widehat{\alpha}_{\delta}}\right\rVert + \left\lVert (e^{\delta} \pi \phi_1 + (1 - \pi)\phi_0)\left(\frac{1}{\widehat{\alpha}_{\delta}} - \frac{1}{\alpha_{\delta}}\right) \right\rVert \\
        & \quad \quad + e^{\delta}\left\lVert \frac{(\widehat{\mu}_1 - \widehat{\mu}_0)(A - \widehat{\pi}) - (\mu_1 - \mu_0)(A - \pi)}{\widehat{\alpha}_{\delta}^2}\right\rVert + \left\lVert e^{\delta}(\mu_1 - \mu_0)(A - \pi)\left(\frac{1}{\widehat{\alpha}_{\delta}^2} - \frac{1}{\alpha_{\delta}^2}\right) \right\rVert \\
        & \lesssim \lVert \widehat{\pi}  - \pi\rVert + \lVert \widehat{\mu}_0  - \mu_0\rVert + \lVert \widehat{\mu}_1  - \mu_1\rVert \\
        &= o_P(1),
    \end{align*}
    by our consistency and boundedness assumptions. Similarly, the empirical process term in the error expansion for $\chi_{\delta}$ is also $o_P(n^{-1/2})$, as
    \begin{align*}
        & \lVert \widehat{\Psi}_{\chi_{\delta}}  - \Psi_{\chi_{\delta}}\rVert \\
        &= \left\lVert \frac{\widehat{\pi}(1 - \widehat{\pi})}{\widehat{\alpha}_{\delta}} + \frac{\{(1 - \widehat{\pi})^2 - e^{\delta}\widehat{\pi}^2\}(A - \widehat{\pi})}{\widehat{\alpha}_{\delta}^2} - \frac{\pi(1 - \pi)}{\alpha_{\delta}} + \frac{\{(1 - \pi)^2 - e^{\delta}\pi^2\}(A - \pi)}{\alpha_{\delta}^2}\right\rVert \\
        & \leq \left\lVert \frac{\widehat{\pi}(1 - \widehat{\pi}) - \pi(1 - \pi)}{\widehat{\alpha}_{\delta}} \right\rVert + \left\lVert \pi(1 - \pi)\left(\frac{1}{\widehat{\alpha}_{\delta}} - \frac{1}{\alpha_{\delta}}\right)\right\rVert\\
        & \quad \quad + \left\lVert \frac{\{(1 - \widehat{\pi})^2 - e^{\delta}\widehat{\pi}^2\}(A - \widehat{\pi}) - \{(1 - \pi)^2 - e^{\delta}\pi^2\}(A - \pi)}{\widehat{\alpha}_{\delta}^2}\right\rVert + \\
        & \quad \quad + \left\lVert \{(1 - \pi)^2 - e^{\delta}\pi^2\}(A - \pi)\left(\frac{1}{\widehat{\alpha}_{\delta}^2} - \frac{1}{\alpha_{\delta}^2}\right)\right\rVert\\
        & \lesssim \lVert \widehat{\pi} - \pi\rVert \\
        &= o_P(1),
    \end{align*}
    concluding the proof.
\end{proof}

Before proving Theorem~\ref{thm:est-bin-inc-prop}, we write out in detail the estimators for all four functionals $\zeta_{0}^-, \zeta_0^+, \zeta_1^-, \zeta_1^+$. In general, defining the functions $\kappa_{a}^{-} = \mathbb{E}((\gamma_a^{-}(X) - Y)_+ \mid X, A = a)$ and $\kappa_{a}^{+} = \mathbb{E}((Y - \gamma_a^{+}(X))_+ \mid X, A = a)$, our proposed estimators are $\widehat{\zeta}_a^{\pm} = \mathbb{P}_n[\widehat{\varphi}_{\delta, a}^{\pm}]$, where

\begin{align*}
    \varphi_{\delta,0}^- &= \frac{\pi}{e^{\delta}\pi + (1 - \pi)}\bigg\{\left(1 - \frac{1}{\Gamma}\right)\left\{(1 -\pi) \mu_0 + (1 - A) (Y - \mu_0)\right\} \\
    & \quad \quad \quad \quad \quad + \left(\Gamma - \frac{1}{\Gamma}\right)\left\{(1 - \pi)\left(\kappa_0^- - \frac{1}{1 + \Gamma} \gamma_0^-\right) - (1 - A)\left[(\gamma_0^- - Y)_+ - \kappa_0^-\right]\right\}\bigg\} \\
    & \quad \quad + \frac{(1 - \pi)^2 - e^{\delta}\pi^2}{\{e^{\delta}\pi + (1 - \pi)\}^2}(A - \pi)\left\{\left(1 - \frac{1}{\Gamma}\right)(\mu_0 - \gamma_0^-) + \left(\Gamma - \frac{1}{\Gamma}\right)\kappa_0^-\right\},
\end{align*}

\begin{align*}
    \varphi_{\delta,0}^+ &= \frac{\pi}{e^{\delta}\pi + (1 - \pi)}\bigg\{ -\left(1 - \frac{1}{\Gamma}\right)\left\{(1 - \pi) \mu_0 + (1 - A) (Y - \mu_0)\right\} \\
    & \quad \quad \quad \quad \quad + \left(\Gamma - \frac{1}{\Gamma}\right)\left\{(1 - \pi)\left(\frac{1}{1 + \Gamma} \gamma_0^+ + \kappa_0^+\right) + (1 - A)\left[(Y - \gamma_0^+)_+ - \kappa_0^+\right]\right\}\bigg\} \\
    & \quad \quad + \frac{(1 - \pi)^2 - e^{\delta}\pi^2}{\{e^{\delta}\pi + (1 - \pi)\}^2}(A - \pi)\left\{\left(1 - \frac{1}{\Gamma}\right)(\gamma_0^+ - \mu_0) + \left(\Gamma - \frac{1}{\Gamma}\right)\kappa_0^+\right\},
\end{align*}

\begin{align*}
    \varphi_{\delta,1}^- &= \frac{1 - \pi}{e^{\delta}\pi + (1 - \pi)}\bigg\{\left(1 - \frac{1}{\Gamma}\right)\left\{\pi \mu_1 + A (Y - \mu_1)\right\} \\
    & \quad \quad \quad \quad \quad + \left(\Gamma - \frac{1}{\Gamma}\right)\left\{\pi\left(\kappa_1^- - \frac{1}{1 + \Gamma} \gamma_1^-\right) - A\left[(\gamma_1^- - Y)_+ - \kappa_1^-\right]\right\}\bigg\} \\
    & \quad \quad + \frac{(1 - \pi)^2 - e^{\delta}\pi^2}{\{e^{\delta}\pi + (1 - \pi)\}^2}(A - \pi)\left\{\left(1 - \frac{1}{\Gamma}\right)(\mu_1 - \gamma_1^-) + \left(\Gamma - \frac{1}{\Gamma}\right)\kappa_1^-\right\},
\end{align*}

\begin{align*}
    \varphi_{\delta,1}^+ &= \frac{1 - \pi}{e^{\delta}\pi + (1 - \pi)}\bigg\{ -\left(1 - \frac{1}{\Gamma}\right)\left\{\pi \mu_1 + A (Y - \mu_1)\right\} \\
    & \quad \quad \quad \quad \quad + \left(\Gamma - \frac{1}{\Gamma}\right)\left\{\pi\left(\frac{1}{1 + \Gamma} \gamma_1^+ + \kappa_1^+\right) + A\left[(Y - \gamma_1^+)_+ - \kappa_1^+\right]\right\}\bigg\} \\
    & \quad \quad + \frac{(1 - \pi)^2 - e^{\delta}\pi^2}{\{e^{\delta}\pi + (1 - \pi)\}^2}(A - \pi)\left\{\left(1 - \frac{1}{\Gamma}\right)(\gamma_1^+ - \mu_1) + \left(\Gamma - \frac{1}{\Gamma}\right)\kappa_1^+\right\}.
\end{align*}

\noindent The asymptotic properties of each of these estimators are argued in exactly the same way. Therefore, we only prove a result for $\zeta_{\delta,1}^+$, as stated in Theorem~\ref{thm:est-bin-inc-prop}.

\begin{proof}[Proof of Theorem~\ref{thm:est-bin-inc-prop}]
As for the other results, we use the expansion
    \begin{align*}
        \widehat{\zeta}_{\delta,1}^+ - \zeta_{\delta,1}^+ &= \mathbb{P}_n[\widehat{\varphi}_{\delta,1}^+] - P(\varphi_{\delta,1}^+) \\
        &= (\mathbb{P}_n - P)\varphi_{\delta,1}^+ + P(\widehat{\varphi}_{\delta,1}^+ - \varphi_{\delta,1}^+) + (\mathbb{P}_n - P)(\widehat{\varphi}_{\delta,1}^+ - \varphi_{\delta,1}^+).
    \end{align*}
    Beginning with the second (asymptotic bias) term,
\begin{align*}
    &P(\widehat{\varphi}_{\delta,1}^+ - \varphi_{\delta,1}^+) \\
    &= P\bigg( \frac{1 - \widehat{\pi}}{e^{\delta}\widehat{\pi} + (1 - \widehat{\pi})}\bigg\{ -\left(1 - \frac{1}{\Gamma}\right)\left\{\widehat{\pi} \widehat{\mu}_1 + A (Y - \widehat{\mu}_1)\right\} \\
    & \quad \quad \quad \quad \quad + \left(\Gamma - \frac{1}{\Gamma}\right)\left\{\widehat{\pi}\left(\frac{1}{1 + \Gamma} \widehat{\gamma}_1^+ + \widehat{\kappa}_1^+\right) + A\left[(Y - \widehat{\gamma}_1^+)_+ - \widehat{\kappa}_1^+\right]\right\}\bigg\} \\
    & \quad \quad + \frac{(1 - \widehat{\pi})^2 - e^{\delta}\widehat{\pi}^2}{\{e^{\delta}\widehat{\pi} + (1 - \widehat{\pi})\}^2}(A - \widehat{\pi})\left\{\left(1 - \frac{1}{\Gamma}\right)(\widehat{\gamma}_1^+ - \widehat{\mu}_1) + \left(\Gamma - \frac{1}{\Gamma}\right)\widehat{\kappa}_1^+\right\}
    \\
    & \quad \quad \quad \quad - \left\{\left(1 - \frac{1}{\Gamma}\right)(\gamma_1^+ - \mu_1) + \left(\Gamma - \frac{1}{\Gamma}\right)\kappa_{1}^+\right\}\frac{\pi (1 - \pi)}{e^{\delta}\pi + (1 - \pi)}
    \bigg).
\end{align*}
By iterated expectations, and rearranging, this equals
\begin{align*}
     & P\bigg( \frac{1 - \widehat{\pi}}{e^{\delta}\widehat{\pi} + (1 - \widehat{\pi})}\bigg\{ -\left(1 - \frac{1}{\Gamma}\right)\left\{\widehat{\pi} \widehat{\mu}_1 + \pi (\mu_1 - \widehat{\mu}_1)\right\} \\
    & \quad \quad \quad \quad + \left(\Gamma - \frac{1}{\Gamma}\right)\left\{\widehat{\pi}\left(\frac{1}{1 + \Gamma} \widehat{\gamma}_1^+ + \widehat{\kappa}_1^+\right) + A\left[(Y - \widehat{\gamma}_1^+)_+ - (Y - \gamma_1^+)_+\right] + \pi\left[\kappa_1^+ - \widehat{\kappa}_1^+\right]\right\}\bigg\} \\
    & \quad \quad + \frac{(1 - \widehat{\pi})^2 - e^{\delta}\widehat{\pi}^2}{\{e^{\delta}\widehat{\pi} + (1 - \widehat{\pi})\}^2}(\pi - \widehat{\pi})\left\{\left(1 - \frac{1}{\Gamma}\right)(\widehat{\gamma}_1^+ - \widehat{\mu}_1) + \left(\Gamma - \frac{1}{\Gamma}\right)\widehat{\kappa}_1^+\right\}
    \\
    & \quad \quad - \frac{\pi (1 - \pi)}{e^{\delta}\pi + (1 - \pi)}\left\{\left(1 - \frac{1}{\Gamma}\right)(\gamma_1^+ - \mu_1) + \left(\Gamma - \frac{1}{\Gamma}\right)\kappa_{1}^+\right\}
    \bigg) \\
    &= P\bigg(\frac{1 - \widehat{\pi}}{e^{\delta}\widehat{\pi} + (1 - \widehat{\pi})}\bigg\{ -\left(1 - \frac{1}{\Gamma}\right)(\widehat{\pi} - \pi) (\mu_1 - \widehat{\mu}_1) \\
    & \quad \quad \quad \quad + \left(\Gamma - \frac{1}{\Gamma}\right)(\widehat{\pi} - \pi)\left(\frac{1}{1 + \Gamma} (\widehat{\gamma}_1^+ -\gamma_1^+) + (\widehat{\kappa}_1^+ - \kappa_1^+)\right) \\
    & \quad \quad \quad \quad
    + \left(\Gamma - \frac{1}{\Gamma}\right) \pi \left(\frac{1}{1 + \Gamma}\left(\widehat{\gamma}_1^+ - \gamma_1^+\right) + P\left((Y - \widehat{\gamma}_1^+)_+ - (Y > \gamma_1^+)_+ \mid X, A = 1\right)\right)\bigg\} \\
    & \quad \quad + \frac{(1 - \widehat{\pi})^2 - e^{\delta}\widehat{\pi}^2}{\{e^{\delta}\widehat{\pi} + (1 - \widehat{\pi})\}^2}(\pi - \widehat{\pi})\left\{\left(1 - \frac{1}{\Gamma}\right)((\widehat{\gamma}_1^+-\gamma_1^+) - (\widehat{\mu}_1 - \mu_1)) + \left(\Gamma - \frac{1}{\Gamma}\right)(\widehat{\kappa}_1^+ - \kappa_1^+)\right\}
    \\
    & \quad \quad +\left( \frac{1 - \widehat{\pi}}{e^{\delta}\widehat{\pi} + (1 - \widehat{\pi})} + \frac{(1 - \widehat{\pi})^2 - e^{\delta}\widehat{\pi}^2}{\{e^{\delta}\widehat{\pi} + (1 - \widehat{\pi})\}^2}(\pi - \widehat{\pi}) - \frac{\pi (1 - \pi)}{e^{\delta}\pi + (1 - \pi)}\right) \\
    &\quad \quad \quad \quad \times \left\{\left(1 - \frac{1}{\Gamma}\right)(\gamma_1^+ - \mu_1) + \left(\Gamma - \frac{1}{\Gamma}\right)\kappa_{1}^+\right\}
    \bigg).
\end{align*}
By Lemma~\ref{lemma:cvar} below, and the conditional second-order remainder calculation for $\chi_{\delta}$ (as in the proof of Proposition~\ref{prop:von-Mises-bin}), this bias term is $O_P(R_{\delta,1}^+)$. Finally, it is straightforward to verify that the empirical process term $(\mathbb{P}_n - P)(\widehat{\varphi}_{\delta,1}^+ - \varphi_{\delta,1}^+)$ is $o_P(n^{-1/2})$, by Lemma 2 in~\citet{kennedy2020b}, as $\lVert \widehat{\varphi}_{\delta,1}^+ - \varphi_{\delta, 1}^+\rVert = o_P(1)$ under our assumptions.
\end{proof}

\begin{lemma}\label{lemma:cvar}
    Under Assumption~\ref{ass:margin-2},
    \[\frac{1}{1 + \Gamma}\left(\widehat{\gamma}_1^+ - \gamma_1^+\right) + P\left((Y - \widehat{\gamma}_1^+)_+ - (Y > \gamma_1^+)_+ \mid X, A = 1\right) \lesssim |\widehat{\gamma}_1^+ - \gamma_1^+|^{1 + b}.\]
\end{lemma}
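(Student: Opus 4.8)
The plan is to read the left-hand side as the ``self-bias'' of the estimating function for an expected-shortfall-type functional, and to exploit the fact that $\gamma_1^+$ is a \emph{conditional quantile} to gain the extra power $|\widehat\gamma_1^+ - \gamma_1^+|^{b}$. Throughout I would condition on $(X, A = 1)$ and suppress it: write $F$ for the conditional CDF $y \mapsto P[Y \le y \mid X, A = 1]$ and $G(c) \coloneqq \mathbb{E}((Y - c)_+ \mid X, A = 1)$, which is finite and convex since $Y$ is bounded. By Assumption~\ref{ass:margin-2}, letting $t \downarrow 0$ in the margin bound gives $P[Y = \gamma_1^+] = 0$, and since $F$ is strictly increasing at $\gamma_1^+ = F^{-1}(\tfrac{\Gamma}{1+\Gamma})$ we have $F(\gamma_1^+) = \tfrac{\Gamma}{1+\Gamma}$, hence $1 - F(\gamma_1^+) = \tfrac{1}{1+\Gamma}$. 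With this notation the quantity to be bounded is exactly $H(\widehat\gamma_1^+)$, where
\[H(c) \coloneqq \tfrac{1}{1+\Gamma}(c - \gamma_1^+) + G(c) - G(\gamma_1^+) = (1 - F(\gamma_1^+))(c - \gamma_1^+) + G(c) - G(\gamma_1^+).\]

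Next I would record two elementary facts about $H$. First, $H$ is convex with $H(\gamma_1^+) = 0$: since $G$ is convex with a.e.\ derivative $G'(c) = -(1 - F(c))$, we get $H'(c) = F(c) - F(\gamma_1^+)$, which is nondecreasing and vanishes at $\gamma_1^+$; hence $H \ge 0$ everywhere, so the left-hand side of the lemma is automatically nonnegative. Second, and this supplies the rate, for $c \ge \gamma_1^+$ one has the identity $G(\gamma_1^+) - G(c) = \int_{\gamma_1^+}^c (y - \gamma_1^+)\,dF(y) + (c - \gamma_1^+)(1 - F(c))$, so that
\[0 \;\le\; H(c) \;=\; (c - \gamma_1^+)\big(F(c) - F(\gamma_1^+)\big) - \int_{\gamma_1^+}^c (y - \gamma_1^+)\,dF(y) \;\le\; (c - \gamma_1^+)\big(F(c) - F(\gamma_1^+)\big),\]
with the symmetric statement for $c < \gamma_1^+$; equivalently $0 \le H(c) \le |c - \gamma_1^+|\cdot\sup\{|F(u) - F(\gamma_1^+)| : u \text{ between } \gamma_1^+ \text{ and } c\}$. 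Finally, for such $u$, $|F(u) - F(\gamma_1^+)| \le P[|Y - \gamma_1^+| \le |c - \gamma_1^+| \mid X, A = 1] \le C|c - \gamma_1^+|^b$ by Assumption~\ref{ass:margin-2}, with $C, b$ independent of $X$; taking $c = \widehat\gamma_1^+$ yields $H(\widehat\gamma_1^+) \le C|\widehat\gamma_1^+ - \gamma_1^+|^{1+b}$, which is the claim.

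I do not expect a genuine obstacle here — the content is simply that a quantile-plus-truncated-mean estimating function is first-order insensitive to the quantile nuisance. The only points needing care are measure-theoretic: confirming there is no atom of $Y$ at $\gamma_1^+$ (so $F(\gamma_1^+) = \Gamma/(1+\Gamma)$ holds with equality and left/right-continuity at the endpoints of integration is immaterial), and checking that the one-sided margin bound $P[|Y - \gamma_1^+|\le t] \le Ct^b$ controls $|F(u) - F(\gamma_1^+)|$ on \emph{both} sides of $\gamma_1^+$; both follow directly from Assumption~\ref{ass:margin-2}. I would also flag the uniformity of $(b, C)$ in $X$, since the bound is subsequently integrated over $X$ in the proof of Theorem~\ref{thm:est-bin-inc-prop}.
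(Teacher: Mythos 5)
Your proof is correct, and it reaches the same two essential facts as the paper's argument---that the margin condition forces $P[Y=\gamma_1^+\mid X,A=1]=0$ so that $P[Y>\gamma_1^+\mid X,A=1]=\tfrac{1}{1+\Gamma}$ cancels the linear term exactly, and that the leftover is of size $|\widehat{\gamma}_1^+-\gamma_1^+|$ times the conditional mass within $|\widehat{\gamma}_1^+-\gamma_1^+|$ of $\gamma_1^+$, which the margin condition makes $O(|\widehat{\gamma}_1^+-\gamma_1^+|^{1+b})$---but it gets there by a somewhat different route. The paper works pointwise with the integrand: it writes $(Y-c)_+=(Y-c)\mathds{1}(Y>c)$, isolates the term $\mathbb{E}\big((Y-\widehat{\gamma}_1^+)\{\mathds{1}(Y>\widehat{\gamma}_1^+)-\mathds{1}(Y>\gamma_1^+)\}\mid X,A=1\big)$ after the exact cancellation, and bounds its absolute value by invoking an external indicator-difference lemma (Lemma 4 of Levis et al.\ / Lemma 1 of Kennedy et al.), noting the indicators differ only when $Y$ lies within $|\widehat{\gamma}_1^+-\gamma_1^+|$ of both points. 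You instead work with the partial-mean function $G(c)=\mathbb{E}((Y-c)_+\mid X,A=1)$, use its convexity and an exact identity for $G(\gamma_1^+)-G(c)$ to show the quantity equals a nonnegative ``excess'' $H(\widehat{\gamma}_1^+)$ bounded by $|\widehat{\gamma}_1^+-\gamma_1^+|\cdot|F(\widehat{\gamma}_1^+)-F(\gamma_1^+)|$. Your version is more self-contained (no appeal to an outside lemma), makes the CVaR/Neyman-orthogonality interpretation explicit, and yields the sign of the left-hand side as a byproduct, which the paper's absolute-value bound does not record; the paper's version is shorter and reuses machinery already cited elsewhere in the appendix. Your two flagged care points are exactly right and consistent with the paper: the no-atom conclusion (hence $F(\gamma_1^+)=\tfrac{\Gamma}{1+\Gamma}$, which the paper uses implicitly in the step $P[Y>\gamma_1^+\mid X,A=1]=\tfrac{1}{1+\Gamma}$), and the uniformity of $(b,C)$ in $X$, which Assumption~\ref{ass:margin-2} asserts and which is needed when the bound is integrated over $X$ in Theorem~\ref{thm:est-bin-inc-prop}. (Incidentally, the $(Y>\gamma_1^+)_+$ in the lemma statement is a typo for $(Y-\gamma_1^+)_+$; you interpreted it as the paper intends.)
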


\begin{proof}
    Observe that
    \begin{align*}
        & P\left((Y - \widehat{\gamma}_1^+)_+ - (Y > \gamma_1^+)_+ \mid X, A = 1\right) \\
        &= P((Y - \widehat{\gamma}_1^+)\mathds{1}(Y > \widehat{\gamma}_1^+) - (Y - \gamma_1^+)\mathds{1}(Y > \gamma_1^+)\mid X, A = 1) \\
        &= P((Y - \widehat{\gamma}_1^+)\{\mathds{1}(Y > \widehat{\gamma}_1^+) - \mathds{1}(Y > \gamma_1^+)\} \mid X, A = 1) - (\widehat{\gamma}_1^+ - \gamma_1^+)P[Y > \gamma_1^+ \mid X, A = 1] \\
        &= P((Y - \widehat{\gamma}_1^+)\{\mathds{1}(Y > \widehat{\gamma}_1^+) - \mathds{1}(Y > \gamma_1^+)\} \mid X, A = 1) - \frac{1}{1 + \Gamma}(\widehat{\gamma}_1^+ - \gamma_1^+),
    \end{align*}
    and moreover that
    \begin{align*}
        &\left|P((Y - \widehat{\gamma}_1^+)\{\mathds{1}(Y > \widehat{\gamma}_1^+) - \mathds{1}(Y > \gamma_1^+)\} \mid X, A = 1)\right| \\
        &\leq P\big(\mathds{1}(\max{\{|Y - \gamma_1^+|,|Y - \widehat{\gamma}_1^+|\}} \leq |\widehat{\gamma}_1^+ - \gamma_1^+|) \cdot |Y - \widehat{\gamma}_1^+| \, \big| \, X, A = 1\big) \\
        & \leq |\widehat{\gamma}_1^+ - \gamma_1^+| P[|Y - \gamma_1^+| \leq |\widehat{\gamma}_1^+ - \gamma_1^+| \mid X, A = 1] \leq C \cdot |\widehat{\gamma}_1^+ - \gamma_1^+|^{1 + b},
    \end{align*}
    where the first inequality follows from Lemma 4 in~\citet{levis2024} or Lemma 1 in~\citet{kennedy2020b}, and the third inequality by Assumption~\ref{ass:margin-2}.
\end{proof}

\end{appendices}

\end{document}